\providecommand{\algorithmname}{Algorithm}
\theoremstyle{plain}
\newtheorem{thm}{\protect\theoremname}[section]
\theoremstyle{plain}
\newtheorem{prop}{\protect\propositionname}[section]
\theoremstyle{plain}
\newtheorem{lem}{\protect\lemmaname}[section]
\theoremstyle{remark}
\newtheorem{claim}{\protect\claimname}[section]
\theoremstyle{plain}
\newtheorem{fact}{\protect\factname}[section]
\date{}
\providecommand{\claimname}{Claim}
\providecommand{\factname}{Fact}
\providecommand{\lemmaname}{Lemma}
\providecommand{\propositionname}{Proposition}
\providecommand{\theoremname}{Theorem}
\begin{document}
\global\long\def\PP{\mathsf{PrivatePerceptron}}%
\global\long\def\PPE{\mathsf{PrivatePerceptronEpoch}}%
\global\long\def\R{\mathbb{R}}%
\global\long\def\Lap{\mathrm{Lap}}%
\global\long\def\avg{\mathrm{avg}}%
\global\long\def\E{\mathbb{E}}%
\global\long\def\F{{\cal F}}%
\global\long\def\P{{\cal P}}%
\global\long\def\B{{\cal B}}%
\global\long\def\one{\mathds{1}}%
\global\long\def\vol{\mathrm{vol}}%
\global\long\def\diag{\mathrm{diag}}%
\allowdisplaybreaks
\title{Solving Linear Programs with Differential Privacy}
\author{Alina Ene\thanks{Department of Computer Science, Boston University, \texttt{aene@bu.edu}.}\and 
Huy L. Nguyen\thanks{Khoury College of Computer Sciences, Northeastern University, \texttt{hu.nguyen@northeastern.edu}.}\and 
Ta Duy Nguyen\thanks{Department of Computer Science, Boston University, \texttt{taduy@bu.edu}.}\and 
Adrian Vladu\thanks{CNRS \& IRIF, Universit\'e Paris Cit\'e, \texttt{vladu@irif.fr}.}}
\maketitle
\begin{abstract}
We study the problem of solving linear programs of the form $Ax\le b$,
$x\ge0$ with differential privacy. For homogeneous LPs $Ax\ge0$,
we give an efficient $(\epsilon,\delta)$-differentially private algorithm
which with probability at least $1-\beta$ finds in polynomial time
a solution that satisfies all but $O(\frac{d^{2}}{\epsilon}\log^{2}\frac{d}{\delta\beta}\sqrt{\log\frac{1}{\rho_{0}}})$
constraints, for problems with margin $\rho_{0}>0$. This improves
the bound of $O(\frac{d^{5}}{\epsilon}\log^{1.5}\frac{1}{\rho_{0}}\mathrm{poly}\log(d,\frac{1}{\delta},\frac{1}{\beta}))$
by {[}Kaplan-Mansour-Moran-Stemmer-Tur, STOC '25{]}. For general LPs
$Ax\le b$, $x\ge0$ with potentially zero margin, we give an efficient
$(\epsilon,\delta)$-differentially private algorithm that w.h.p drops
$O(\frac{d^{4}}{\epsilon}\log^{2.5}\frac{d}{\delta}\sqrt{\log dU})$
constraints, where $U$ is an upper bound for the entries of $A$
and $b$ in absolute value. This improves the result by Kaplan et
al. by at least a factor of $d^{5}$. Our techniques build upon privatizing
a rescaling perceptron algorithm by {[}Hoberg-Rothvoss, IPCO '17{]}
and a more refined iterative procedure for identifying equality constraints
by Kaplan et al. 

\newpage{}
\end{abstract}

\section{Introduction}

Linear programming is a fundamental tool for modeling and solving
problems in computer science. Consider the standard feasibility problem
of finding $x\in\R^{d}$ subject to $Ax\le b,x\geq0$, where the constraints
$Ax\leq b$ are input from users. The input can contain sensitive
information such as the users' private health data or private transactions,
which the users may wish to be protected. For the algorithm designer,
this is where differential privacy proves its usefulness. Differential
privacy protects sensitive information by requiring that the algorithm
must have approximately the same output upon receiving similar input.

The study of solving linear programs with differential privacy was
initiated by \cite{hsu2014privately} and has subsequently been studied
under different contexts along with other related problems. Notably,
privately solving linear programs is closely related to private learning
of subspaces and halfspaces, which are fundamental problems in learning
theory. In particular, a line of work by \cite{DBLP:conf/focs/BunNSV15,DBLP:conf/colt/BeimelMNS19,kaplan2020private,DBLP:conf/citc/GaoS21,DBLP:conf/nips/Ben-EliezerMZ22}
showed reductions from learning halfspaces  to the problem of finding
a point in a convex hull, which in turn can be solved via linear programing.
This means that the existence of an efficient private solver for linear
programs implies an upper bound for the sample complexity of efficient
private learners for halfspaces, and any improvement for the former
problem implies an improvement for the latter. 

Imposing differential privacy when solving a linear program comes
with the impossibility to ensure \emph{all} constraints are satisfied.
Indeed, in the extreme case, the addition of one more constraint can
change the problem from feasible to infeasible. Therefore, to guarantee
differential privacy, it is required that a number of constraints
must be dropped. It was known (by folklore) that for a linear program
whose feasible region is of positive volume, privatizing the algorithm
by \cite{dunagan2004polynomial} results in a solution that violates
$\mathrm{poly}(d)$ constraints, where $d$ is the dimension of the
problem. The recent work by \cite{kaplan2024differentially} formalized
this claim and generalized it to all linear programs. Their algorithm,
however, suffered from a high degree polynomial dependence on $d$
(at least $d^{9}$ dependence), whereas the lower bound (from learning
theory) is linear in $d$. Closing this gap remains a challenging
open question. 

In our work, we make progress in this direction, and propose new algorithms
for solving linear programs with differential privacy. Our algorithms
are efficient and they achieve significantly improved guarantees on
the number of violated constraints.

\subsection{Our contribution\protect\label{subsec:Our-contribution}}

Our first contribution is a new algorithm for privately solving linear
programs with positive margin with guarantee stated in Theorem \ref{thm:perceptron-guarantee}.
Here the margin of the problem is the radius of the largest ball that
fits in the intersection of the feasible region and the unit ball,
which we define in Section \ref{sec:Preliminary}.
\begin{thm}
\label{thm:perceptron-guarantee}Let $\epsilon,\delta,\beta\geq0$
be given an input. There exists an efficient $(\epsilon,\delta)$-differentially
private algorithm for finding a feasible solution to the linear program
$Ax\ge0$, $x\neq0$ such that whenever the margin of the system is
at least $\rho_{0}$, with probability at least $1-\beta$, the algorithm
outputs a solution $x$ that satisfies all but $O\left(\frac{d^{2}}{\epsilon}\sqrt{\log\frac{1}{\rho_{0}}}\log^{2}\frac{d}{\beta\delta}\right)$
constraints.
\end{thm}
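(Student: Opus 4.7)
The plan is to privatize the rescaling perceptron of Hoberg--Rothvoss (HR). Their algorithm alternates a perceptron phase that pushes a candidate direction $x$ toward feasibility by accumulating contributions from violated constraint vectors, and a rescaling phase that applies a rank-one transformation amplifying the effective margin along a stalled direction. Non-privately, $T=O(\log(1/\rho_{0}))$ outer rescalings suffice to boost the margin to $\Omega(1)$, at which point a bounded number of clean-up perceptron updates yields a feasible $x$. My goal is to replace each noise-sensitive primitive by a differentially private surrogate while preserving this two-level convergence structure.

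I would first design a subroutine $\PPE$ that privately implements one perceptron update. The non-private update is proportional to $\sum_{i:\,a_{i}^{\top}x<0}a_{i}$, the sum of currently violated constraint vectors. To privatize it, I would release this aggregate with coordinate-wise Laplace noise calibrated to the per-record $\ell_{\infty}$-sensitivity and a per-call budget $(\epsilon_{0},\delta_{0})$. Declare a constraint ``lost'' in the epoch if the noise causes its contribution to be handled inconsistently with its true sign relative to $x$. By Laplace tail concentration and a union bound over the $n$ constraints, each epoch loses at most $\tilde{O}(d\log(d/(\beta\delta))/\epsilon_{0})$ constraints with high probability, and a lost constraint is the only kind that can end up violated by the final iterate.

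I would then wrap the HR outer loop around $\PPE$. Allocating the total $(\epsilon,\delta)$-DP budget via advanced composition over the $T$ calls sets $\epsilon_{0}=\Theta(\epsilon/\sqrt{T\log(1/\delta)})$. Because the averaged noisy perceptron iterate differs from the ideal one only by the accumulated noise, the constraints violated by the final $x$ are exactly those falling inside a small noise margin around the ideal separating hyperplane. A careful accounting of how many constraints can fall within this margin across all $T=O(\log(1/\rho_{0}))$ rescalings --- using that the HR potential shrinks geometrically per rescaling so the noise margin need only be controlled on one scale at a time --- yields the claimed total of $O\!\left(d^{2}\sqrt{\log(1/\rho_{0})}\,\log^{2}(d/(\beta\delta))/\epsilon\right)$, where the $\sqrt{\log(1/\rho_{0})}$ comes from the $\sqrt{T}$ scaling of advanced composition and the $\log^{2}$ absorbs the per-call Laplace tail, the union bound over the $T$ epochs with failure budget $\beta/T$, and the $\log(1/\delta)$ from composition.

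The main obstacle is privatizing the rescaling direction itself. In HR this direction is extracted from the hardest-to-satisfy constraint of the epoch, an object of $\ell_{\infty}$-sensitivity one that cannot be released directly without blowing up the noise. I plan to sidestep this by using the aggregate perceptron iterate at the end of the epoch --- already an $(\epsilon_{0},\delta_{0})$-DP quantity produced by $\PPE$ --- as a proxy rescaling direction, and by marking as permanently dropped any constraint whose inner product with this direction falls within a small noise margin of zero. The most delicate technical step will be verifying that this perturbed rescaling still shrinks the HR volume potential by a constant factor, and hence that $O(\log(1/\rho_{0}))$ outer rounds continue to suffice; I expect this to follow from a noise-robust variant of the HR volume argument, where the constraints we mark as dropped are precisely those that would otherwise spoil the potential decrease.
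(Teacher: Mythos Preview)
Your proposal has the right high-level shape (privatize Hoberg--Rothvoss, use advanced composition over the epochs), but two concrete pieces would fail as written.

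First, the outer iteration count is off by a factor of $d$. The HR volume argument starts from $\mathrm{vol}(\mathcal{P}\cap\mathcal{B})=\Omega(\rho_{0}^{d})\,\mathrm{vol}(\mathcal{B})$ and each rescaling multiplies the volume by a fixed constant, so $\tau=O(d\log(1/\rho_{0}))$ rescalings are needed, not $O(\log(1/\rho_{0}))$. Combined with $\Theta(d^{2})$ perceptron steps per epoch, the total number of private mechanism calls is $\Theta(d^{3}\log(1/\rho_{0}))$; the eventual $d^{2}\sqrt{\log(1/\rho_{0})}$ in the theorem arises as $\sqrt{d^{3}\log(1/\rho_{0})}$ from advanced composition times a residual $\sqrt{d}$ in the violation threshold $\nu$. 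Your accounting with $T=O(\log(1/\rho_{0}))$ cannot reproduce this.

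Second, and more seriously, your proposed rescaling direction does not satisfy the hypothesis needed for the volume lemma. The aggregate perceptron iterate at the end of a stalled epoch is essentially $\overline{\lambda}\,\overline{A}=\sum_{i}\overline{\lambda}_{i}\overline{a}_{i}$, and the perceptron analysis shows precisely that this vector has \emph{small} norm, $\|\overline{\lambda}\,\overline{A}\|=O(1/\sqrt{T})=O(1/d)$; it carries no useful directional signal, and in general $\max_{x\in\mathcal{P}\cap\mathcal{B}}\langle c,x\rangle/\|c\|$ can be $\Theta(1)$ for $c=\overline{\lambda}\,\overline{A}$. The actual HR device, which the paper privatizes, is different: draw a Gaussian $g$, set $P=\{i:\langle\overline{a}_{i},g\rangle\ge 0\}$, and take $c=\sum_{i\in P}\overline{\lambda}_{i}\overline{a}_{i}$. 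With constant probability $\|c\|\ge\Omega(1/\sqrt{d})$, while for $x\in\mathcal{P}$ one has $\langle c,x\rangle\le\langle\overline{\lambda}\,\overline{A},x\rangle\le\|\overline{\lambda}\,\overline{A}\|=O(1/d)$, yielding the required $O(1/\sqrt{d})$ ratio. The paper then adds Gaussian noise to this $c$ and bounds its $\ell_{2}$-sensitivity by $2/m$ with $m=\min_{t}m^{(t)}\ge\nu$; the key economy is that only $O(\log(\tau/\beta))$ such noisy half-sums are released per epoch, so this phase costs almost nothing in the privacy budget. Finally, the paper does \emph{not} drop constraints along the way: all rows participate in every epoch, and the algorithm only commits to the $O(\nu)$ still-violated constraints at the single moment it outputs a solution. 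Your scheme of marking constraints as permanently dropped per epoch would accumulate losses across $\tau$ epochs and is not needed.
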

In terms of the dependence on the dimension $d$, our algorithm significantly
improves over the prior work by \cite{kaplan2024differentially},
which drops $O\left(d^{5}\log^{1.5}\frac{1}{\rho_{0}}\mathrm{poly}\log\left(d,\frac{1}{\delta},\frac{1}{\beta}\right)\right)$
constraints. 

For general linear programs with potentially zero margin, we give
an iterative private algorithm with the following guarantee.

\begin{thm}
\label{thm:general-guarantee}Let $\epsilon,\delta,\beta\geq0$ be
given an input. There exists an efficient $(\epsilon,\delta)$-differentially
private algorithm for finding a feasible solution to the linear program
$Ax\le b$; $x\ge0$ with integer entries such that with probability
at least $1-(\beta+\delta)$, the algorithm outputs a solution $x$
that satisfies all but $O\left(\frac{d^{4}}{\epsilon}\sqrt{\log dU}\log^{2.5}\frac{d}{\beta\delta}\right)$
constraints, where $U$ is an upper bound on the absolute values of
the entries in $A$ and $b$.
\end{thm}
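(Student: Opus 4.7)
The plan is to reduce general linear programs of the form $Ax \le b$, $x \ge 0$ (with integer entries bounded by $U$) to the positive-margin regime handled by Theorem \ref{thm:perceptron-guarantee}, via an iterative dimension-reduction procedure that privately peels off implicit equality constraints. First I would homogenize by introducing an auxiliary variable $x_0$ and converting the LP into the form $\tilde A \tilde x \ge 0$, $\tilde x \neq 0$, in $d+1$ variables, where the block structure of $\tilde A$ encodes $bx_0 - Ax \ge 0$, $x\ge 0$, and $x_0 \ge 0$; any nonzero solution with $x_0>0$ yields a feasible point of the original LP by rescaling. Integer bounds on $\tilde A$ are preserved up to constants.

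The core quantitative fact I would use is that for an integer LP with entries bounded by $U$, if the feasible region is nonempty then either it contains a ball of radius at least $\rho_0 = (dU)^{-O(d)}$ inside its affine hull, or it admits an \emph{implicit equality}: some constraint $a_i^\top x \le b_i$ that every feasible point satisfies with equality. This follows from a standard volume/Cramer's-rule argument on faces of integer polyhedra. In the first case, a direct invocation of Theorem \ref{thm:perceptron-guarantee} with the parameter $\rho_0$ suffices; in the second, projecting onto the hyperplane $a_i^\top x=b_i$ strictly reduces the dimension, after which the argument can recurse.

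The algorithm would iterate at most $d+1$ times. At each step it maintains a current working affine subspace and invokes Theorem \ref{thm:perceptron-guarantee} with margin parameter $\rho_0 = (dU)^{-O(d)}$. In parallel, it runs a private equality-detection subroutine that refines the procedure of Kaplan et al.: for each candidate row $i$, it privately tests infeasibility of the LP augmented with $a_i^\top x \ge b_i - \tau$ by calling the perceptron on a perturbed instance, and combines these tests using the sparse vector (above-threshold) technique to pay only logarithmically in the number of candidates. If the subroutine returns a certified feasible point, the algorithm halts; otherwise it identifies an implicit equality, projects, and recurses. Privacy is analyzed by distributing the budget across the at most $d+1$ rounds via advanced composition, so each round receives $\epsilon'\asymp \epsilon/\sqrt{d\log(1/\delta)}$ and $\delta'\asymp \delta/d$, yielding overall $(\epsilon,\delta)$-DP.

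For utility, each round drops $O\!\left(\frac{d^2}{\epsilon'}\sqrt{\log(1/\rho_0)}\,\log^2\!\frac{d}{\beta'\delta'}\right)$ constraints by Theorem \ref{thm:perceptron-guarantee}; plugging $\log(1/\rho_0) = O(d\log(dU))$, $\epsilon'\asymp \epsilon/\sqrt{d\log(1/\delta)}$, and summing over $d+1$ rounds gives
$$
O\!\left(d \cdot \frac{d^2 \sqrt{d\log(dU)}\cdot \sqrt{d\log(1/\delta)}}{\epsilon}\log^2\!\frac{d}{\beta\delta}\right) = O\!\left(\frac{d^4}{\epsilon}\sqrt{\log dU}\,\log^{2.5}\!\frac{d}{\beta\delta}\right),
$$
matching the claim, with the failure probability $\beta+\delta$ absorbing both perceptron failures and above-threshold cutoffs. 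The main obstacle I anticipate is the private equality-detection step: it must robustly distinguish a genuine implicit equality from a near-tight constraint with only $O(d^3)$ wasted constraints per iteration, so that the perceptron term ($O(d^3)$ per iteration after composition) remains dominant; pushing this below the perceptron cost—so as not to lose an additional factor of $d$—requires carefully combining the stability analysis of the perceptron subroutine with the sparse vector technique, and is where the refinement over Kaplan et al.\ must be done most delicately.
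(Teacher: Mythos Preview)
Your high-level strategy---iterate at most $d$ times, in each round either certify a good solution via Theorem~\ref{thm:perceptron-guarantee} or identify an implicit equality and reduce dimension, with advanced composition across rounds---matches the paper's, and your final arithmetic is correct. However, two aspects differ substantively from the paper, and in one of them there is a real gap.

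First, the equality-detection mechanism. The paper does \emph{not} use sparse-vector/above-threshold or per-row infeasibility tests. Instead it adds slack $\eta=\frac{1}{2(d+1)((d+1)U)^{d+1}}$ to every inequality, solves the relaxed LP (which now provably has margin $\ge\rho=\eta^{3}$, so Theorem~\ref{thm:perceptron-guarantee} applies directly), and observes that any original constraint violated by the returned point but satisfied with the slack must be tight in the original LP (Lemma~\ref{lem:tight-constraint}). The whole set $J_{2}$ of such constraints is then privatized in one shot via the subspace-sanitization primitive (Theorem~\ref{thm:sanitize}), costing only $O(\tfrac{d^{2}}{\epsilon}\log\tfrac{d}{\delta})$ dropped constraints---already dominated by the perceptron term. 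Your scheme of calling the perceptron on many augmented instances and filtering with sparse vector is unclear both on how the perceptron would certify \emph{infeasibility} and on how those calls compose privately; the slack trick sidesteps all of this.

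Second, and more importantly, you misplace where the factor-$d$ saving over Kaplan et al.\ lives. It is not in equality detection; it is in controlling the margin across iterations. If one ``maintains a current working affine subspace'' and projects iteratively, the entries of the reduced system can blow up from round to round, and $\log(1/\rho_{0})$ is no longer $O(d\log(dU))$ uniformly---this is exactly the loss in Kaplan et al. The paper's fix (Lemma~\ref{lem:margin-bound}) is to always return to the \emph{original} constraints $Ax\le b$ (entries $\le U$), and to note that the sanitized equalities $Cx=g$ span the same subspace as some subset $A'x=b'$ of original rows; eliminating via $C$ is therefore identical to eliminating via $A'$, so the reduced entries stay bounded by $k^{2}(k-1)!U^{k+1}$ and the margin remains $\ge\eta^{3}$ in every round. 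Without this observation your assumption $\log(1/\rho_{0})=O(d\log(dU))$ is unjustified, and the $d^{4}$ bound does not follow from your outline.
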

The algorithm by \cite{kaplan2024differentially} requires to drop
$\tilde{O}(d^{9})$ constraints to guarantee privacy. To improve this,
one can use the algorithm from Theorem \ref{thm:perceptron-guarantee}
as a subroutine in the algorithm by \cite{kaplan2024differentially}
to reduce to the number of dropped constraints to $\tilde{O}(d^{5})$.
Our algorithm with a more refined analysis goes one step further to
remove another factor $d$ and achieves $\tilde{O}(d^{4})$ dependence.
Our bound is a significant improvement towards the lower bound $\Omega(d)$.

\subsection{Our techniques\protect\label{subsec:Our-technique}}

Our technique for showing Theorem \ref{thm:perceptron-guarantee}
is based on privatizing a rescaling perceptron algorithm for solving
linear programs of the form $Ax\ge0,x\neq0$ with positive margin.
Instead of using the algorithm by \cite{dunagan2004polynomial} as
in \cite{kaplan2024differentially}, we develop an algorithm based
on the work of \cite{hoberg2017improved}. 

A rescaling perceptron algorithm consists of two phases: the Perceptron
Phase to find a new solution and the Rescaling Phase to find a direction
to rescale the input. To privatize the Perceptron Phase, the algorithm
of \cite{kaplan2024differentially} uses the $\mathsf{NoisyAvg}$
mechanism by \cite{nissim2016locating} to construct a noisy average
of the constraints that are violated by the current solution, and
then updates the solution in the same direction. In the Rescaling
Phase, the non-private algorithm by \cite{dunagan2004polynomial}
and the privatized version by \cite{kaplan2024differentially} use
another perceptron-style procedure to find a rescaling direction.
This approach requires $\tilde{O}(d^{2})$ updates and each call to
the Rescaling Phase requires to drop $\tilde{O}(d^{2.5})$ constraints.

The main difference between our work and \cite{kaplan2024differentially}
lies in the rescaling phase. We use the following technique by \cite{soheili2013deterministic}
and \cite{hoberg2017improved}. During the perceptron phase, the algorithm
maintains a weight vector $\lambda$ for tracking which constraints
are violated by the solution in each iteration. Using a random Gaussian
vector, the algorithm produces a rescaling direction by a convex combination
weighted by $\lambda$ of the rows satisfied by the random vector.
\cite{soheili2013deterministic} and \cite{hoberg2017improved} show
that with a constant probability, rescaling the input matrix along
that direction increases the volume of the feasible region inside
the unit ball by a constant factor. 

The benefit of using the weight vector to rescale is that, since we
only need $O(\log\frac{1}{\beta}$) steps to boost the success probability,
the amount of noise needed to make this phase private is much smaller.
Further, we can keep all constraints around until we find a good solution
and only discard constraints once at the end of the algorithm. We
discard only $O(d^{2})$ constraints in total.

For general linear programs with potentially zero margin, we use the
standard technique of adding a small perturbation to the constraints
that does not change the feasibility of the problem. This perturbation
increases the problem margin and allows the application of the private
perceptron algorithm. Similar to \cite{kaplan2024differentially},
our algorithm iteratively identifies tight (equality) constraints
in the LP, privatizes these equality constraints and uses them to
eliminate variables. The approach by \cite{kaplan2024differentially}
needs to account for the blow-up of the input entries after performing
the variable elimination steps, which can quickly diminish the margin.
The cost of this shows up as an extra factor $d$ in the number of
dropped constraints. By contrast, our algorithm always returns to
the initial LP after identifying tight constraints. We show that in
this way the margin reduces at a slower rate, saving the factor $d$.

\subsection{Related work}

\textbf{Rescaling Perceptron algorithms}. To find a solution $x\neq0$
that satisfies $Ax\ge0$, one can use the classic perceptron algorithm
which convergences after at most $1/\rho^{2}$ iterations on problems
with positive margin $\rho$, where the margin is the radius of the
largest ball that fits in the intersection of the feasible region
and the unit ball. \cite{dunagan2004polynomial} show a modification
of the classic algorithm with an additional rescaling procedure that
runs in time $\tilde{O}(nd^{4}\log\frac{1}{\rho})$, for problems
with $\rho>0$. The rescaling procedure by \cite{dunagan2004polynomial}
is another variant of the perceptron algorithm which finds a rescaling
direction by moving a random unit vector along the direction of a
violated constraint. Subsequent works by \cite{soheili2013deterministic,hoberg2017improved}
explore different rescaling operations. In particular, our work relies
on the technique by \cite{hoberg2017improved} in which the rescaling
direction is found by a convex combination of rows whose corresponding
constraints are satisfied by a random Gaussian vector.

\textbf{Solving linear programs with privacy}. Solving linear programs
with differential privacy has been the focus of several prior works.
Under various notions of neighboring inputs (such as input differing
by a row or a column), \cite{hsu2014privately} give algorithms that
approximately satisfy most constraints. \cite{munoz2021private} show
an algorithm that satisfy most constraints exactly, but only considering
neighboring inputs to be differing on the right hand side scalars.
\cite{kaplan2020private} provide an algorithm to solve the general
feasibility problem $Ax\ge b$, but requires a running time that is
exponential in the dimension. Most relevant for our work is the work
of \cite{kaplan2024differentially}, which studies the same setting.
We provide a detailed comparison of the results and techniques in
sections \ref{subsec:Our-contribution} and \ref{subsec:Our-technique}.

\textbf{Beyond solving linear programs}. Solving linear programs is
closely related to private learning subspaces and halfspaces \cite{DBLP:conf/focs/BunNSV15}.
\cite{DBLP:conf/colt/BeimelMNS19} show a reduction from learning
halfspaces to the problem of finding a point in the convex hull, where
an efficient algorithm for the latter problem implies an upper bound
for the sample complexity of efficient algorithms for the former.
Subsequent works by \cite{kaplan2020private,DBLP:conf/citc/GaoS21,DBLP:conf/nips/Ben-EliezerMZ22}
have targeted this question. \cite{kaplan2024differentially} build
techniques for solving this problem via privately solving LPs and
achieve the first algorithm that has polynomial dependence on the
dimension and polylogarithmic dependence on the domain size. Our algorithm
can be used as a subroutine to improve the runtime of the algorithm
by \cite{kaplan2024differentially}.

\section{Preliminaries\protect\label{sec:Preliminary}}

\textbf{Notation}. We let $\left\Vert \cdot\right\Vert _{1}$ be the
$\ell_{1}$-norm and $\left\Vert \cdot\right\Vert _{2}$ be the $\ell_{2}$-norm.
When it is clear from context, $\left\Vert \cdot\right\Vert $ also
denotes the $\ell_{2}$norm. For a matrix $A$, we denote by $a_{i}$
the $i$-th row of $A$. For a vector $a$, we let $\overline{a}$
be the normalized vector $\overline{a}=\frac{a}{\left\Vert a\right\Vert _{2}}$.
We also use $\overline{A}$ to denote the matrix $A$ with normalized
rows. We denote by $\Lap(b)$ the Laplace distribution with density
$f(x)=\frac{1}{2b}\exp(-\frac{|x|}{b})$; $N(0,\sigma^{2})$ the Gaussian
distribution with mean zero and variance $\sigma^{2}$ and $N(0,\sigma^{2}I)$
the multivariate Gaussian distribution with mean zero and covariance
$\sigma^{2}I$. The dimension will be clear from context.

\textbf{Linear programs}. We consider the problem of finding a feasible
solution to a linear program in general standard form $Ax\le b,x\geq0$,
where $A$ has dimension $n\times d$, $b$ is a vector of dimension
$n$ and the entries of $A$ and $b$ are integers. Following \cite{dunagan2004polynomial},
we refer to the problem of finding a solution satisfying $Ax\ge0$,
$x\neq0$ as a homogeneous LP. A homogeneous LP $Ax\ge0,x\neq0$ is
characterized by a quantity $\rho(A)$, namely, the margin (or roundness)
parameter, given as 
\begin{align*}
\rho(A) & =\max_{\left\Vert x\right\Vert _{2}\le1}\min_{i}\left\langle \overline{a}_{i},x\right\rangle .
\end{align*}
Geometrically, $\rho(A)$ is the radius of a ball that fits into the
intersection between the feasible region and the unit ball. The classic
perceptron algorithm for LPs with $\rho(A)>0$ converges with $1/\rho(A)^{2}$
iterations. Rescaling algorithms such as \cite{dunagan2004polynomial,soheili2013deterministic,hoberg2017improved}
have total runtime $\mathrm{poly}(n,d)\log1/\rho(A)$. 

\textbf{Homogenization}. \cite{dunagan2004polynomial} give a simple
reduction (called homogenization) from a general LP $Ax\le b,x\geq0$
to a homogeneous LP $A'x\ge0$, $x\neq0$ by setting $A'=\left[\begin{array}{c}
-A\mid b\\
I
\end{array}\right]$ and $x=(x\mid x_{0})^{\top}$. We refer to the homogeneous LP constructed
via this reduction as the homogenized LP. 

\textbf{Differential privacy}. We use the notation $(A,b)$ as shorthand
for the LP $Ax\leq b,x\geq0$. We say that two LPs $(A,b)$ and $(A',b')$
are neighbors is they differ by only one constraint (one LP has an
extra constraint). A randomized algorithm ${\cal A}$ is said to be
$(\epsilon,\delta)$-differentially private (DP) if for all neighboring
LPs $(A,b)$ and $(A',b')$ and every subset of possible outcomes
${\cal O}$,
\begin{align*}
\Pr\left[{\cal A}(A,b)\in{\cal O}\right] & \le e^{\epsilon}\Pr\left[{\cal A}(A',b')\in{\cal O}\right]+\delta.
\end{align*}
In the case $\delta=0$, we say the algorithm is $\epsilon$-DP. Two
commonly used mechanisms for achieving differential privacy are the
Laplace mechanism and the Gaussian mechanism. Let $f$ be a function
whose output has dimension $d$. We say $f$ has $\ell_{1}$ sensitivity
$k$ if on any two neighboring inputs $x$ and $x'$, $\left\Vert f(x)-f(x')\right\Vert _{1}\le k$,
and $f$ has $\ell_{2}$ sensitivity $k$ if on any two neighboring
inputs $x$ and $x'$, $\left\Vert f(x)-f(x')\right\Vert _{2}\le k$.
\begin{thm}[Laplace mechanism \cite{dwork2006calibrating}]
Let $f$ be a function of $\ell_{1}$ sensitivity $k$. The mechanism
${\cal A}$ that on an input $x$ adds independently generated noise
with the Laplace distribution $\Lap(\frac{k}{\epsilon})$ to each
of the $d$ coordinates of $f(x)$ is $\epsilon$-DP.
\end{thm}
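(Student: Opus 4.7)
The plan is to verify the definition of pure $\epsilon$-differential privacy directly by comparing output densities. Since $\delta=0$, it suffices to show that for any two neighboring inputs $x,x'$ and any measurable set ${\cal O}\subseteq\R^{d}$, the ratio of the two output probabilities is bounded by $e^{\epsilon}$. Because the noise is added independently in each coordinate, it will be enough to control the pointwise ratio of densities at an arbitrary $z\in\R^{d}$ and then integrate.

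First I would write out the density of the output distribution ${\cal A}(x)$ at $z$. Since each coordinate $i$ is $f(x)_{i}+Z_{i}$ with $Z_{i}\sim\Lap(k/\epsilon)$ independently, the joint density is
\begin{align*}
p_{x}(z) & =\prod_{i=1}^{d}\frac{\epsilon}{2k}\exp\!\left(-\frac{\epsilon\,|z_{i}-f(x)_{i}|}{k}\right).
\end{align*}
Taking the ratio $p_{x}(z)/p_{x'}(z)$ causes the normalization constants to cancel, leaving
\begin{align*}
\frac{p_{x}(z)}{p_{x'}(z)} & =\exp\!\left(\frac{\epsilon}{k}\sum_{i=1}^{d}\bigl(|z_{i}-f(x')_{i}|-|z_{i}-f(x)_{i}|\bigr)\right).
\end{align*}

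Next I would bound each term of the sum using the reverse triangle inequality: $|z_{i}-f(x')_{i}|-|z_{i}-f(x)_{i}|\le|f(x)_{i}-f(x')_{i}|$. Summing over $i$ gives $\|f(x)-f(x')\|_{1}$, which is at most $k$ by the $\ell_{1}$-sensitivity hypothesis. Therefore the exponent is at most $\epsilon$, and $p_{x}(z)\le e^{\epsilon}p_{x'}(z)$ pointwise. Integrating this inequality over any measurable outcome set ${\cal O}$ yields $\Pr[{\cal A}(x)\in{\cal O}]\le e^{\epsilon}\Pr[{\cal A}(x')\in{\cal O}]$, which is exactly the $\epsilon$-DP condition.

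There is no real obstacle here: the argument is the standard textbook calculation, and the only nontrivial ingredient is the triangle inequality combined with the sensitivity bound. The main thing to be careful about is making sure the ratio is taken in a form where the Laplace normalization constants cancel cleanly and that the triangle inequality is applied in the direction that produces the $\ell_{1}$ distance between $f(x)$ and $f(x')$ rather than an expression depending on $z$.
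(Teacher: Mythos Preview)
Your argument is correct and is exactly the standard proof of the Laplace mechanism. Note, however, that the paper does not supply its own proof of this theorem: it is stated in the preliminaries as a cited result from \cite{dwork2006calibrating} and used as a black box, so there is nothing in the paper to compare your proof against.
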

\begin{thm}[Gaussian mechanism \cite{dwork2006our}]
Let $f$ be a function of $\ell_{2}$ sensitivity $k$. The mechanism
${\cal A}$ that on an input $x$ adds noise generated with the Gaussian
distribution $N(0,\sigma^{2})$ where $\sigma\ge\frac{k}{\epsilon}\sqrt{2\ln\frac{2}{\delta}}$
to each of the $d$ coordinates of $f(x)$ is $(\epsilon,\delta)$-DP.
\end{thm}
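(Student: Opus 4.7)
The plan is to reduce $(\epsilon,\delta)$-differential privacy to a tail bound on the privacy loss random variable, and then exploit the explicit Gaussian form of the output distribution. Fix two neighboring inputs $x,x'$ and set $\mu=f(x)$, $\mu'=f(x')$, so that $\|\mu-\mu'\|_2\le k$ by the $\ell_2$ sensitivity assumption. Writing $p$ and $p'$ for the densities of $\mathcal{A}(x)$ and $\mathcal{A}(x')$, both spherical Gaussians with covariance $\sigma^2 I$ centered at $\mu$ and $\mu'$, it is a standard fact that the theorem follows once one establishes the tail bound
\begin{align*}
\Pr_{Y\sim p}\Bigl[\log\tfrac{p(Y)}{p'(Y)}>\epsilon\Bigr]\le\delta,
\end{align*}
because then for any measurable $S$ one splits $\Pr[\mathcal{A}(x)\in S]$ into the contribution from the event that the privacy loss is at most $\epsilon$ (bounded by $e^\epsilon\Pr[\mathcal{A}(x')\in S]$) and its complement (bounded by $\delta$).

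Next, I would compute the privacy loss explicitly. Expanding the Gaussian log-densities, for $Y=\mu+Z$ with $Z\sim N(0,\sigma^2 I)$,
\begin{align*}
\log\frac{p(Y)}{p'(Y)}=\frac{\|Y-\mu'\|_2^2-\|Y-\mu\|_2^2}{2\sigma^2}=\frac{\langle Z,\mu-\mu'\rangle}{\sigma^2}+\frac{\|\mu-\mu'\|_2^2}{2\sigma^2}.
\end{align*}
Since $\langle Z,\mu-\mu'\rangle\sim N(0,\sigma^2\|\mu-\mu'\|_2^2)$, the privacy loss is a one-dimensional Gaussian with mean $m=\tfrac{\|\mu-\mu'\|_2^2}{2\sigma^2}$ and standard deviation $s=\tfrac{\|\mu-\mu'\|_2}{\sigma}$. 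Notice that the ambient dimension $d$ has dropped out: the problem has collapsed to a one-dimensional tail estimate along the direction $\mu-\mu'$, and the worst case is $\|\mu-\mu'\|_2=k$.

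Finally, I would apply the Gaussian tail bound $\Pr[N(m,s^2)>\epsilon]\le\exp(-(\epsilon-m)^2/(2s^2))$ with the above $m$ and $s$, and check that the prescribed choice $\sigma\ge\tfrac{k}{\epsilon}\sqrt{2\ln(2/\delta)}$ drives this probability below $\delta$. The algebra reduces to verifying $\epsilon-m\ge s\sqrt{2\ln(2/\delta)}$; substituting $\|\mu-\mu'\|_2=k$ and the stated $\sigma$ makes the right-hand side essentially $\epsilon$, with the $m$ term absorbed by the gap between $\ln(2/\delta)$ and $\ln(1/\delta)$. The main obstacle here is the bookkeeping around the positive mean $m$: a naive calculation that ignores $m$ gives exactly the stated $\sigma$, but a careful verification must show that $m\le\epsilon/2$ (or similar), which follows from the same lower bound on $\sigma$ under the usual regime $\epsilon\le 1$; otherwise one has to slightly inflate the constant inside the square root. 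Once this one-dimensional tail bound is in place, the reduction from the first paragraph immediately yields $(\epsilon,\delta)$-DP.
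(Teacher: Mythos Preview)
The paper does not supply a proof of this statement: it is stated in the Preliminaries section as a known result, attributed to \cite{dwork2006our}, and used as a black box throughout the privacy analyses. So there is no ``paper's own proof'' to compare against.

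Your proposal is the standard argument for the Gaussian mechanism and is correct in outline and in the key computations: the reduction to a tail bound on the privacy loss, the explicit formula showing that the privacy loss is a one-dimensional Gaussian with mean $m=\|\mu-\mu'\|_2^2/(2\sigma^2)$ and standard deviation $s=\|\mu-\mu'\|_2/\sigma$, and the observation that the dimension $d$ disappears. You also correctly flag the one genuine subtlety, namely that absorbing the positive mean $m$ requires the additional hypothesis $\epsilon\le 1$ (or some bound of that type); without it the stated constant $\sqrt{2\ln(2/\delta)}$ is not quite sufficient. This caveat is standard in the literature and is silently assumed in the paper's usage, so your proof sketch matches what the cited reference proves.
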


\section{Private Perceptron Algorithm for Positive Margin LPs\protect\label{sec:Private-Perceptron-Algorithm}}

\subsection{Algorithm}

\begin{algorithm}
\caption{Private Perceptron}

\label{alg:private-perceptron}

\begin{algorithmic}[1]

\STATE Input: Matrix $A\in\R^{n\times d}$, parameters $\rho_{0},\epsilon,\delta,\beta$

\STATE Let $B=I$

\STATE for $t=1\dots\tau=O\left(d\log\frac{1}{\rho_{0}}\right)$:

\STATE $\qquad$Run $c\gets\PPE(A,\epsilon,\delta,\beta)$

\STATE $\qquad$if $c$ is a solution:

\STATE $\qquad\qquad$return $Bc$

\STATE $\qquad$else if $c$ is a rescaling direction:

\STATE $\qquad\qquad$$A\gets A\left(I-\frac{1}{2}\overline{c}\cdot\overline{c}^{\top}\right)$;
$B\gets\left(I-\frac{1}{2}\overline{c}\cdot\overline{c}^{\top}\right)B$

\STATE $\qquad$else:

\STATE $\qquad\qquad$abort; return $\bot$

\STATE return $\bot$

\end{algorithmic}
\end{algorithm}

\begin{algorithm}
\caption{$\protect\PPE(A,\epsilon,\delta,\beta)$}

\label{alg:private-perceptron-epoch}

\begin{algorithmic}[1]

\STATE $x^{(1)}=\left(0,\dots,0\right)\in\R^{d}$

\STATE $\lambda^{(1)}=\left(0,\dots,0\right)\in\R^{n}$

\STATE $\nu=\Theta\left(\frac{\sqrt{d}}{\epsilon}\log^{1.5}\frac{T\tau}{\beta\delta}\right)$,
$\sigma^{2}=\frac{8\log\frac{4T}{\delta}}{\nu^{2}\epsilon^{2}}$,
$\theta^{2}=\frac{8}{d^{2}\epsilon^{2}\nu^{2}}\log\frac{4000\log\frac{\tau}{\beta}}{\delta}$

\STATE for $t=1\dots T=\Theta(d^{2})$: \hfill{}Perceptron Phase

\STATE $\qquad$$S^{(t)}=\left\{ i\in[n],\left\langle \overline{a}_{i},\overline{x}^{(t)}\right\rangle \le0\right\} $,
$m^{(t)}=\left|S^{(t)}\right|$

\STATE $\qquad$Let $\widehat{m}^{(t)}=m^{(t)}+\Lap\left(\frac{1}{\epsilon}\right)-\frac{1}{\epsilon}\log\frac{2000T\log\frac{1}{\beta}}{\delta}$.
If $\widehat{m}^{(t)}\le\nu$ then return solution $x^{(t)}$.

\STATE $\qquad$$u^{(t)}=\frac{1}{m^{(t)}}\sum_{i\in S^{(t)}}\overline{a}_{i}$,
$\widehat{u}^{(t)}=u^{(t)}+\eta^{(t)}$ where $\eta^{(t)}\sim N(0,\sigma^{2}I)$

\STATE $\qquad$$x^{(t+1)}=x^{(t)}+\widehat{u}^{(t)}$

\STATE $\qquad$$\lambda_{i}^{(t+1)}=\lambda_{i}^{(t)}+\frac{1}{m^{(t)}}$
for all $i\in S^{(t)}$ \hfill{}$\lambda^{(t)}$ are kept private

\STATE Let $\overline{\lambda}=\frac{\lambda^{(T)}}{T}$

\STATE for $s=1,\dots,1000\log\frac{\tau}{\beta}$:\label{line:applification}
\hfill{} Rescaling Phase

\STATE $\qquad$Take a gaussian vector $g^{(s)}\sim N(0,I)$ and
compute $P^{(s)}=\left\{ i:\left\langle \overline{a}_{i},g^{(s)}\right\rangle \ge0\right\} $

\STATE $\qquad$ Let $c^{(s)}=\sum_{i\in P^{(s)}}\overline{\lambda}_{i}\overline{a}_{i}$
and $\widehat{c}^{(s)}=\sum_{i\in P^{(s)}}\overline{\lambda}_{i}\overline{a}_{i}+\gamma^{(s)}$
where $\gamma^{(s)}\sim N(0,\theta^{2}I)$

\STATE $\qquad$if $\left\Vert \widehat{c}^{(s)}\right\Vert \ge\frac{3}{16\sqrt{\pi d}}$,
return $\widehat{c}^{(s)}$

\STATE Output $\bot$

\end{algorithmic}
\end{algorithm}

We describe our Private Perceptron algorithm in Algorithm \ref{alg:private-perceptron}.
Given a matrix $A\in\R^{n\times d}$, margin parameter $\rho_{0}$,
privacy parameters $\epsilon,\delta$, and failure probability $\beta$,
the algorithm runs at most $\tau=O(d\log\frac{1}{\rho_{0}})$ and
makes calls to $\PPE$ procedure given in Algorithm \ref{alg:private-perceptron-epoch}.
Algorithm \ref{alg:private-perceptron-epoch} has three possible outcomes.
If it outputs a solution, Algorithm \ref{alg:private-perceptron}
terminates and returns this solution with a suitable rescaling. If
it outputs a rescaling direction, Algorithm \ref{alg:private-perceptron}
rescales the input matrix $A$ and repeats. Otherwise, Algorithm \ref{alg:private-perceptron}
terminates and returns $\bot$.

Our main novel contribution lies in Algorithm \ref{alg:private-perceptron-epoch}.
This algorithm consists of two phases: the Perceptron Phase in which
the algorithm attempts to find a solution to the LP and the Rescaling
Phase in which the algorithm finds a good direction to rescale the
input if the solution from the Perceptron Phase is not satisfactory.
In each phase, we add maintain the privacy by adding appropriate noise.
During the Perceptron Phase, the algorithm maintains a solution and
updates it along the direction of a noisy average of the violated
constraints. The algorithm also maintains a weight vector $\lambda^{(t)}$
which picks up the constraints violated by the current solution. We
keep $\lambda^{(t)}$ private, and use the average value $\overline{\lambda}$
to determine the rescaling direction. To determine the rescaling direction,
during the Rescaling Phase, the algorithm takes a random gaussian
vector and computes a noisy sum of all rows satisfied by this vector
weighted by $\overline{\lambda}$. We will show that with a constant
probability, this noisy sum provides a good rescaling direction, and
the algorithm repeats the process a number of iterations to boost
the success probability.

In the next subsections, we show the privacy and utility guarantees
of our algorithm.

\subsection{Privacy analysis\protect\label{subsec:Privacy-analysis}}

Throughout, we let $(A,b)$ and $(A',b')$ be two neighboring LPs.
 The corresponding computed terms in Algorithm \ref{alg:private-perceptron-epoch}
for $(A',b')$ are denoted with the extra prime symbol (for example,
$S^{(t)}$ and $S^{(t)}{}'$).
\begin{prop}
\label{prop:main-privacy-lemma}Algorithm \ref{alg:private-perceptron}
is $(\epsilon',\delta')$-DP for $\epsilon'=2d^{3}\log\frac{1}{\rho_{0}}\epsilon^{2}+\sqrt{2d^{3}\epsilon^{2}\log\frac{1}{\rho_{0}}\log\frac{1}{\delta}}$
and $\delta'=(d+1)\delta$.
\end{prop}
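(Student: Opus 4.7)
The plan is to analyze the privacy loss of one call to $\PPE$ via adaptive composition of its constituent noise releases, and then compose across the $\tau = O(d\log(1/\rho_0))$ outer epochs of Algorithm \ref{alg:private-perceptron}. Inside one call to $\PPE$ there are three categories of release to account for: the Laplace noisy counts $\widehat{m}^{(t)}$ in line 6, the Gaussian noisy averages $\widehat{u}^{(t)}$ in line 7 (each applied $T=\Theta(d^2)$ times), and the Gaussian noisy rescaling directions $\widehat{c}^{(s)}$ in line 13 (applied $O(\log(\tau/\beta))$ times per epoch).

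The first step would be to set up a high-probability conditioning event $E$ on which the Laplace thresholds behave as hard lower bounds. By the $\Lap(1/\epsilon)$ tail applied to the noise in line 6, together with the additive offset $-\tfrac{1}{\epsilon}\log\tfrac{2000T\log(1/\beta)}{\delta}$, a union bound over the $T\tau$ threshold tests guarantees that with probability at least $1-\delta/2$, whenever the algorithm continues past line 6 one has $m^{(t)} \ge \nu$. I condition on $E$ throughout the downstream sensitivity analysis, and contribute $\delta/2$ to the overall failure probability.

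Next I would bound per-release sensitivities in the adaptive composition model (each release conditioned on all earlier transcript). The count $m^{(t)}$ has $\ell_1$-sensitivity $1$, so line 6 is $\epsilon$-DP per call. Given the earlier $\widehat{u}^{(t')}$, the iterate $x^{(t)}$ is determined, after which $u^{(t)} = \tfrac{1}{m^{(t)}}\sum_{i\in S^{(t)}}\overline{a}_i$ has $\ell_2$-sensitivity $O(1/m^{(t)}) \le O(1/\nu)$ on $E$; with the prescribed $\sigma^2$, line 7 is $(\epsilon,\delta/\mathrm{poly}(T,\tau))$-DP per call. For line 13, the same conditioning forces $x^{(t)}$ to be identical across two neighboring LPs, so $S^{(t)}(A)$ and $S^{(t)}(A')$ differ only on the single index $j$ at which the LPs disagree. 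A short telescoping argument using $|1/m^{(t)}(A)-1/m^{(t)}(A')| \le 1/(m^{(t)}m^{(t)'}) \le 1/\nu^2$ and $|S^{(t)}|\le m^{(t)}$ then yields both $\overline{\lambda}_j \le 1/\nu$ and $\|\overline{\lambda}(A)-\overline{\lambda}(A')\|_1 = O(1/\nu)$; combined with $\|\overline{a}_i\|=1$, this gives $\ell_2$-sensitivity $O(1/\nu)$ for $c^{(s)}$, so line 13 with the prescribed $\theta^2$ is $(O(d\epsilon),\delta/\mathrm{polylog})$-DP per repetition.

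Finally I would compose globally. Algorithm \ref{alg:private-perceptron} makes $T\tau = O(d^3\log(1/\rho_0))$ Perceptron-Phase releases (each $\epsilon$-DP from Laplace plus $\epsilon$-DP from Gaussian) and $O(\tau\log(\tau/\beta))$ rescaling releases (each $O(d\epsilon)$-DP), while the rescaling update of $A$ and $B$ is pure post-processing. Applying the advanced composition theorem with $k = O(d^3\log(1/\rho_0))$ Perceptron-Phase releases recovers the leading term $k\epsilon^2 = 2d^3\log(1/\rho_0)\epsilon^2$ and the high-probability term $\epsilon\sqrt{2k\log(1/\delta)} = \sqrt{2d^3\epsilon^2\log(1/\rho_0)\log(1/\delta)}$ stated in Proposition \ref{prop:main-privacy-lemma}; the rescaling releases contribute to the same order since $\tau\log(\tau/\beta)\cdot(d\epsilon)^2$ is of the same magnitude as $T\tau\cdot\epsilon^2$ up to polylog factors, and the failure probability $\delta' = (d+1)\delta$ absorbs both the $E$-conditioning and the per-Gaussian failure probabilities. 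The hard part of this analysis is precisely the sensitivity of $c^{(s)}$: the weight vector $\overline{\lambda}$ is itself data-dependent, and a naive estimate would let a one-row change perturb $\overline{\lambda}$ in every coordinate, but conditioning on the transcript pins down $x^{(t)}$ identically for both LPs, reducing all differences to the single index $j$ and making the $\ell_1$-drift of $\overline{\lambda}$ telescope against the lower bound $m^{(t)} \ge \nu$ supplied by $E$.
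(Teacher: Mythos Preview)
Your approach is correct and follows the same high-level strategy as the paper: establish per-release privacy guarantees for the Laplace counts, the Gaussian averages $\widehat{u}^{(t)}$, and the Gaussian rescaling vectors $\widehat{c}^{(s)}$, then apply advanced composition over the $T\tau=O(d^{3}\log(1/\rho_{0}))$ Perceptron-Phase releases and the $O(\tau\log(\tau/\beta))$ Rescaling-Phase releases (the latter each carrying a $d\epsilon$ loss, so their total squared budget matches the former up to polylog factors).

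Two packaging differences are worth flagging. First, the paper does not globally condition on a high-probability event $E$; instead, in Lemmas~3.1 and~3.2 it performs a per-step case split on whether $m^{(t)}$ (respectively $m=\min_{t}m^{(t)}$) is below $\nu$, and in the small-count case bounds the probability that the branch executes at all by the Laplace tail. This sidesteps having to argue that your event $E$ is data-independent (which it is, once phrased purely as ``the Laplace noise never exceeds the offset,'' but this deserves a sentence). Second, for the $\ell_{2}$-sensitivity of $c^{(s)}$ the paper's Claim~3.2 computes $\|c-c'\|_{2}\le 2/m$ directly, using $\sum_{i}\overline{\lambda}_{i}=1$ and Jensen's inequality on $(\sum_{i}\overline{\lambda}_{i}|\overline{a}_{i,j}|)^{2}$, whereas you route through the $\ell_{1}$ bound $\|\overline{\lambda}-\overline{\lambda}'\|_{1}=O(1/\nu)$ obtained by telescoping $|1/m^{(t)}-1/m^{(t)\prime}|$. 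Both arguments are valid and yield the same $O(1/\nu)$ sensitivity; your route is arguably more transparent about why conditioning on the transcript forces the neighboring $\overline{\lambda}$'s to differ only via the single index $j$ and the denominators $m^{(t)}$.
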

To show this Proposition, we show the following lemmas. 
\begin{lem}
\label{lem:privacy-perceptron}Each iteration of the Perceptron Phase
is $(\epsilon,\frac{\delta}{2T})$-DP.
\end{lem}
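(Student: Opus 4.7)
One iteration of the Perceptron Phase (lines 5--9 of Algorithm~\ref{alg:private-perceptron-epoch}) reveals at most two pieces of information about the input: the decision made by the Laplace-noised gate on line~6 (``return $x^{(t)}$'' or ``proceed''), and, in the proceed case, the noisy average $\widehat{u}^{(t)}$ on line~7 which determines the next iterate $x^{(t+1)}$. The weight vector $\lambda^{(t+1)}$ is kept internal and is not released during this iteration. The plan is to view these releases as the composition of a Laplace threshold with a Gaussian vector release, and to exploit the Laplace threshold not only for the gate's own privacy but also to certify that the Gaussian release happens under a safe conditional sensitivity bound.

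I would begin with a direct sensitivity calculation. Fix neighboring LPs where $A'$ has one extra row $\overline{a}_{n+1}$ (the symmetric case is identical). If this row is satisfied by $x^{(t)}$ then it does not belong to $S^{(t)}{}'$, both $m^{(t)}$ and $u^{(t)}$ are unchanged, and the iteration is perfectly indistinguishable. Otherwise $m^{(t)}{}' = m^{(t)}+1$, and a short algebraic manipulation yields
\[
u^{(t)}{}' - u^{(t)} \;=\; \frac{\overline{a}_{n+1} - u^{(t)}}{m^{(t)}+1}, \qquad \|u^{(t)} - u^{(t)}{}'\|_2 \;\leq\; \frac{2}{m^{(t)}+1}.
\]
Thus $m^{(t)}$ has sensitivity $1$, while $u^{(t)}$ has data-dependent $\ell_2$-sensitivity of order $1/m^{(t)}$.

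The Laplace noise $\Lap(1/\epsilon)$ on $m^{(t)}$ provides $\epsilon$-DP for the gate decision (the deterministic offset $B := (1/\epsilon)\log(2000 T \log(1/\beta)/\delta)$ does not affect privacy). The role of this large offset is to make ``proceed while $m^{(t)}$ is small'' very rare: by the standard Laplace tail bound $\Pr[\Lap(1/\epsilon) > t] = \tfrac{1}{2}e^{-\epsilon t}$, for any $m^{(t)}$ with $m^{(t)}+1 < \nu$ we have
\[
\Pr\!\left[\widehat{m}^{(t)} > \nu \;\Big|\; m^{(t)}+1 < \nu\right] \;\leq\; \tfrac{1}{2}\, e^{-\epsilon B} \;=\; \frac{\delta}{4000\, T\, \log(1/\beta)} \;\leq\; \frac{\delta}{4T}.
\]
Hence, up to a failure mass of $\delta/(4T)$ (absorbed into the $\delta$ budget), whenever the Gaussian branch fires I may assume $m^{(t)}+1 \geq \nu$ and therefore the sensitivity of $u^{(t)}$ is at most $2/\nu$.

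Under this conditional sensitivity bound, the Gaussian mechanism with $\sigma^2 = 8\log(4T/\delta)/(\nu^2\epsilon^2)$ is calibrated so that $\sigma \geq (2/\nu)/\epsilon \cdot \sqrt{2\ln(4T/\delta)}$, which by the Gaussian mechanism theorem from Section~\ref{sec:Preliminary} gives $(\epsilon, \delta/(2T))$-DP for the release of $\widehat{u}^{(t)}$. Combining with the $\epsilon$-DP Laplace gate and absorbing the ``false proceed'' event yields the claimed $(\epsilon, \delta/(2T))$-DP bound (up to the standard factor-of-two slack in $\epsilon$ that is absorbed in the advanced-composition accounting of Proposition~\ref{prop:main-privacy-lemma}). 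The main obstacle is precisely the data-dependence of the sensitivity of $u^{(t)}$: one cannot simply apply the Gaussian mechanism with a worst-case sensitivity of order $1$, and must instead use the biased Laplace threshold to produce a conditional bound of order $1/\nu$, with the offset $B$ tuned so that the probability of a false proceed event is absorbable into the $\delta$ budget for this iteration.
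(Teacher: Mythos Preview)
Your proposal is correct and follows essentially the same approach as the paper: a case split on whether $m^{(t)}$ exceeds the threshold $\nu$, using the Laplace tail to absorb the ``false proceed'' event into the $\delta/(2T)$ budget when $m^{(t)}<\nu$, and the Gaussian mechanism calibrated to sensitivity $2/\nu$ when $m^{(t)}\ge\nu$. You are in fact a bit more careful than the paper in flagging the factor-of-two in $\epsilon$ that arises from combining the Laplace gate with the Gaussian release; the paper's proof states the single-$\epsilon$ bound directly and implicitly relies on the downstream advanced composition in Proposition~\ref{prop:main-privacy-lemma} to absorb such constants.
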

The following claim follows similarly to the proof of the $\mathsf{NoisyAvg}$
algorithm by \cite{nissim2016locating}. We include the proof in the
appendix.
\begin{claim}
\label{claim:perceptron-sensitivity}For all $t\in[T]$, $m^{(t)}$
has $\ell_{1}$ sensitivity $1$, and $u^{(t)}$ has $\ell_{2}$ sensitivity
$\frac{2}{m^{(t)}}$.
\end{claim}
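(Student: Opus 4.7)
The plan is to compute the change in $m^{(t)}$ and $u^{(t)}$ directly when one constraint is added or removed, using the fact that in the adaptive composition analysis we may condition on the previously released noisy outputs $\widehat{m}^{(1)},\widehat{u}^{(1)},\dots,\widehat{m}^{(t-1)},\widehat{u}^{(t-1)}$. Since $x^{(t)}$ is computed only from the noisy outputs of earlier iterations, this conditioning makes $x^{(t)}$ (and hence $\overline{x}^{(t)}$) identical on the two neighboring datasets, so $S^{(t)}$ and $S^{(t)}{}'$ are defined with respect to the same test vector.

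First I would handle $m^{(t)}$. By definition $m^{(t)}=|S^{(t)}|$ and the neighboring datasets differ by exactly one row of $A$. Whether that row's normalized version has non-positive inner product with $\overline{x}^{(t)}$ or not, the symmetric difference $S^{(t)}\triangle S^{(t)}{}'$ has size at most $1$, which yields $|m^{(t)}-m^{(t)}{}'|\le 1$ and hence $\ell_1$-sensitivity $1$.

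Next I would handle $u^{(t)}$. By symmetry assume $A'$ has one additional row $a_{n+1}$. If $\langle\overline{a}_{n+1},\overline{x}^{(t)}\rangle>0$ the sets coincide and there is nothing to show, so assume $n+1\in S^{(t)}{}'$, which gives $m^{(t)}{}'=m^{(t)}+1$. Writing
\begin{align*}
u^{(t)}{}'-u^{(t)} & = \frac{1}{m^{(t)}+1}\sum_{i\in S^{(t)}\cup\{n+1\}}\overline{a}_{i} - \frac{1}{m^{(t)}}\sum_{i\in S^{(t)}}\overline{a}_{i} \\
 & = \frac{1}{m^{(t)}+1}\overline{a}_{n+1} - \frac{1}{m^{(t)}+1}u^{(t)},
\end{align*}
and using $\|\overline{a}_{n+1}\|_2=1$ together with $\|u^{(t)}\|_2\le 1$ (since $u^{(t)}$ is an average of unit vectors), the triangle inequality gives $\|u^{(t)}{}'-u^{(t)}\|_2\le \frac{2}{m^{(t)}+1}\le \frac{2}{m^{(t)}}$. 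The symmetric case (removing a row from $S^{(t)}$) is entirely analogous.

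There is no real obstacle here; the only subtlety is recognizing that sensitivity must be measured with $x^{(t)}$ fixed, which is justified by the adaptive composition framework used in Lemma \ref{lem:privacy-perceptron}. The rest is algebraic manipulation together with the unit-norm assumption on the normalized rows.
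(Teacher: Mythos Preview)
Your proof is correct and follows essentially the same approach as the paper's. The paper writes the difference as $\frac{1}{m^{(t)}(m^{(t)}+1)}\sum_{i\in S^{(t)}}\overline a_i - \frac{1}{m^{(t)}+1}\overline a$ and then bounds the sum termwise, whereas you recognize $\frac{1}{m^{(t)}}\sum_{i\in S^{(t)}}\overline a_i = u^{(t)}$ and use $\|u^{(t)}\|_2\le 1$ directly; this is the same computation with a slightly cleaner factorization. Your explicit remark about conditioning on prior noisy outputs so that $x^{(t)}$ is fixed is a useful clarification that the paper leaves implicit.
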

\begin{proof}[Proof of Lemma \ref{lem:privacy-perceptron}]
For the purpose of analysis, in each iteration of the Perceptron
Phase, we define the output of the algorithm to be either the solution
$x^{(t)}$ or the new update vector $\widehat{u}^{(t)}$. Let $O,O'$
be the outputs of the iteration on $(A,b)$ and $(A',b')$ respectively,
and let $F$ be an arbitrary subset of $\R^{d}$. First, due the the
privacy of the Laplace mechanism,
\begin{align*}
\Pr\left[O=x^{(t)}\right] & =\Pr\left[\widehat{m}^{(t)}\le\nu\right]\le e^{\epsilon}\Pr\left[\widehat{m}^{(t)}{}'\le\nu\right]=e^{\epsilon}\Pr\left[O'=x^{(t)}\right]
\end{align*}
Next, consider the case where the output is an update vector $\widehat{u}^{(t)}$.
If $m^{(t)}<\nu$:
\begin{align*}
\Pr\left[O=\widehat{u}^{(t)}\right] & \le\Pr\left[\widehat{m}^{(t)}>m^{(t)}\right]\le\Pr\left[\Lap\left(\frac{1}{\epsilon}\right)>\frac{1}{\epsilon}\log\frac{2T}{\delta}\right]\le\frac{\delta}{2T}.
\end{align*}
If $m^{(t)}\ge\nu$, then $\sigma^{2}\ge\frac{8\log\frac{4T}{\delta}}{\left(m^{(t)}\right)^{2}\epsilon^{2}}$,
and due to the privacy of the Gaussian mechanism,
\begin{align*}
\Pr\left[O=\widehat{u}^{(t)}\wedge\widehat{u}^{(t)}\in F\right] & \le\frac{\delta}{2T}+e^{\epsilon}\Pr\left[O'=\widehat{u}^{(t)}\wedge\widehat{u}^{(t)}\in F\right].
\end{align*}
\end{proof}
\begin{lem}
\label{lem:privacy-rescaling}Each iteration of the Rescaling Phase
is $(d\epsilon,\frac{\delta}{2000\log\frac{\tau}{\beta}})$-DP.
\end{lem}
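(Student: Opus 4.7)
The plan is to view $\widehat{c}^{(s)}$ as a Gaussian mechanism applied to $c^{(s)}$, and reduce the lemma to a single sensitivity estimate $\|c^{(s)}(A)-c^{(s)}(A')\|_{2}\le 2/\nu$ for neighbouring $A,A'$. Plugging $\Delta=2/\nu$ together with the variance $\theta^{2}=\frac{8}{d^{2}\epsilon^{2}\nu^{2}}\log\frac{4000\log(\tau/\beta)}{\delta}$ chosen in Algorithm~\ref{alg:private-perceptron-epoch} into the Gaussian-mechanism inequality $\theta \ge \frac{\Delta}{\epsilon'}\sqrt{2\ln(2/\delta')}$ directly gives $\epsilon'=d\epsilon$ and $\delta'=\delta/(2000\log(\tau/\beta))$, which is exactly the claim.

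To obtain the sensitivity, first control the weights uniformly. Let $E$ be the event that every Laplace draw in the Perceptron Phase satisfies $|\mathrm{Lap}(1/\epsilon)|\le \frac{1}{\epsilon}\log\frac{2000T\log(1/\beta)}{\delta}$; by the Laplace tail and a union bound, $\Pr[E^{c}]$ is small enough to be absorbed into $\delta'$. On $E$, whenever the algorithm proceeds past iteration $t$, the check $\widehat{m}^{(t)}>\nu$ forces $m^{(t)}\ge \nu$, so every increment $1/m^{(t)}$ added to $\lambda^{(t)}$ is at most $1/\nu$, and consequently $\overline{\lambda}_{i}=\lambda_{i}^{(T)}/T\le 1/\nu$ uniformly in $i$.

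For the sensitivity itself, view the Rescaling Phase as composed on top of the Perceptron Phase and condition on the Perceptron releases $(\widehat{u}^{(t)})_{t=1}^{T}$. Under this conditioning the trajectory $(x^{(t)})$ is identical in the two runs, so $S^{(t)}(A)=\{i:\langle \overline{a}_{i},\overline{x}^{(t)}\rangle \le 0\}$ becomes a deterministic function of $A$. For $A$ versus $A'=A\cup\{a_{n+1}\}$ the counts satisfy $m'^{(t)}\in\{m^{(t)},m^{(t)}+1\}$; on iterations where $n+1\in S'^{(t)}$, each existing $\lambda_{i}^{(t)}$ (for $i\le n$ with $i\in S^{(t)}$) shifts by $\frac{1}{m^{(t)}(m^{(t)}+1)}$ while $\lambda'^{(t)}_{n+1}$ picks up $\frac{1}{m^{(t)}+1}$, giving a per-iteration $\ell_{1}$ shift of $\frac{m^{(t)}}{m^{(t)}(m^{(t)}+1)}+\frac{1}{m^{(t)}+1}=\frac{2}{m^{(t)}+1}\le 2/\nu$. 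Averaging over the $T$ iterations yields $\|\overline{\lambda}(A)-\overline{\lambda}(A')\|_{1}\le 2/\nu$. Since the $\overline{a}_{i}$ are unit vectors and $P^{(s)}$ agrees on every shared index $i\le n$ (membership depends only on $\langle \overline{a}_{i},g^{(s)}\rangle$), one concludes $\|c^{(s)}(A)-c^{(s)}(A')\|_{2}\le \|\overline{\lambda}(A)-\overline{\lambda}(A')\|_{1}\le 2/\nu$, which is the target sensitivity bound.

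The delicate step is that the $\ell_{1}$ sensitivity bound on $\overline{\lambda}$ only holds once the two runs have been coupled on a common Perceptron trajectory; formally this means the lemma is a \emph{conditional} $(d\epsilon,\delta')$-DP statement given the Perceptron releases, to be composed with Lemma~\ref{lem:privacy-perceptron} through the standard adaptive composition rule. One must also fold $\Pr[E^{c}]$ into the additive $\delta'$. Once this bookkeeping is set up, invoking the Gaussian mechanism with sensitivity $2/\nu$ and variance $\theta^{2}$ finishes the argument.
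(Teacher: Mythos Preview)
Your proposal is correct and follows essentially the same propose--test--release structure as the paper: when $m=\min_{t}m^{(t)}<\nu$ the probability of ever entering the Rescaling Phase is charged to the additive $\delta'$ term (the paper does this by a data case split, you phrase it via the Laplace event $E$), and otherwise the Gaussian mechanism with sensitivity $2/\nu$ and variance $\theta^{2}$ gives exactly $(d\epsilon,\delta')$.

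The one genuine difference is the sensitivity calculation. The paper (Claim~\ref{claim:rescaling-sensitivity}) bounds $\|c^{(s)}-c^{(s)}{}'\|_{2}$ directly: it shows $|\overline{\lambda}_{i}-\overline{\lambda}'_{i}|\le\overline{\lambda}_{i}/m$ and then uses Jensen's inequality (exploiting $\sum_{i}\overline{\lambda}_{i}=1$) to turn this into an $\ell_{2}$ bound of $2/m$. Your route is more elementary: you bound the per-iteration $\ell_{1}$ drift in $\lambda$ by $2/(m^{(t)}+1)$, average to get $\|\overline{\lambda}-\overline{\lambda}'\|_{1}\le 2/\nu$, and finish with a triangle inequality over unit vectors. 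Your argument is shorter and avoids Jensen; the paper's gives the slightly sharper $2/m$ (not $2/\nu$) without first assuming $m^{(t)}\ge\nu$, but this extra sharpness is not used.

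One phrasing point: your event $E$ lives on the Laplace randomness of the Perceptron Phase, not on the Rescaling iteration's own randomness, so ``fold $\Pr[E^{c}]$ into $\delta'$'' is cleanest if you rewrite it as the paper does---a case split on whether the data satisfies $m\ge\nu$, with the $m<\nu$ branch handled by bounding $\Pr[\text{Rescaling is reached}]$ over the Laplace draws. This is exactly what you are doing, just stated so that the per-iteration DP guarantee needed for adaptive composition is unambiguous.
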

To show this lemma, first, we show the sensitivity of $c^{(s)}$ in
the following claim, whose proof is deferred to the appendix.
\begin{claim}
\label{claim:rescaling-sensitivity}Let $m=\min_{t\in[T]}m^{(t)}$.
The $\ell_{2}$ sensitivity of $c^{(s)}$ is $\frac{2}{m}$.
\end{claim}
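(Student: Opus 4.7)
The plan is to bound $\|c^{(s)} - c^{(s)\prime}\|_2$ directly under the standard coupling convention for sequential DP analysis, where the public history prior to the rescaling step coincides on the two runs. Since $x^{(t)}$ equals the sum of the already-released perceptron updates $\widehat{u}^{(1)},\ldots,\widehat{u}^{(t-1)}$, it takes the same value on input $A$ and on its neighbor $A'$. WLOG $A'$ has one extra row $a_{n+1}$; then for every $i\in[n]$ the event $i\in S^{(t)}$ is identical on the two runs, $m^{(t)\prime} - m^{(t)} = \mathbf{1}[n+1 \in S^{(t)\prime}] \in\{0,1\}$, and sharing the Gaussian $g^{(s)}$ gives $P^{(s)\prime} \cap [n] = P^{(s)}$.

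Under this setup I split
\[
c^{(s)\prime} - c^{(s)} = \mathbf{1}[n+1 \in P^{(s)\prime}]\,\overline{\lambda}'_{n+1}\,\overline{a}_{n+1} + \sum_{i \in P^{(s)}}\bigl(\overline{\lambda}'_i - \overline{\lambda}_i\bigr)\overline{a}_i,
\]
and since each $\overline{a}_i$ is a unit vector, the triangle inequality reduces the task to bounding $\overline{\lambda}'_{n+1} + \sum_{i \in P^{(s)}}|\overline{\lambda}'_i - \overline{\lambda}_i|$ by $2/m$. The new row contributes $\overline{\lambda}'_{n+1} = \frac{1}{T}\sum_{t}\mathbf{1}[n+1 \in S^{(t)\prime}]/m^{(t)\prime} \le \frac{1}{T}\cdot T \cdot \frac{1}{m} = \frac{1}{m}$, using $m^{(t)\prime} \ge m^{(t)} \ge m$. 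For the remaining terms, expanding $\overline{\lambda}_i = \frac{1}{T}\sum_t \mathbf{1}[i\in S^{(t)}]/m^{(t)}$ and using $|m^{(t)} - m^{(t)\prime}| \le 1$ gives
\[
|\overline{\lambda}'_i - \overline{\lambda}_i| \le \frac{1}{T}\sum_t \frac{\mathbf{1}[i \in S^{(t)}]}{m^{(t)} m^{(t)\prime}};
\]
summing over $i \in P^{(s)}$, swapping the order of summation, and using $|P^{(s)} \cap S^{(t)}| \le |S^{(t)}| = m^{(t)}$, this collapses to $\frac{1}{T}\sum_t 1/m^{(t)\prime} \le 1/m$. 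Combining the two contributions yields the claimed bound of $2/m$.

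The main subtlety to track is the second contribution: even though only a single row is added, the shared denominators $m^{(t)}$ in the definition of $\lambda^{(T)}$ change in every iteration where the new row is violated, so in principle all coordinates of $\overline{\lambda}$ shift. The saving is that these shifts are weighted by $\mathbf{1}[i \in S^{(t)}]/(m^{(t)} m^{(t)\prime})$, and summing the numerator indicator over $i \in S^{(t)}$ produces a factor of $m^{(t)}$ that cancels one denominator, so the accumulated perturbation telescopes to $1/m$ rather than blowing up with $T$ or $n$.
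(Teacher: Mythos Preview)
Your proof is correct and takes a genuinely different route from the paper's. The paper first establishes the per-index bound $|\overline{\lambda}_i-\overline{\lambda}'_i|\le \overline{\lambda}_i/m$, then expands $\|c-c'\|_2^2$ coordinate by coordinate and applies Jensen's inequality with weights $\overline{\lambda}$ (using $\sum_i\overline{\lambda}_i=1$) to collapse the square. You instead apply the triangle inequality at the vector level (using $\|\overline{a}_i\|=1$) to reduce to an $\ell_1$-type bound $\sum_{i\in P^{(s)}}|\overline{\lambda}_i-\overline{\lambda}'_i|\le 1/m$, which you obtain by swapping the order of summation over $i$ and $t$ and cancelling $|P^{(s)}\cap S^{(t)}|\le m^{(t)}$ against one denominator. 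Your argument is more elementary---no coordinate-wise squaring and no Jensen---while the paper's route yields the slightly stronger intermediate fact $|\Delta\overline{\lambda}_i|\le \overline{\lambda}_i/m$ (not used elsewhere).

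One small point: your ``WLOG $A'$ has one extra row'' is not a pure relabeling, because the claim pins $m=\min_t m^{(t)}$ to the input $A$. If $A'$ has one \emph{fewer} row, swapping roles would only give $2/m'$ with $m'\le m$. However, running your same argument in that case directly (now $m^{(t)\prime}\le m^{(t)}$, and one uses $|P^{(s)}\cap S^{(t)\prime}|\le m^{(t)\prime}$ to cancel the other denominator) yields $\frac{1}{T}\sum_t 1/m^{(t)}\le 1/m$, so the bound $2/m$ still holds. The paper handles both directions explicitly; you should note that the second case requires this small tweak rather than literal symmetry.
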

\begin{proof}[Proof of Lemma \ref{lem:privacy-rescaling}]
If $m<\nu$, the Rescaling Phase only happens when for all $t\in[T]$,
$\widehat{m}^{(t)}>\nu$. Hence, for any set of outcomes $F\subseteq\R^{d}$,
by union bound
\begin{align*}
 & \Pr\left[\text{Rescaling happens \ensuremath{\wedge}}\widehat{c}\in F\right]\le\Pr\left[\widehat{m}^{(t)}>\nu,\forall t\in T\right]\le\Pr\left[\widehat{m}^{(t)}>m,\forall t\in T\right]\\
 & \le\sum_{t\in[T]}\Pr\left[\widehat{m}^{(t)}>m^{(t)}\right]\le\sum_{t\in[T]}\Pr\left[\Lap\left(\frac{1}{\epsilon}\right)>\frac{1}{\epsilon}\log\frac{2000T\log\frac{\tau}{\beta}}{\delta}\right]\le\frac{\delta}{2000\log\frac{\tau}{\beta}}.
\end{align*}
Next, consider the case $m\ge\nu$. Since $\theta^{2}\ge\frac{8}{d^{2}\epsilon^{2}\nu^{2}}\log\frac{4000\log\frac{\tau}{\beta}}{\delta}$
by the privacy guarantee of the Gaussian mechanism
\begin{align*}
\Pr\left[\widehat{c}\in F\right] & \le\frac{\delta}{2000\log\frac{\tau}{\beta}}+e^{d\epsilon}\Pr\left[\widehat{c}'\in F\right].
\end{align*}
\end{proof}

\begin{proof}[Proof of Proposition \ref{prop:main-privacy-lemma}]
The algorithm is $(\epsilon',\delta')$-DP, following directly from
advanced composition.
\end{proof}

\subsection{Utility analysis\protect\label{subsec:Utility-analysis}}

To analyze the runtime and utility of Algorithm \ref{alg:private-perceptron}
we let $\B$ be the unit ball and $\P$ be the feasible region defined
by $Ax\ge0$. We will show the following proposition about the guarantee
on the output of Algorithm \ref{alg:private-perceptron}.
\begin{prop}
\label{prop:utility-guarantee}With probability at least $1-5\beta$,
Algorithm \ref{alg:private-perceptron} outputs a solution $x$ that
satisfies all but $O\left(\frac{\sqrt{d}}{\epsilon}\log^{1.5}\frac{d}{\beta\delta}\right)$
constraints.
\end{prop}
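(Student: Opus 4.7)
The plan is to decompose the analysis into two subclaims: (a) whenever $\PPE$ (Algorithm~\ref{alg:private-perceptron-epoch}) declares $x^{(t)}$ a ``solution,'' the number of truly violated constraints $m^{(t)}$ is $O(\nu)$; and (b) with probability $1-O(\beta)$, the outer loop of Algorithm~\ref{alg:private-perceptron} terminates by one of its $\tau$ invocations of $\PPE$ returning a solution rather than reaching the abort branch. Because the rescaling transformations are invertible and the algorithm tracks their product through $B$, the set of original constraints violated by $Bc$ coincides with the set of rescaled constraints violated by $c$, so (a) delivers the quantitative bound stated in the proposition on the final output.

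Subclaim (a) is the easier part. A union bound over the $T\tau$ Laplace samples drawn across all epochs shows that with probability $\ge 1-\beta$ each satisfies $|\Lap(1/\epsilon)|\le \tfrac{1}{\epsilon}\log\tfrac{2T\tau}{\beta}$, and this is absorbed by the deterministic negative shift $\tfrac{1}{\epsilon}\log\tfrac{2000T\log(1/\beta)}{\delta}$ built into $\widehat{m}^{(t)}$. On this event, $\widehat{m}^{(t)}\le\nu$ forces $m^{(t)}\le\nu+O\bigl(\tfrac{1}{\epsilon}\log\tfrac{T\tau}{\beta\delta}\bigr)=O(\nu)$, and conversely whenever $m^{(t)}$ drops below $\nu/2$ the test accepts deterministically.

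Subclaim (b) adapts the non-private Hoberg--Rothvoss rescaling perceptron to the noisy setting. First I condition on the high-probability event that every Gaussian noise vector satisfies $\|\eta^{(t)}\|=O(\sigma\sqrt{d\log(T\tau/\beta)})$ and $\|\gamma^{(s)}\|=O(\theta\sqrt{d\log(\tau/\beta)})$. Plugging in the values of $\sigma,\theta,\nu$ chosen by Algorithm~\ref{alg:private-perceptron-epoch} gives $\sigma\sqrt{d}=o(1)$, much smaller than the unit-scale perceptron updates $\|u^{(t)}\|\le 1$, and $\theta\sqrt{d}=o(1/\sqrt{d})$, much smaller than the acceptance threshold $\tfrac{3}{16\sqrt{\pi d}}$. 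Under this conditioning, a noisy analogue of the Hoberg--Rothvoss perceptron-phase analysis shows that if no $x^{(t)}$ is accepted during the Perceptron Phase, then the weight vector $\overline{\lambda}$ satisfies $\E_{g}\|c^{(s)}\|^{2}=\Omega(1/d)$. A Paley--Zygmund-style lower tail bound then gives $\Pr_{g}[\|c^{(s)}\|\ge \tfrac{1}{4\sqrt{\pi d}}]\ge \mathrm{const}$, and running the $1000\log(\tau/\beta)$ independent Gaussian trials amplifies the per-epoch success probability to $\ge 1-\beta/(2\tau)$; a further union bound over the $\tau$ epochs shows the abort branch is never taken, except on a failure event of probability at most $\beta$. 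Finally, each successful rescaling multiplies $\vol(\P\cap\B)$ by a constant factor (the core geometric lemma of Hoberg--Rothvoss); starting from $\vol(\P\cap\B)\ge\rho_{0}^{d}\vol(\B)$, after $\tau=O(d\log(1/\rho_{0}))$ epochs $\P\cap\B$ contains a ball of constant radius, in which regime the noisy perceptron drives $m^{(t)}$ below $\nu/2$ within $T=\Theta(d^{2})$ updates by the classical perceptron potential argument, and the test accepts.

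The main obstacle is propagating the added Gaussian noise through the Hoberg--Rothvoss analysis without degrading the key bound $\E_{g}\|c^{(s)}\|^{2}=\Omega(1/d)$: the noisy perceptron updates perturb both the iterate $x^{(t)}$ and the accumulated weights $\lambda^{(t)}$, and the Paley--Zygmund lower bound on $\|c^{(s)}\|$ is sensitive to the precise shape of $\overline{\lambda}$. Simultaneously keeping $\sigma$ and $\theta$ small enough for this geometric argument to survive, yet large enough to satisfy the privacy thresholds of Section~\ref{subsec:Privacy-analysis}, is the delicate quantitative balancing that ultimately pins down the polylogarithmic factors hidden in $\nu$.
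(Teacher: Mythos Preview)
There is a real gap in subclaim (b). You show that with constant probability the Gaussian trial produces $\|c^{(s)}\|\ge\Omega(1/\sqrt{d})$, so the acceptance test passes and amplification avoids the abort branch. But the volume-increase lemma (Lemma~\ref{lem:increase-volume}) does \emph{not} say that rescaling along any vector of this norm increases $\vol(\P\cap\B)$; it requires the geometric hypothesis $\tfrac{1}{\|c\|}\max_{x\in\P\cap\B}\langle c,x\rangle\le\tfrac{2}{3\sqrt{d}}$. Your plan never verifies this. The missing ingredient is the bound $\|\overline{\lambda}\,\overline{A}\|=O(1/\sqrt{T})$ (Claims~\ref{claim:solution-length-bound} and~\ref{claim:coefficient-length-bound} in the paper): since $\langle\overline{a}_{i},x\rangle\ge 0$ for $x\in\P$, one has $\langle c^{(s)},x\rangle\le\langle\overline{\lambda}\,\overline{A},x\rangle\le\|\overline{\lambda}\,\overline{A}\|$, and only then does $\|\overline{\lambda}\,\overline{A}\|/\|c^{(s)}\|=O(\sqrt{d}/\sqrt{T})=O(1/\sqrt{d})$ for $T=\Theta(d^{2})$. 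Without this, nothing rules out accepting a $\widehat{c}^{(s)}$ that clears the norm threshold yet fails the geometric hypothesis, and the volume argument collapses.

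You also locate the technical difficulty in the wrong place. The lower bound $\Pr_{g}[\|c^{(s)}\|\ge\tfrac{1}{4\sqrt{\pi d}}]\ge\text{const}$ (Lemma~\ref{lem:rothvoss-6}) needs only $\sum_{i}\overline{\lambda}_{i}=1$ and $\|\overline{a}_{i}\|=1$; it is insensitive to the noise and to the ``shape'' of $\overline{\lambda}$. The actual obstacle is that the noisy updates $x^{(t+1)}=x^{(t)}+u^{(t)}+\eta^{(t)}$ could inflate $\|x^{(t)}\|$ well beyond $O(\sqrt{t})$, destroying the bound on $\|\overline{\lambda}\,\overline{A}\|=\tfrac{1}{T}\|x^{(T)}-\sum_{s}\eta^{(s)}\|$. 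The paper handles this with a martingale argument on the cross terms $\langle x^{(t)},\eta^{(t)}\rangle$, coupling to a truncated sequence to apply Azuma. Finally, your termination mechanism (``after enough rescalings $\P\cap\B$ contains a constant-radius ball, so the perceptron converges'') is both unneeded and unjustified, since large volume does not imply a large inscribed ball; the paper simply uses $\vol(\P\cap\B)\le\vol(\B)$ to cap the number of successful rescalings at $\tau$, forcing a solution to be returned.
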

We outline the proof of Proposition \ref{prop:utility-guarantee}.
First, if in an iteration, Algorithm \ref{alg:private-perceptron}
finds a solution outputted by Algorithm \ref{alg:private-perceptron-epoch},
we show that this solution must be correct with probability $\ge1-\beta$.
\begin{lem}
\label{lem:solution-guarantee}If Algorithm \ref{alg:private-perceptron-epoch}
terminates in the Perceptron Phase and outputs a solution $x$, then
with probability at least $1-\beta$, $x$ satisfies all but $O\left(\frac{\sqrt{d}}{\epsilon}\log^{1.5}\frac{d}{\beta\delta}\right)$
constraints. 
\end{lem}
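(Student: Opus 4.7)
}

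The plan is to argue that the termination test $\widehat{m}^{(t)} \le \nu$ in line 6 of Algorithm \ref{alg:private-perceptron-epoch} cannot fire unless the true count $m^{(t)}$ of violated constraints is itself at most $\nu$ up to a low-order additive term, with high probability over the Laplace noise used across the $T$ iterations of the Perceptron Phase. The shift $-\frac{1}{\epsilon}\log\frac{2000T\log(1/\beta)}{\delta}$ inside $\widehat{m}^{(t)}$ is chosen precisely so that passing the test forces $m^{(t)}$ to be small; our job is to quantify this with the tail bound for $\Lap(1/\epsilon)$.

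First, recall the standard tail bound $\Pr[\Lap(1/\epsilon) \le -\frac{1}{\epsilon}\log(T/\beta)] \le \beta/T$. Applying this in each of the $T = \Theta(d^2)$ iterations and taking a union bound, we obtain that with probability at least $1 - \beta$ the noise $\eta_t \sim \Lap(1/\epsilon)$ added to form $\widehat{m}^{(t)}$ satisfies $\eta_t \ge -\frac{1}{\epsilon}\log\frac{T}{\beta}$ simultaneously for every $t \in [T]$. Condition on this event for the rest of the argument.

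Now suppose the algorithm terminates in the Perceptron Phase and returns $x = x^{(t^*)}$ for some iteration $t^*$. By definition of the termination condition, $\widehat{m}^{(t^*)} \le \nu$, which rearranges to
\begin{align*}
m^{(t^*)} \;\le\; \nu \;+\; \frac{1}{\epsilon}\log\frac{2000T\log(1/\beta)}{\delta} \;-\; \eta_{t^*} \;\le\; \nu \;+\; \frac{1}{\epsilon}\log\frac{2000T\log(1/\beta)}{\delta} \;+\; \frac{1}{\epsilon}\log\frac{T}{\beta}.
\end{align*}
Substituting $\nu = \Theta\!\bigl(\frac{\sqrt{d}}{\epsilon}\log^{1.5}\frac{T\tau}{\beta\delta}\bigr)$ and $T = \Theta(d^2)$, $\tau = O(d\log(1/\rho_0))$, both of the $\frac{1}{\epsilon}\log(\cdot)$ terms are dominated by $\nu$, so $m^{(t^*)} = O(\nu) = O\!\bigl(\frac{\sqrt{d}}{\epsilon}\log^{1.5}\frac{d}{\beta\delta}\bigr)$ as desired. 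Since $m^{(t^*)}$ is exactly the number of constraints violated by $x^{(t^*)}$, this proves the lemma.

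The argument is essentially just a Laplace tail bound plus a union bound, so there is no real technical obstacle; the only subtlety is ensuring that the shift inside $\widehat{m}^{(t)}$ and the threshold $\nu$ are consistently set so that the leftover additive $\frac{1}{\epsilon}\log(T/\beta\delta)$ terms are absorbed into $O(\nu)$. This is handled by the choice of parameters in line 3 of Algorithm \ref{alg:private-perceptron-epoch}, and the claimed asymptotic bound then follows directly.
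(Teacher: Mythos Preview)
Your proposal is correct and follows essentially the same approach as the paper: rearrange the termination condition $\widehat{m}^{(t)}\le\nu$ into an upper bound on $m^{(t)}$ and control the Laplace noise by a tail bound. The only difference is that you union-bound over all $T$ iterations (paying $\frac{1}{\epsilon}\log\frac{T}{\beta}$), whereas the paper applies a single Laplace tail bound with failure probability $\beta$ at the termination step; your version is arguably cleaner since the termination index $t^*$ is itself random, and in any case the extra $\log T$ term is absorbed into $O(\nu)$.
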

If Algorithm \ref{alg:private-perceptron} finds a rescaling vector
$c$ by the output of Algorithm \ref{alg:private-perceptron-epoch},
we show that the volume of the feasible region inside the unit ball
is increased by a constant factor with high probability. To start
with, assuming the initial margin parameter is at least $\rho_{0}$,
\cite{soheili2013deterministic} give a lower bound on the volume
of the initial feasible region inside the unit ball:
\begin{lem}[Lemma 3 \cite{soheili2013deterministic}]
 \label{lem:initital-volume}Suppose $\max_{x\in\P\cap\B}\min_{i}\left\langle \overline{a_{i}},x\right\rangle \ge\rho_{0}$
then $\vol\left(\mathcal{P}\cap\mathcal{B}\right)=\Omega\left(\rho_{0}^{d}\right)\vol(\B)$.
\end{lem}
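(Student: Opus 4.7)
The plan is to exhibit an explicit Euclidean ball of radius $\Theta(\rho_0)$ that is contained in $\P \cap \B$. Since the volume of a $d$-dimensional ball scales as the $d$-th power of its radius, this will immediately yield $\vol(\P \cap \B) = \Omega(\rho_0^d)\vol(\B)$.

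First, I would pick a witness $x^\star \in \B$ with $\min_i \langle \overline{a}_i, x^\star \rangle \ge \rho_0$ guaranteed by the hypothesis, and note that we may assume $\|x^\star\|_2 = 1$. Indeed, if $\|x^\star\| < 1$ then $x^\star/\|x^\star\|$ still lies in $\B$ and, because $\P$ is a cone (the constraints $Ax \ge 0$ are homogeneous), we have $\langle \overline{a}_i, x^\star/\|x^\star\|\rangle = \langle \overline{a}_i, x^\star\rangle/\|x^\star\| \ge \rho_0$ for every $i$. So WLOG the witness has unit norm.

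Next, set $c = 1/(1+\rho_0)$ and $r = \rho_0/(1+\rho_0)$, and consider the ball $B(c x^\star, r)$. For any $y = c x^\star + w$ with $\|w\|_2 \le r$, the triangle inequality gives $\|y\|_2 \le c\|x^\star\|_2 + r = c + r = 1$, so $y \in \B$. Also, for each row $a_i$, Cauchy--Schwarz and the unit-norm witness yield
\[
\langle \overline{a}_i, y\rangle = c \langle \overline{a}_i, x^\star\rangle + \langle \overline{a}_i, w\rangle \ge c\rho_0 - \|w\|_2 \ge c\rho_0 - r = 0,
\]
so $y \in \P$. Hence $B(cx^\star, r) \subseteq \P \cap \B$.

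Finally, the volume of a ball of radius $r$ is $(r/1)^d$ times the volume of the unit ball $\B$, so
\[
\vol(\P \cap \B) \ge \vol(B(cx^\star, r)) = \left(\frac{\rho_0}{1+\rho_0}\right)^d \vol(\B) = \Omega(\rho_0^d)\vol(\B),
\]
where the last step uses $\rho_0 \le 1$ (otherwise the feasibility problem is trivial). The main step is producing the correct choice of center $cx^\star$ and radius $r$, balancing containment in $\B$ (requiring $c + r \le 1$) against the margin requirement (forcing $r \le c\rho_0$); the rest is routine. There is no real obstacle beyond this simple geometric bookkeeping.
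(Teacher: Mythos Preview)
Your argument is correct. The paper does not actually prove this lemma; it merely quotes it as Lemma~3 from \cite{soheili2013deterministic}, so there is no in-paper proof to compare against. Your approach---normalizing the witness, then fitting the ball $B\!\left(\frac{1}{1+\rho_0}x^\star,\ \frac{\rho_0}{1+\rho_0}\right)$ inside $\P\cap\B$ and reading off the volume---is the standard way to establish this bound and matches what one finds in the cited source.
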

Note that the rescaling operation $A\gets A\left(I-\frac{1}{2}\overline{c}\cdot\overline{c}^{\top}\right)$
is equivalent to a linear map $F:\R^{d}\to\R^{d}$ such that, $F_{c}(c)=2c$
and $F_{c}\left(x\right)=x$ for all $x\bot c$. \cite{hoberg2017improved}
show the following lemma.
\begin{lem}[Lemma 4 \cite{soheili2013deterministic}]
 \label{lem:increase-volume}Suppose that $c$ satisfies $\frac{1}{\left\Vert c\right\Vert }\max_{x\in{\cal {\cal P}}\cap\B}\left\langle c,x\right\rangle \le\frac{2}{3\sqrt{d}}$,
then $\vol\left(F(\P)\cap\B\right)\ge1.02\cdot\vol\left(\P\cap\B\right)$. 
\end{lem}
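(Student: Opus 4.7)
The plan is to reduce the volume inequality to a one-dimensional calibration via a change of variables, and then exploit the fact that $\P$ is a convex cone. First I would analyze the map $F$: since $Fc=2c$ and $Fy=y$ for $y\perp c$, one has $F=I+\overline{c}\,\overline{c}^{\top}$ with $\det F=2$. Applying the substitution $x=F(y)$ to the integral defining $\vol(F(\P)\cap\B)$ gives
\begin{align*}
\vol\bigl(F(\P)\cap\B\bigr)=2\,\vol\bigl(\P\cap F^{-1}(\B)\bigr),
\end{align*}
where $F^{-1}(\B)=\{y:\|F(y)\|^{2}\le1\}=\{y:\|y\|^{2}+3\langle\overline{c},y\rangle^{2}\le1\}$ (expanding $\|y+\langle\overline{c},y\rangle\overline{c}\|^{2}$). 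It therefore suffices to prove $\vol(\P\cap F^{-1}(\B))\ge0.51\,\vol(\P\cap\B)$.

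Next I would propagate the hypothesis from the unit sphere into the cone by homogeneity. The hypothesis reads $\langle\overline{c},x\rangle\le\frac{2}{3\sqrt{d}}$ for every $x\in\P\cap\B$, and in particular for $x\in\P\cap\partial\B$. Since $\P$ is a cone, any $y\in\P$ with $\|y\|=r\in(0,1]$ can be written as $r\cdot(y/r)$ with $y/r\in\P\cap\partial\B$, so $\langle\overline{c},y\rangle\le r\cdot\frac{2}{3\sqrt{d}}$.

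I would then identify a shrunken ball that is contained in $\P\cap F^{-1}(\B)$. Set $r_{0}=(1+\tfrac{4}{3d})^{-1/2}$; for any $y\in\P\cap r_{0}\B$, the previous bound yields $3\langle\overline{c},y\rangle^{2}\le\tfrac{4}{3d}\|y\|^{2}$, and hence
\begin{align*}
\|y\|^{2}+3\langle\overline{c},y\rangle^{2}\le\|y\|^{2}\Bigl(1+\tfrac{4}{3d}\Bigr)\le r_{0}^{2}\Bigl(1+\tfrac{4}{3d}\Bigr)=1,
\end{align*}
which shows $\P\cap r_{0}\B\subseteq\P\cap F^{-1}(\B)$.

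Finally I would invoke the cone-scaling identity: because $\P$ is a convex cone with apex at the origin, integration in polar coordinates gives $\vol(\P\cap r\B)=r^{d}\vol(\P\cap\B)$ for every $r\in(0,1]$. Combined with the previous inclusion,
\begin{align*}
\vol\bigl(F(\P)\cap\B\bigr)\ge2r_{0}^{d}\,\vol(\P\cap\B)=2\Bigl(1+\tfrac{4}{3d}\Bigr)^{-d/2}\vol(\P\cap\B),
\end{align*}
and the bound $\log(1+t)\le t$ yields $(1+\tfrac{4}{3d})^{-d/2}\ge e^{-2/3}>0.51$, producing the factor $1.02$. The main obstacle is not depth but quantitative calibration: one must pick $r_{0}$ so that simultaneously $r_{0}^{2}(1+4/(3d))\le1$ (to guarantee the containment) and $2r_{0}^{d}>1.02$ (to guarantee a universal volume gain), and the numerical choice $\frac{2}{3\sqrt{d}}$ in the hypothesis is essentially tight for both inequalities to hold with a constant strictly above one.
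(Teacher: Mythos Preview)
The paper does not give its own proof of this lemma; it is quoted from Soheili--Pe\~{n}a (attributed also to Hoberg--Rothvoss), so there is no in-paper argument to compare against. Your approach is the standard one for this type of rescaling lemma, and the mechanics are all correct: the identification $F=I+\overline{c}\,\overline{c}^{\top}$ with $\det F=2$, the change of variables giving $\vol(F(\P)\cap\B)=2\,\vol(\P\cap F^{-1}(\B))$, the description $F^{-1}(\B)=\{y:\|y\|^{2}+3\langle\overline{c},y\rangle^{2}\le1\}$, the cone-scaling identity $\vol(\P\cap r\B)=r^{d}\vol(\P\cap\B)$, and the numerics $2(1+4/(3d))^{-d/2}\ge 2e^{-2/3}>1.02$ are each fine.

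There is, however, one genuine gap. You pass from the one-sided bound $\langle\overline{c},y\rangle\le\frac{2}{3\sqrt{d}}\|y\|$ (which is all the stated hypothesis yields) to the squared inequality $3\langle\overline{c},y\rangle^{2}\le\frac{4}{3d}\|y\|^{2}$; this step needs the two-sided bound $|\langle\overline{c},y\rangle|\le\frac{2}{3\sqrt{d}}\|y\|$. The hypothesis does not supply the lower bound, and the lemma as literally stated is in fact false without it: with $d=2$, $\P=\{x:x_{2}\ge0\}$ and $c=-e_{2}$ one has $\max_{x\in\P\cap\B}\langle\overline{c},x\rangle=0$, yet $F(\P)=\P$ and the volume does not increase. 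In the paper's actual application the missing lower bound \emph{is} available: the noiseless part $\sum_{i\in P}\overline{\lambda}_{i}\overline{a}_{i}$ pairs nonnegatively with every $x\in\P$, and the Gaussian perturbation satisfies $\|\gamma\|\le\frac{1}{3}\|\widehat{c}\|$, so $\langle\overline{\widehat{c}},x\rangle\ge-\frac{1}{3\sqrt{d}}$ on $\P\cap\B$. Thus the defect lies in the transcription of the hypothesis rather than in your method; under the two-sided bound your proof is complete.
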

Next, we show that the rescaling vector $c$ outputted by Algorithm
\ref{alg:private-perceptron-epoch} satisfies the condition of Lemma
\ref{lem:increase-volume} with high probability.
\begin{lem}
\label{lem:rescaling-gaurantee}If Algorithm \ref{alg:private-perceptron-epoch}
does not return a solution, then with probability at least $1-\frac{4\beta}{\tau}$,
it outputs a rescaling vector $c$ that satisfies $\frac{1}{\left\Vert c\right\Vert }\max_{x\in{\cal {\cal P}}\cap\B}\left\langle c,x\right\rangle \le\frac{2}{3\sqrt{d}}$.
\end{lem}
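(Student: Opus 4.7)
The plan is to split the argument into a non-private part using Hoberg--Rothvoss's rescaling guarantee and a noise-robustness part controlling $\gamma^{(s)}$. For a single iteration $s$, I aim to show (i) the noiseless vector $c^{(s)} = \sum_{i \in P^{(s)}} \overline{\lambda}_i \overline{a}_i$ satisfies, with some absolute constant probability $p_0 > 0$ over $g^{(s)}$, both a norm lower bound $\|c^{(s)}\| \geq \frac{1}{4\sqrt{\pi d}}$ and the inner-product bound $\max_{x \in \P \cap \B}\langle c^{(s)},x\rangle \leq \frac{1}{2\sqrt{d}}\|c^{(s)}\|$; and (ii) the perturbation satisfies $\|\gamma^{(s)}\| \leq \frac{1}{16\sqrt{\pi d}}$ with high probability. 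The two numerical thresholds in the algorithm---the $\frac{3}{16\sqrt{\pi d}}$ cutoff on line 14 and the $\frac{2}{3\sqrt{d}}$ target of the lemma---are calibrated for exactly these values.

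For step (i), I would invoke an adaptation of the Hoberg--Rothvoss analysis applied to the weight vector $\overline{\lambda}$ produced after $T=\Theta(d^{2})$ noisy perceptron updates. Some care is needed because the perceptron uses $\widehat{u}^{(t)} = u^{(t)} + \eta^{(t)}$ rather than exact averages, but by construction $\sigma = O(1/\mathrm{poly}(d))$, so the accumulated noise is a subdominant correction to the weight dynamics and HR's conclusion transfers. For step (ii), standard $\chi^{2}$ tail bounds give $\|\gamma^{(s)}\| = O(\theta\sqrt{d\log(1/\beta')})$ with probability $\geq 1-\beta'$; substituting $\theta^{2}$ from line 3 and union-bounding over the $1000\log(\tau/\beta)$ inner iterations yields total noise-failure probability at most $\beta/\tau$.

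Conditioned on both events, the triangle inequality gives $\|\widehat{c}^{(s)}\| \geq \|c^{(s)}\|-\|\gamma^{(s)}\| \geq \frac{3}{16\sqrt{\pi d}}$, so the algorithm returns, and for any $x \in \P \cap \B$,
\[
\frac{\langle \widehat{c}^{(s)},x\rangle}{\|\widehat{c}^{(s)}\|} \leq \frac{\tfrac{1}{2\sqrt{d}}\|c^{(s)}\|+\|\gamma^{(s)}\|}{\|c^{(s)}\|-\|\gamma^{(s)}\|} \leq \frac{2}{3\sqrt{d}},
\]
a direct calculation from the two bounds above. Amplification is then immediate: the $g^{(s)}$'s are i.i.d., so the probability that no $s$ succeeds in step (i) is at most $(1-p_{0})^{1000\log(\tau/\beta)} \leq \beta/\tau$ for the absolute constant $p_{0}$. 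A union bound with the noise event and the event that the Perceptron Phase did not already terminate yields the claimed $1-4\beta/\tau$ bound. The main obstacle I anticipate is step (i): reproving the HR guarantee on top of noisy perceptron updates requires either citing a sufficiently general version or redoing their argument while tracking how the $\eta^{(t)}$'s affect the growth of $\|\lambda^{(t)}\|_{\infty}$ versus $\|\lambda^{(t)}\|_{1}$, which is the precise quantity controlling the Gaussian integral behind the HR inner-product bound.
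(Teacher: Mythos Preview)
Your high-level decomposition (noiseless HR analysis + noise control on $\gamma^{(s)}$ + amplification over $s$) matches the paper's, but you have misidentified where the inner-product half of step~(i) comes from, and this is the actual content of the lemma.

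Hoberg--Rothvoss (Lemma~\ref{lem:rothvoss-6} in the paper) gives \emph{only} the norm lower bound $\|c^{(s)}\|\ge\frac{1}{4\sqrt{\pi d}}$; it says nothing about $\max_{x\in\P\cap\B}\langle c^{(s)},x\rangle$, and no ratio $\|\lambda\|_\infty/\|\lambda\|_1$ enters their Gaussian argument. The inner-product bound is obtained by an entirely separate, deterministic route: for $x\in\P$ every term $\overline{\lambda}_i\langle\overline{a}_i,x\rangle$ is nonnegative, so $\langle c^{(s)},x\rangle\le\langle\sum_{i}\overline{\lambda}_i\overline{a}_i,x\rangle\le\|\overline{\lambda}\,\overline{A}\|$ regardless of the random set $P^{(s)}$. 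The real work is then to bound $\|\overline{\lambda}\,\overline{A}\|$. This is where the noisy perceptron matters, and the mechanism is the telescoping identity $T\,\overline{\lambda}\,\overline{A}=x^{(T)}-\sum_{s<T}\eta^{(s)}$ together with a martingale argument showing $\|x^{(t)}\|\le 10\sqrt{t}$ with high probability (Claims~\ref{claim:solution-length-bound} and~\ref{claim:coefficient-length-bound}). That yields $\|\overline{\lambda}\,\overline{A}\|\le 11/\sqrt{T}$, which with $T=\Theta(d^2)$ gives the needed $O(1/\sqrt{d})$ numerator. Your proposal does not mention this step; the ``growth of $\|\lambda^{(t)}\|_\infty$ versus $\|\lambda^{(t)}\|_1$'' is a red herring.

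There is also a numerical slip: with $\|\gamma^{(s)}\|\le\frac{1}{16\sqrt{\pi d}}$ your displayed inequality fails (the $\|\gamma\|$ term alone contributes $\Theta(1)$, not $O(1/\sqrt{d})$). The paper uses the sharper $\|\gamma^{(s)}\|\le\frac{1}{16\sqrt{\pi}\,d}$, which the choice of $\theta$ in line~3 supports, and which is what makes the arithmetic close to $\frac{2}{3\sqrt{d}}$.
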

Equipped with Lemma \ref{lem:solution-guarantee} - \ref{lem:rescaling-gaurantee},
we are now ready prove Proposition \ref{prop:utility-guarantee}.
\begin{proof}[Proof of Proposition \ref{prop:utility-guarantee}]
. The algorithm fails if either it outputs a solution that does not
satisfy more than $\Omega\left(\frac{\sqrt{d}}{\epsilon}\log^{1.5}\frac{d}{\beta\delta}\right)$
constraints, or it fails to output a rescaling vector that satisfies
the condition of Lemma \ref{lem:increase-volume}. This happens with
probability at most $\beta+\tau\cdot\frac{4\beta}{\tau}=5\beta$.

Otherwise, in each iteration, either the algorithm outputs a satisfactory
solution, or the volume of the feasible region inside the unit ball
is increased by a factor at least $1.02$, by Lemma \ref{lem:increase-volume}.
Therefore, by Lemma \ref{lem:initital-volume}, the algorithm stops
after $O\left(\log\frac{1}{\rho_{0}^{d}}\right)=O\left(d\log\frac{1}{\rho_{0}}\right)$.
\end{proof}
The remaining work is to prove Lemmas \ref{lem:solution-guarantee}
and \ref{lem:rescaling-gaurantee}.

\begin{proof}[Proof of Lemma \ref{lem:solution-guarantee}]
If Algorithm \ref{alg:private-perceptron-epoch} terminates in iteration
$t$ of the Perceptron Phase, we have $\widehat{m}^{(t)}\le\nu$ where
$\nu=O\left(\frac{\sqrt{d}}{\epsilon}\log^{1.5}\frac{d}{\beta\delta}\right)$.
That is
\begin{align*}
m^{(t)} & \le\nu-\Lap\left(\frac{1}{\epsilon}\right)+\frac{1}{\epsilon}\log\frac{2000T\log\frac{1}{\beta}}{\delta}
\end{align*}
Here, $m^{(t)}$ is the number of constraints not satisfied by the
solution, and we have
\begin{align*}
\Pr\left[m^{(t)}\ge\nu+\frac{\log\frac{1}{\beta}}{\epsilon}+\frac{1}{\epsilon}\log\frac{2000T\log\frac{1}{\beta}}{\delta}\right] & \le\Pr\left[\Lap\left(\frac{1}{\epsilon}\right)\le-\frac{\log\frac{1}{\beta}}{\epsilon}\right]\le\beta.
\end{align*}
\end{proof}

To show Lemma \ref{lem:rescaling-gaurantee} we start with the following
claim:
\begin{claim}
\label{claim:solution-length-bound}$\Pr\left[\left\Vert x^{(t)}\right\Vert \le10\sqrt{t},\ \forall t\le T\right]\ge1-\frac{2\beta}{\tau}$.
\end{claim}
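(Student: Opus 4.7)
The plan is to derive a telescoping energy bound on $\|x^{(t)}\|^2$ and then control the two stochastic terms that appear. Since $S^{(t)}=\{i:\langle\overline{a}_i,\overline{x}^{(t)}\rangle\le 0\}$ and the $\overline{a}_i$ are unit vectors, the averaged direction $u^{(t)}=\frac{1}{m^{(t)}}\sum_{i\in S^{(t)}}\overline{a}_i$ satisfies $\|u^{(t)}\|\le 1$ and $\langle u^{(t)},x^{(t)}\rangle\le 0$. Expanding the update $x^{(t+1)}=x^{(t)}+u^{(t)}+\eta^{(t)}$ and using these two facts, together with $\|u^{(t)}+\eta^{(t)}\|^2\le 2\|u^{(t)}\|^2+2\|\eta^{(t)}\|^2$, yields
\[
\|x^{(t+1)}\|^2 \;\le\; \|x^{(t)}\|^2+2\langle x^{(t)},\eta^{(t)}\rangle+2+2\|\eta^{(t)}\|^2,
\]
and telescoping gives $\|x^{(t)}\|^2\le 2(t-1)+2M_{t-1}+2N_{t-1}$, where $M_t:=\sum_{s=1}^{t}\langle x^{(s)},\eta^{(s)}\rangle$ and $N_t:=\sum_{s=1}^{t}\|\eta^{(s)}\|^2$. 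It thus suffices to show that, with probability at least $1-2\beta/\tau$, both $|M_{t-1}|$ and $N_{t-1}$ are at most $t$ for every $t\le T$, since then $\|x^{(t)}\|^2\le 6t\le 100t$.

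\textbf{Controlling $N_t$.} Each $\|\eta^{(s)}\|^2$ is $\sigma^2$ times a chi-squared random variable with $d$ degrees of freedom, so by a standard Laurent--Massart tail bound and a union bound over $s\le T$, with probability $\ge 1-\beta/\tau$ one has $\|\eta^{(s)}\|^2=O(\sigma^2(d+\log(T\tau/\beta)))$ uniformly in $s$. Substituting $\sigma^2=\Theta(\log(T/\delta)/(\nu^2\epsilon^2))$ and $\nu^2\epsilon^2=\Theta(d\log^3(T\tau/\beta\delta))$ from the algorithm shows that $\sigma^2(d+\log(T\tau/\beta))=O(1/\log^2(T\tau/\beta\delta))\ll 1$, so $N_{t-1}\le t$ for every $t\le T$ on this event.

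\textbf{Controlling $M_t$ (the main obstacle).} The difficulty is that the conditional variance of the increment $\langle x^{(s)},\eta^{(s)}\rangle$ equals $\sigma^2\|x^{(s)}\|^2$, which depends on the very quantity we are trying to bound. I resolve this circularity with a stopping-time argument: set $\tau^\star:=\inf\{s\le T:\|x^{(s)}\|>10\sqrt{s}\}$ (and $\tau^\star=T+1$ otherwise), and consider the stopped martingale $\tilde M_t:=M_{t\wedge\tau^\star}$. By construction, each increment of $\tilde M$ is conditionally centered Gaussian with variance at most $100 s\sigma^2$, so the predictable quadratic variation satisfies $\langle\tilde M\rangle_t\le 50 t^2\sigma^2$. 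The sub-Gaussian Azuma--Hoeffding inequality then gives $\Pr[|\tilde M_t|>t]\le 2\exp(-1/(100\sigma^2))$; since $1/\sigma^2=\Omega(d\log^2(T\tau/\beta\delta))$, a union bound over $t\le T=\Theta(d^2)$ yields $\max_{t\le T}|\tilde M_t|\le t$ with probability $\ge 1-\beta/\tau$.

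\textbf{Closing the induction.} On the intersection of the two good events, which has probability $\ge 1-2\beta/\tau$, I claim $\tau^\star=T+1$. Otherwise, letting $t=\tau^\star\le T$, one has $\|x^{(s)}\|\le 10\sqrt{s}$ for every $s<t$, so $\tilde M_{t-1}=M_{t-1}$, and the telescoping bound yields $\|x^{(t)}\|^2\le 2(t-1)+2t+2t\le 100 t$, contradicting $\|x^{(t)}\|>10\sqrt{t}$. Hence $\|x^{(t)}\|\le 10\sqrt{t}$ for all $t\le T$ on this event, which is precisely the claim.
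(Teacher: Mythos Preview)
Your approach is essentially the same as the paper's: both break the circular dependence between $\|x^{(t)}\|$ and the martingale increments $\langle x^{(t)},\eta^{(t)}\rangle$ via a truncation/stopping device, apply a concentration inequality to the truncated process, and close with an induction showing the truncation never actually kicks in. The paper defines the truncated increments explicitly as $Y^{(t)}=\mathbf{1}[\cap_{s\le t} M_s\text{ and }\|x^{(t)}\|\le 10\sqrt{t}]\,\langle x^{(t)},\eta^{(t)}\rangle$ and invokes bounded-difference Azuma (which is why it additionally conditions on the events $M_t$ controlling $\|\eta^{(t)}\|$ and $\langle u^{(t)},\eta^{(t)}\rangle$); you use a stopped martingale together with sub-Gaussian Azuma, which is a cosmetic variant of the same idea.

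There is, however, an indexing slip you should correct. With $\tilde M_t=M_{t\wedge\tau^\star}$, the increment at time $s$ is $\mathbf{1}[\tau^\star\ge s]\,\langle x^{(s)},\eta^{(s)}\rangle$, and the event $\{\tau^\star\ge s\}$ includes $\{\tau^\star=s\}$, on which $\|x^{(s)}\|>10\sqrt{s}$ by definition; hence your claimed conditional variance bound $100s\sigma^2$ fails precisely at $s=\tau^\star$, and the Azuma bound on $\tilde M_t$ is not justified as written. The fix is immediate because $\tau^\star$ is predictable (since $x^{(s)}$ is $\mathcal{F}_{s-1}$-measurable): stop one step earlier, i.e.\ work with $\tilde M_t:=\sum_{s\le t}\mathbf{1}[\tau^\star>s]\,\langle x^{(s)},\eta^{(s)}\rangle$ (equivalently $M_{t\wedge(\tau^\star-1)}$). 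The indicator is then $\mathcal{F}_{s-1}$-measurable, the variance bound holds for every increment, and your closing step is unaffected since at $t=\tau^\star$ you only ever invoke $\tilde M_{t-1}=M_{t-1}$.
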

To show Claim \ref{claim:solution-length-bound}, we use the following
facts about Gaussian random variables.
\begin{fact}
\label{fact}If $X\sim N(0,\sigma^{2})$ then for $a\ge0$, $\Pr\left[X\ge\sigma a\right]\le\exp\left(-a^{2}/2\right)$.
If $X\sim N(0,\sigma^{2}I)\in\R^{d}$ then for $u\in\R^{d}$, $\left\langle u,X\right\rangle $
follows $N(0,\sigma^{2}\left\Vert u\right\Vert ^{2})$ and $\frac{1}{\sigma^{2}}\left\Vert X\right\Vert ^{2}$
follows $\chi$-squared distribution with $d$ degrees of freedom.
Furthermore, for $a\ge0$, $\Pr\left[\frac{1}{\sigma^{2}}\left\Vert X\right\Vert ^{2}\ge d+2\sqrt{da}+2a\right]\le\exp\left(-a\right)$.
\end{fact}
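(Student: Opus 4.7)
The four assertions are the standard Gaussian/$\chi^2$ tail bounds, so my plan is to cite the classical Chernoff/MGF route for each, handling the four pieces in order of increasing difficulty.

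For the first statement, I would start from the moment generating function of a standard normal, $\E[e^{tZ}]=e^{t^2/2}$, apply Markov's inequality to $e^{tX/\sigma}$ to get $\Pr[X\ge \sigma a]=\Pr[Z\ge a]\le e^{-ta+t^2/2}$, and then minimize over $t\ge 0$; the minimizer $t=a$ yields $\exp(-a^2/2)$. The second statement is just the fact that a linear combination of independent centered Gaussians is centered Gaussian: writing $X=(X_1,\dots,X_d)$ with $X_i$ i.i.d.\ $N(0,\sigma^2)$, one has $\langle u,X\rangle=\sum_i u_iX_i$, and either by MGF computation ($\E[e^{t\sum u_iX_i}]=\prod \E[e^{tu_iX_i}]=e^{t^2\sigma^2\|u\|^2/2}$) or by the standard convolution identity for Gaussians we conclude $\langle u,X\rangle\sim N(0,\sigma^2\|u\|^2)$. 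The third statement is essentially the \emph{definition} of $\chi^2_d$: since the coordinates $X_i/\sigma$ are i.i.d.\ $N(0,1)$, the quantity $\|X\|^2/\sigma^2=\sum_i(X_i/\sigma)^2$ is a sum of $d$ independent squared standard normals, hence distributed as $\chi^2_d$.

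The fourth statement is the Laurent--Massart chi-squared tail bound, and this is the one I expect to be the main technical step. I would let $Y=\|X\|^2/\sigma^2$ and compute its MGF explicitly: for $t\in(0,1/2)$,
\begin{align*}
\E[e^{t(Y-d)}]=\bigl(e^{-t}(1-2t)^{-1/2}\bigr)^{d},
\end{align*}
so $\log\E[e^{t(Y-d)}]=(d/2)\bigl(-2t-\log(1-2t)\bigr)$. The key algebraic lemma is the Taylor-series bound
\begin{align*}
-2t-\log(1-2t)=\sum_{k\ge 2}\frac{(2t)^k}{k}\le \frac{1}{2}\sum_{k\ge 2}(2t)^k=\frac{2t^2}{1-2t},
\end{align*}
which gives $\log\E[e^{t(Y-d)}]\le dt^2/(1-2t)$. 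Applying Markov to $e^{t(Y-d)}$ yields $\Pr[Y\ge d+s]\le \exp\!\bigl(-ts+dt^2/(1-2t)\bigr)$ for every $t\in(0,1/2)$.

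Finally, to obtain the stated bound with $s=2\sqrt{da}+2a$, I would plug in the guess $t=\sqrt{a/d}/(1+2\sqrt{a/d})$, which satisfies $1-2t=1/(1+2\sqrt{a/d})$. A direct calculation gives $dt^2/(1-2t)=a/(1+2\sqrt{a/d})$ and $ts=2a(1+\sqrt{a/d})/(1+2\sqrt{a/d})$, so the exponent collapses to exactly $-a$, yielding $\Pr[Y\ge d+2\sqrt{da}+2a]\le e^{-a}$. The only delicate part of the argument is guessing this specific $t$; everything else is straightforward MGF and Taylor-series manipulation.
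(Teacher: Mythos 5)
Your proof is correct. Note that the paper states this Fact without any proof, treating all four assertions as standard results (the last one being the Laurent--Massart $\chi^{2}$ tail bound), so there is no in-paper argument to compare against; your Chernoff/MGF derivation is the canonical route. The details check out: writing $r=\sqrt{a/d}$, your choice $t=r/(1+2r)$ gives $1-2t=1/(1+2r)$, hence $dt^{2}/(1-2t)=a/(1+2r)$ and $ts=2a(1+r)/(1+2r)$, so the exponent is exactly $-a$; the Taylor bound $-2t-\log(1-2t)\le 2t^{2}/(1-2t)$ is valid on $t\in(0,1/2)$, and the degenerate case $a=0$ holds trivially.
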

\begin{proof}[Proof of Claim \ref{claim:solution-length-bound}]
 We give an outline of this proof. First, notice that, by the update
of $x^{(t)}$, we have with high probability (proof will follow):
\begin{align*}
\text{\ensuremath{\left\Vert x^{(t+1)}\right\Vert ^{2}}} & \le\left\Vert x^{(t)}\right\Vert ^{2}+2\left\langle x^{(t)},\eta^{(t)}\right\rangle +3.
\end{align*}
Then, the main task is to bound $\sum_{s}\left\langle x^{(s)},\eta^{(s)}\right\rangle $.
However, $x^{(s)}$ is not a bounded variable, so we cannot directly
apply concentration inequalities. Instead, we will define a bounded
sequence $(Y^{(s)})$ that behaves like $(\left\langle x^{(s)},\eta^{(s)}\right\rangle )$
and proceed to bound $\sum_{s=1}^{t}Y^{(s)}$. Finally, we will show
that with high probability $(Y^{(s)})$ behaves like $(\left\langle x^{(s)},\eta^{(s)}\right\rangle )$
and from there we can bound $\left\Vert x^{(t)}\right\Vert $.

For each iteration $t$ of Algorithm \ref{alg:private-perceptron-epoch},
consider the following event, called $M_{t}$ that both of the following
happen: 1)$\left\langle u^{(t)},\eta^{(t)}\right\rangle \le\sigma\sqrt{2\log\frac{2T\tau}{\beta}}\le\frac{1}{2}$,
and 2)$\left\Vert \eta^{(t)}\right\Vert _{2}^{2}\le5\sigma^{2}d\log\frac{2T\tau}{\beta}\le\frac{1}{\log\frac{2T}{\beta}}\le1$.
Since $u^{(t)}$ is the average of unit length vectors, $\left\Vert u^{(t)}\right\Vert \le1$.
Then because $\eta^{(t)}\sim N(0,\sigma^{2}I)$, $\left\langle u^{(t)},\eta^{(t)}\right\rangle $
follows a $N(0,\kappa^{2})$ with $\kappa^{2}\le\sigma^{2}$ (Fact
\ref{fact}). Additionally, $\frac{1}{\sigma^{2}}\left\Vert \eta^{(t)}\right\Vert ^{2}$
follows a $\chi$-squared distribution with $d$ degrees of freedom
(Fact \ref{fact}). Therefore, by Fact \ref{fact},
\begin{align*}
\Pr\left[M_{t}\right] & \ge1-\left(\Pr\left[\left\langle u^{(t)},\eta^{(t)}\right\rangle >\sigma\sqrt{2\log\frac{2T\tau}{\beta}}\right]+\Pr\left[\left\Vert \eta^{(t)}\right\Vert ^{2}>5\sigma^{2}d\log\frac{2T\tau}{\beta}\right]\right)\ge1-\frac{\beta}{T\tau}.
\end{align*}
We define the following random variable:
\begin{align*}
Y^{(t)} & =\begin{cases}
\left\langle x^{(t)},\eta^{(t)}\right\rangle  & \text{if }\cap_{s=0}^{t}M_{t}\text{ happens and }\left\Vert x^{(t)}\right\Vert \le10\sqrt{t}\\
0 & \text{otherwise}
\end{cases}
\end{align*}
Due to the symmetry of $\eta^{(t)}$, $\E\left[Y^{(t)}\mid\F_{t-1}\right]=0$.
If $\cap_{s=1}^{t}M_{t}$ happens and $\left\Vert x^{(t)}\right\Vert \le10\sqrt{t}$
we have 
\begin{align*}
\left|Y^{(t)}\right| & =\left|\left\langle x^{(t)},\eta^{(t)}\right\rangle \right|\le\left\Vert x^{(t)}\right\Vert \left\Vert \eta^{(t)}\right\Vert \le10\sqrt{\frac{t}{\log\frac{2T\tau}{\beta}}}
\end{align*}
Then $(Y^{(t)})$ forms a bounded martingale difference sequence.
By Azuma's inequality,for all $t$
\begin{align*}
\Pr\left[\left|\sum_{s=1}^{t}Y^{(s)}\right|\ge40t\right] & \le2\exp\left(-\frac{(40t)^{2}}{2\sum_{s=1}^{t}\left(20\sqrt{\frac{s}{\log\frac{2T\tau}{\beta}}}\right)^{2}}\right)\le2\exp\left(-\frac{1600t^{2}\log\frac{2T\tau}{\beta}}{800t^{2}}\right)\le\frac{\beta}{T\tau}.
\end{align*}
Thus by union bound we have with probability $1-2\beta/\tau$, $M_{t}$
happens and $\left|\sum_{s=1}^{t}Y^{(s)}\right|\le40t$, for all $t$,
simultaneously. When this happens we show by induction that $\left\Vert x^{(t)}\right\Vert \le10\sqrt{t}$.
This is true for $t=1$. Suppose that we have $\left\Vert x^{(s)}\right\Vert \le10\sqrt{s}$
for all $s\le t$. We have
\begin{align*}
\text{\ensuremath{\left\Vert x^{(t+1)}\right\Vert ^{2}}} & =\left\Vert x^{(t)}+u^{(t)}+\eta^{(t)}\right\Vert ^{2}\\
 & \le\left\Vert x^{(t)}\right\Vert ^{2}+\left\Vert u^{(t)}\right\Vert ^{2}+\left\Vert \eta^{(t)}\right\Vert ^{2}+2\left\langle x^{(t)},u^{(t)}\right\rangle +2\left\langle u^{(t)},\eta^{(t)}\right\rangle +2\left\langle x^{(t)},\eta^{(t)}\right\rangle \\
 & \le\left\Vert x^{(t)}\right\Vert ^{2}+2\left\langle x^{(t)},\eta^{(t)}\right\rangle +3.
\end{align*}
where in the last inequality, conditioned on $M_{t}$, we have $\left\Vert \eta^{(t)}\right\Vert ^{2}\le1$
and $\left\langle u^{(t)},\eta^{(t)}\right\rangle \le\frac{1}{2}$.
Since $u^{(t)}=\frac{1}{m^{(t)}}\sum_{i\in S^{(t)}}\overline{a}_{i}$,
and $S^{(t)}=\left\{ i\in[n],\left\langle \overline{a}_{i},\overline{x}^{(t)}\right\rangle \le0\right\} $,
we have $\left\Vert u^{(t)}\right\Vert ^{2}\le1$ and $\left\langle x^{(t)},u^{(t)}\right\rangle \le0$.
Continuing expanding this recursion, we have 
\begin{align*}
\text{\ensuremath{\left\Vert x^{(t+1)}\right\Vert ^{2}}} & \le\underbrace{\left\Vert x^{(1)}\right\Vert ^{2}}_{=0}+2\sum_{s=1}^{t}\left\langle x^{(s)},\eta^{(s)}\right\rangle +3t.
\end{align*}
Due to the induction hypothesis $\left\Vert x^{(s)}\right\Vert \le10\sqrt{s}$
for all $s\le t$, and $M_{t}$ happens for all $t$ we have then
$Y^{(s)}=\left\langle x^{(s)},\eta^{(s)}\right\rangle $ for all $s\le t$.
It follows that $2\sum_{s=1}^{t}\left\langle x^{(s)},\eta^{(s)}\right\rangle \le2\sum_{s=1}^{t}Y^{(s)}\le40t$.
Therefore 
\begin{align*}
\text{\ensuremath{\left\Vert x^{(t+1)}\right\Vert ^{2}}} & \le2\cdot40t+3t\le100(t+1).
\end{align*}
\end{proof}

\begin{claim}
\label{claim:coefficient-length-bound}With probability at least $1-\frac{3\beta}{\tau}$,
$\left\Vert \overline{\lambda}\overline{A}\right\Vert \le\frac{11}{\sqrt{T}}$,
where for convenience we define $\overline{\lambda}\overline{A}=\diag(\overline{\lambda})\overline{A}$
and $\diag(\overline{\lambda})$ is the diagonal matrix obtained from
$\overline{\lambda}$.
\end{claim}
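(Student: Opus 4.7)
\textbf{Proof plan for Claim \ref{claim:coefficient-length-bound}.} The strategy is to interpret $\|\overline{\lambda}\overline{A}\|$ as the $\ell_2$ norm of the vector $v := \sum_i \overline{\lambda}_i \overline{a}_i$ (which is the object that actually matters for the rescaling step, since $c^{(s)}=\sum_{i\in P^{(s)}}\overline{\lambda}_i\overline{a}_i$ is just a Bernoulli-subsampled version of $v$) and to relate $v$ to the final iterate $x^{(T+1)}$ of the Perceptron Phase via a telescoping identity that is implicit in the algorithm's dynamics.

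First, I would telescope the recursion $x^{(t+1)}=x^{(t)}+u^{(t)}+\eta^{(t)}$ from $x^{(1)}=0$ to get $x^{(T+1)}=\sum_{t=1}^{T}u^{(t)}+\sum_{t=1}^{T}\eta^{(t)}$. Next, from the update rule for $\lambda$, each coordinate satisfies $\lambda_i^{(T+1)}=\sum_{t:\,i\in S^{(t)}}\tfrac{1}{m^{(t)}}$, hence
\[
\sum_i \lambda_i^{(T+1)}\overline{a}_i \;=\; \sum_{t=1}^T \frac{1}{m^{(t)}}\sum_{i\in S^{(t)}}\overline{a}_i \;=\; \sum_{t=1}^T u^{(t)}.
\]
Dividing by $T$ and using the algorithm's definition $\overline{\lambda}=\lambda^{(T)}/T$ (modulo a harmless off-by-one), this yields
\[
v \;=\; \frac{1}{T}\left(x^{(T+1)} \;-\; \sum_{t=1}^T \eta^{(t)}\right),
\]
so by the triangle inequality it suffices to bound $\|x^{(T+1)}\|$ and $\|\sum_t \eta^{(t)}\|$ separately.

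The first term is handled directly by Claim \ref{claim:solution-length-bound}, giving $\|x^{(T+1)}\|\le 10\sqrt{T+1}$ with probability at least $1-2\beta/\tau$. For the second, I would note that $\sum_{t=1}^T \eta^{(t)}\sim N(0,T\sigma^2 I)$ by independence, apply the $\chi^2$ tail bound from Fact \ref{fact} with deviation parameter $a=\log(\tau/\beta)$, and plug in $\sigma^2=\Theta(\log(T/\delta)/(\nu^2\epsilon^2))$ with $\nu^2\epsilon^2=\Theta(d\log^3(T\tau/\beta\delta))$. This makes $T\sigma^2 d$ substantially smaller than $T$, so $\|\sum_t \eta^{(t)}\|\le \sqrt{T}$ with probability at least $1-\beta/\tau$. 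Combining via triangle inequality and union bound gives
\[
\|v\| \;\le\; \frac{10\sqrt{T+1}+\sqrt{T}}{T} \;\le\; \frac{11}{\sqrt{T}}
\]
with total failure probability at most $3\beta/\tau$.

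The main obstacle I anticipate is purely parametric: checking that the chosen $\sigma^2$ really does make $\|\sum_t \eta^{(t)}\|$ a negligible contribution compared to $\|x^{(T+1)}\|=O(\sqrt{T})$. Once that is verified, the identity $v=\tfrac{1}{T}(x^{(T+1)}-\sum_t\eta^{(t)})$ does all the real work and the constant $11$ in the bound comes almost entirely from the $10\sqrt{T+1}$ of Claim \ref{claim:solution-length-bound}, with the Gaussian term providing a small slack.
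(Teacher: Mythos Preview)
Your proposal is correct and matches the paper's proof essentially line-for-line: the paper also derives the telescoping identity $\lambda^{(T)}\overline{A} = x^{(T)} - \sum_{s=1}^{T-1}\eta^{(s)}$, invokes Claim \ref{claim:solution-length-bound} for the $10\sqrt{T}$ bound on $\|x^{(T)}\|$, and controls $\|\sum_s \eta^{(s)}\| \le \sqrt{T}$ via the same $\chi^2$ tail estimate from Fact \ref{fact}. The off-by-one you flagged is resolved in the paper simply by telescoping up to $t=T-1$ rather than $t=T$, so that Claim \ref{claim:solution-length-bound} applies directly to $\|x^{(T)}\|$.
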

\begin{proof}
First, with probability at least $1-\frac{\beta}{\tau},$we have $\left\Vert \sum_{s=1}^{T}\eta^{(s)}\right\Vert _{2}^{2}\le5\sigma^{2}Td\log\frac{T\tau}{\beta}\le T$.
By Claim \ref{claim:solution-length-bound} and union bound, we have,
with probability at least $1-\frac{3\beta}{\tau}$, $\left\Vert \sum_{s=1}^{T}\eta^{(s)}\right\Vert \le\sqrt{T}$
and $\left\Vert x^{(T)}\right\Vert \le10\sqrt{T}$. We have that 
\begin{align*}
\left(\lambda^{(t+1)}-\lambda^{(t)}\right)\overline{A} & =u^{(t)}=\widehat{u}^{(t)}-\eta^{(t)}=x^{(t+1)}-x^{(t)}-\eta^{(t)}
\end{align*}
Hence$\lambda^{(T)}\overline{A}=x^{(T)}-\sum_{s=1}^{T-1}\eta^{(s)}$.
Thus $\left\Vert \lambda^{(T)}\overline{A}\right\Vert \le\left\Vert x^{(T)}\right\Vert +\left\Vert \sum_{s=1}^{T-1}\eta^{(T)}\right\Vert \le11\sqrt{T}.$
The claim follows due to $\overline{\lambda}=\frac{\lambda^{(T)}}{T}$.
\end{proof}

\begin{claim}
\label{claim:boosting}Conditioned on $\left\Vert \overline{\lambda}\overline{A}\right\Vert \le\frac{11}{\sqrt{T}}$,
with probability at least $10^{-3}$, we have $\left\Vert \widehat{c}^{(s)}\right\Vert \ge\frac{3}{16\sqrt{\pi d}}$,
in which case $\frac{1}{\left\Vert \widehat{c}^{(s)}\right\Vert }\max_{x\in{\cal {\cal P}}\cap\B}\left\langle \widehat{c}^{(s)},x\right\rangle \le\frac{2}{3\sqrt{d}}$.
\end{claim}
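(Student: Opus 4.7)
The plan is to follow the Soheili-Peña/Hoberg-Rothvoss strategy: project $c^{(s)}$ onto the Gaussian direction $g^{(s)}$, and treat the noise $\gamma^{(s)}$ as a small additive perturbation. The starting observation is that the alignment between $c^{(s)}$ and $g^{(s)}$ is biased in one direction, since for each $i \in P^{(s)}$ we have $\langle \overline{a}_i, g^{(s)}\rangle \ge 0$; hence
$Y \;\triangleq\; \langle c^{(s)}, g^{(s)}\rangle \;=\; \sum_i \overline{\lambda}_i (\langle \overline{a}_i, g^{(s)}\rangle)_+ \;\ge\; 0$, and Cauchy-Schwarz gives $\|c^{(s)}\| \ge Y/\|g^{(s)}\|$.

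To lower-bound $Y$ with constant probability I will use Paley-Zygmund. First I check that $\|\overline{\lambda}\|_1 \ge 1/2$ (in fact $(T-1)/T$) by summing the weight update $\lambda^{(t+1)}_i - \lambda^{(t)}_i = \one[i \in S^{(t)}]/m^{(t)}$ over $i$ and then $t$. Standard Gaussian moments give $\E[Y] = \|\overline{\lambda}\|_1/\sqrt{2\pi}$ (using $\E[(Z)_+] = 1/\sqrt{2\pi}$ for $Z\sim N(0,1)$), and the Cauchy-Schwarz bound $Y^2 \le \|\overline{\lambda}\|_1 \sum_i \overline{\lambda}_i (\langle \overline{a}_i, g^{(s)}\rangle)_+^2$ yields $\E[Y^2] \le \|\overline{\lambda}\|_1^2/2$. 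Paley-Zygmund at threshold $1/2$ then gives $\Pr[Y \ge \|\overline{\lambda}\|_1/(2\sqrt{2\pi})] \ge 1/(4\pi)$, which comfortably exceeds $10^{-3}$. Combining with the tail bounds from Fact~\ref{fact} on $\|g^{(s)}\| \le O(\sqrt{d})$ and $\|\gamma^{(s)}\| \le O(\theta\sqrt{d})$, and noting that the algorithm's choice $\theta = O(d^{-3/2}\,\mathrm{polylog})$ makes $\|\gamma^{(s)}\| = o(1/\sqrt{d})$, a union bound retains probability at least $10^{-3}$ and gives $\|\widehat{c}^{(s)}\| \ge \|c^{(s)}\| - \|\gamma^{(s)}\| \ge Y/\|g^{(s)}\| - \|\gamma^{(s)}\| \ge \tfrac{3}{16\sqrt{\pi d}}$ for suitably chosen constants.

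For the ``in which case'' part, the key observation is that for $x \in \P \cap \B$ every term of $\langle c^{(s)}, x\rangle$ is non-negative, since $\langle \overline{a}_i, x\rangle \ge 0$ and $\overline{\lambda}_i \ge 0$. This lets me drop the indicator for $P^{(s)}$:
\[
\langle c^{(s)}, x\rangle = \sum_{i \in P^{(s)}} \overline{\lambda}_i \langle \overline{a}_i, x\rangle \;\le\; \left\langle \sum_i \overline{\lambda}_i \overline{a}_i,\, x\right\rangle \;\le\; \frac{11}{\sqrt{T}} \;=\; O(1/d),
\]
using the conditioning hypothesis and $\|x\| \le 1$. Adding $\langle \gamma^{(s)}, x\rangle \le \|\gamma^{(s)}\| = O(1/d)$ and dividing by $\|\widehat{c}^{(s)}\| \ge \tfrac{3}{16\sqrt{\pi d}}$ produces a ratio of order $1/\sqrt{d}$, matching the $\tfrac{2}{3\sqrt{d}}$ target.

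The main obstacle is numerical rather than conceptual: matching the concrete constants $\tfrac{3}{16\sqrt{\pi d}}$ and $\tfrac{2}{3\sqrt{d}}$ forces one to carefully track the Paley-Zygmund threshold, the concentration of $\|g^{(s)}\|$, the size of $\|\gamma^{(s)}\|$, and the size of $11/\sqrt{T}$, and in particular to choose the hidden constant in $T = \Theta(d^2)$ large enough so that $11/\sqrt{T} \le \tfrac{1}{16\sqrt{\pi}\,d}$. The core Hoberg-Rothvoss step, however, is essentially immediate once $\|\overline{\lambda}\|_1$ is seen to be of constant order.
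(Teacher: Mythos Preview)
Your proposal is correct and follows essentially the same route as the paper. The only difference is cosmetic: the paper invokes the Hoberg--Rothvoss bound (Lemma~\ref{lem:rothvoss-6}) as a black box to obtain $\lVert c^{(s)}\rVert \ge \Omega(1/\sqrt{d})$ with constant probability, whereas you unpack that lemma inline via the Paley--Zygmund argument on $Y=\langle c^{(s)},g^{(s)}\rangle$; the handling of the noise $\gamma^{(s)}$ and the ``in which case'' part (dropping the $P^{(s)}$ indicator using $\langle\overline a_i,x\rangle\ge0$ on $\P$) are identical to the paper's.
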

In order to prove Claim \ref{claim:boosting}, we need Lemma 6 from
\cite{hoberg2017improved}.
\begin{lem}[Lemma 6 \cite{hoberg2017improved}]
\label{lem:rothvoss-6}Let $\lambda\in\R^{n}\ge0$ and $A\in\R^{n\times d}$
be such that $\sum_{i}\lambda_{i}=1$ and $\left\Vert A_{i}\right\Vert =1$.
Take a random gaussian vector $g\sim N(0,I)\in\R^{d}$ and let $J=\{i:\left\langle A_{i},g\right\rangle \ge0\}$.
With probability at least $5\cdot10^{-3}$, $\left\Vert \sum_{i\in J}\lambda_{i}A_{i}\right\Vert \ge\frac{1}{4\sqrt{\pi}d}$.
\end{lem}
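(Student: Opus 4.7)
The plan is to prove the Hoberg--Rothvoss lemma via a second-moment argument (Paley--Zygmund) applied to the linear functional $g \mapsto \langle Z, g\rangle$, where $Z := \sum_{i\in J} \lambda_i A_i$ is exactly the random vector whose norm we want to bound. The key payoff of looking at $\langle Z, g\rangle$ rather than at $\|Z\|$ directly is that it becomes a simple, nonnegative sum whose moments are easy to control.

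First, I would write down the crucial identity
\[
\langle Z, g\rangle \;=\; \sum_{i\in J} \lambda_i \langle A_i, g\rangle \;=\; \sum_{i=1}^n \lambda_i \bigl(\langle A_i, g\rangle\bigr)_+,
\]
which is automatically nonnegative, and note that by Cauchy--Schwarz $\|Z\| \ge \langle Z, g\rangle/\|g\|$. Letting $W := g/\|g\|$ (uniform on the unit sphere and independent of $\|g\|$) and $h(W) := \sum_i \lambda_i (\langle A_i, W\rangle)_+$, this rearranges to the scale-invariant bound $\|Z\| \ge h(W)$. So it suffices to lower bound $h(W)$ in probability.

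Next, I would compute the first two moments of $h(W)$. Since $\|A_i\| = 1$ we have $\langle A_i, g\rangle \sim N(0,1)$, so $\E[(\langle A_i, g\rangle)_+] = 1/\sqrt{2\pi}$ and hence $\E[h(g)] = 1/\sqrt{2\pi}$. Using the independence of $W$ and $\|g\|$, and $\E[\|g\|]\le \sqrt{d}$ (Jensen), this gives
\[
\E[h(W)] \;=\; \frac{\E[h(g)]}{\E[\|g\|]} \;\ge\; \frac{1}{\sqrt{2\pi d}}.
\]
For the second moment, I would apply Cauchy--Schwarz with the pairing $(\sqrt{\lambda_i},\,\sqrt{\lambda_i}(\langle A_i, W\rangle)_+)$, using $\sum_i \lambda_i = 1$, to get the pointwise bound $h(W)^2 \le \sum_i \lambda_i \langle A_i, W\rangle^2$; taking expectations and using $\E[\langle A_i, W\rangle^2] = 1/d$ (rotational invariance, since $\|A_i\|=1$) yields $\E[h(W)^2] \le 1/d$.

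Finally, I would apply Paley--Zygmund with $\theta = 1/2$:
\[
\Pr\!\left[h(W)\ge \tfrac{1}{2\sqrt{2\pi d}}\right] \;\ge\; \Pr\!\left[h(W)\ge \tfrac{1}{2}\E[h(W)]\right] \;\ge\; \tfrac{1}{4}\cdot\frac{\E[h(W)]^2}{\E[h(W)^2]} \;\ge\; \tfrac{1}{4}\cdot \frac{1/(2\pi d)}{1/d} \;=\; \frac{1}{8\pi}.
\]
Since $\tfrac{1}{2\sqrt{2\pi d}} \ge \tfrac{1}{4\sqrt{\pi d}}$ and $\tfrac{1}{8\pi} > 5\cdot 10^{-3}$, combining with $\|Z\|\ge h(W)$ gives the lemma. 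The main obstacle I anticipate is precisely the second-moment bound $\E[h(W)^2] \le 1/d$: a naive Jensen-type bound gives only the trivial $O(1)$, and extracting the crucial $1/d$ factor requires the Cauchy--Schwarz trick that exploits the probability-distribution hypothesis $\sum_i \lambda_i = 1$ to ``pull in'' one power of $\lambda_i$ so that rotational invariance can supply the $1/d$ saving.
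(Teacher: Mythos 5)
Your proof is correct. A small but important note on context: the paper does not prove this statement at all --- it is imported verbatim as Lemma 6 of \cite{hoberg2017improved} --- so there is no in-paper argument to compare against; what you give is essentially the standard second-moment argument underlying the cited result. Concretely, your chain checks out: $\langle Z,g\rangle=\sum_i\lambda_i(\langle A_i,g\rangle)_+$ is nonnegative, the polar decomposition $g=\Vert g\Vert W$ with $\Vert g\Vert$ independent of $W$ legitimately converts $\E[h(g)]=1/\sqrt{2\pi}$ and $\E\Vert g\Vert\le\sqrt d$ into $\E[h(W)]\ge 1/\sqrt{2\pi d}$, the Cauchy--Schwarz (equivalently Jensen) step with weights $\lambda_i$ gives $\E[h(W)^2]\le\sum_i\lambda_i\E[\langle A_i,W\rangle^2]=1/d$, and Paley--Zygmund with $\theta=1/2$ yields probability at least $\frac{1}{8\pi}>5\cdot10^{-3}$ for the event $h(W)\ge\frac{1}{2\sqrt{2\pi d}}$, which transfers to $\Vert Z\Vert$ since $\Vert Z\Vert\ge h(W)$ pointwise. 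Your use of the exact independence of $W$ and $\Vert g\Vert$ is a mild simplification of the usual route (which instead intersects the Paley--Zygmund event with a high-probability bound $\Vert g\Vert\le O(\sqrt d)$ and pays for it in the constant), and it buys you a cleaner constant. Note also that your bound $\frac{1}{2\sqrt{2\pi d}}$ is of order $1/\sqrt d$, stronger than the $\frac{1}{4\sqrt{\pi}d}$ written in the lemma statement here; this is actually what the paper needs, since Claim \ref{claim:boosting} invokes the lemma in the form $\Vert\sum_{i\in P}\overline{\lambda}_i\overline{a}_i\Vert\ge\frac{1}{4\sqrt{\pi d}}$ (the $d$ inside the square root), so your argument covers the version that is used and incidentally resolves that notational discrepancy.
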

\begin{proof}
For simplicity, we drop the superscript $s$. The proof follows similarly
to \cite{hoberg2017improved}; however, we need to take into account
the noise component $\gamma$. Since $\gamma\sim N\left(0,\theta^{2}I\right)$
with $\theta^{2}=\frac{8}{d^{2}\epsilon^{2}\nu^{2}}\log\frac{4000\log\frac{\tau}{\beta}}{\delta}\le O\left(\frac{1}{d^{3}\log\frac{Td}{\beta\delta}}\right)$,
$\frac{1}{\theta^{2}}\left\Vert \gamma\right\Vert ^{2}$ follows a
$\chi$-squared distribution of $d$ degrees of freedom (Fact \ref{fact}).
By Fact \ref{fact}, for sufficiently large constant in $\nu$, $\Pr\left[\left\Vert \gamma\right\Vert \le\frac{1}{16\sqrt{\pi}d}\right]\ge1-\frac{\beta}{\tau}$.
Further, by Lemma \ref{lem:rothvoss-6}, with probability $\ge5\cdot10^{-3}$
we have $\left\Vert \sum_{i\in P}\overline{\lambda}_{i}\overline{a}_{i}\right\Vert \ge\frac{1}{4\sqrt{\pi d}}$.

The two events: $\left\Vert \sum_{i\in P}\overline{\lambda}_{i}\overline{a}_{i}\right\Vert \ge\frac{1}{4\sqrt{\pi d}}$
and $\left\Vert \gamma\right\Vert \le\frac{1}{16\sqrt{\pi}d}$ are
independent. With probability at least $(1-\frac{\beta}{\tau})\cdot5\cdot10^{-3}\ge10^{-3}$,
both events happen. Then,
\begin{align*}
\left\Vert \widehat{c}\right\Vert =\left\Vert \sum_{i\in P}\overline{\lambda}_{i}\overline{a}_{i}+\gamma\right\Vert  & \ge\left\Vert \sum_{i\in P}\overline{\lambda}_{i}\overline{a}_{i}\right\Vert -\left\Vert \gamma\right\Vert \ge\frac{3}{4}\left\Vert \sum_{i\in P}\overline{\lambda}_{i}\overline{a}_{i}\right\Vert \ge\frac{3}{16\sqrt{\pi d}}.
\end{align*}
To complete the proof, we now show that, conditioned on the two events
$\left\Vert \sum_{i\in P}\overline{\lambda}_{i}\overline{a}_{i}\right\Vert \ge\frac{1}{4\sqrt{\pi d}}$
and $\left\Vert \gamma\right\Vert \le\frac{1}{16\sqrt{\pi}d}$ happening,
we have $\frac{1}{\left\Vert \widehat{c}^{(s)}\right\Vert _{2}}\max_{x\in{\cal {\cal P}}\cap\B}\left\langle \widehat{c}^{(s)},x\right\rangle \le\frac{2}{3\sqrt{d}}$.
Since $\left\Vert \widehat{c}\right\Vert \ge\frac{3}{16\sqrt{\pi d}}$
and $\left\Vert \gamma\right\Vert \leq\frac{1}{16\sqrt{\pi}d}\leq\frac{1}{3}\left\Vert \widehat{c}\right\Vert $,
we have $\left\Vert \sum_{i\in P}\overline{\lambda}_{i}\overline{a}_{i}\right\Vert \ge\left\Vert \widehat{c}\right\Vert -\left\Vert \gamma\right\Vert \ge\frac{1}{8\sqrt{\pi d}}$
and $\left\Vert \sum_{i\in P}\overline{\lambda}_{i}\overline{a}_{i}\right\Vert \le\left\Vert \widehat{c}\right\Vert +\left\Vert \gamma\right\Vert \le\frac{4}{3}\left\Vert \widehat{c}\right\Vert $.
This gives us
\begin{align*}
\frac{1}{\left\Vert \widehat{c}\right\Vert }\max_{x\in{\cal {\cal P}}\cap\B}\left\langle \widehat{c},x\right\rangle  & =\frac{1}{\left\Vert \widehat{c}\right\Vert }\max_{x\in{\cal {\cal P}}\cap\B}\left\langle \sum_{i\in P}\overline{\lambda}_{i}\overline{a}_{i}+\gamma,x\right\rangle \\
 & \le\frac{4}{3\left\Vert \sum_{i\in P}\overline{\lambda}_{i}\overline{a}_{i}\right\Vert }\max_{x\in\P\cap\B}\left\langle \sum_{i\in P}\overline{\lambda}_{i}\overline{a}_{i},x\right\rangle +\frac{16\sqrt{\pi d}}{3}\left\Vert \gamma\right\Vert \left\Vert x\right\Vert \\
 & \le\frac{4}{3}\frac{\left\Vert \overline{\lambda}\overline{A}\right\Vert }{\left\Vert \sum_{i\in P}\overline{\lambda}_{i}\overline{a}_{i}\right\Vert }+\frac{16\sqrt{\pi d}}{3}\frac{1}{16\sqrt{\pi}d}\\
 & \le\frac{4\cdot11\cdot8\sqrt{\pi d}}{3\sqrt{T}}+\frac{1}{3\sqrt{d}}\le\frac{2}{3\sqrt{d}}.
\end{align*}
The third inequality is due to $\left\Vert \gamma\right\Vert \leq\frac{1}{16\sqrt{\pi}d}$
and $\left\Vert x\right\Vert \leq1$ since $x\in\mathcal{B}$. Besides,
since $Ax\ge0$ for $x\in\P$, we also have
\begin{align*}
\max_{x\in\P\cap\B}\left\langle \sum_{i\in P}\overline{\lambda}_{i}\overline{a}_{i},x\right\rangle  & \le\max_{x\in\P\cap\B}\left\langle \sum_{i\in[n]}\overline{\lambda}_{i}\overline{a}_{i},x\right\rangle \le\max_{\left\Vert x\right\Vert \le1}\left\langle \sum_{i\in[n]}\overline{\lambda}_{i}\overline{a}_{i},x\right\rangle =\left\Vert \overline{\lambda}\overline{A}\right\Vert ,
\end{align*}
The last inequality holds for large enough $T=\Theta(d^{2})$.
\end{proof}

\begin{proof}[Proof of Lemma \ref{lem:rescaling-gaurantee}]
The proof of Lemma \ref{lem:rescaling-gaurantee} follows immediately
from Claims \ref{claim:coefficient-length-bound} and \ref{claim:boosting}
when we repeat $O(\log\frac{\tau}{\beta})$ times in the for-loop
(Line \ref{line:applification} of Algorithm \ref{alg:private-perceptron-epoch}).
\end{proof}

\subsection{Putting it all together}
\begin{proof}[Proof of Theorem \ref{thm:perceptron-guarantee}]
 Theorem \ref{thm:perceptron-guarantee} follows from Propositions
\ref{prop:main-privacy-lemma} and \ref{prop:utility-guarantee} where
we set $\delta\gets\frac{\delta}{d}$ and $\epsilon=\Theta\left(\sqrt{\frac{1}{d^{3}\log\frac{1}{\rho_{0}}\log\frac{d}{\delta}}}\right)$.
\end{proof}

\section{When the LP is Not Fully Dimensional\protect\label{sec:Zero-Margin}}

Algorithm \ref{alg:private-perceptron} can only guarantee the number
of dropped constraints is $\tilde{O}(d^{2})$ when the LP has a strictly
positive margin. In case the LP has zero margin, i.e, tight (equality)
constraints, we can use a standard perturbation technique to create
a margin without changing the feasibility of the problem. However,
the challenge is to output a solution that satisfies the original
constraints. The main idea to handle this case, due to \cite{kaplan2024differentially},
is to iteratively identify tight constraints based on the following
observation. An LP with integer entries bounded by $U$ in the absolute
value has the same feasibility as that of a relaxed LP with $\eta=\frac{1}{2(d+1)((d+1)U)^{d+1}}$
slackness added to each constraint and, if a solution to the relaxed
LP violates a constraint in the initial LP, that constraint must be
tight. This suggests that we can solve the relaxed LP which has a
positive margin, then identify tight constraints and recurse at most
$d$ times (we can stop after identifying $d$ linearly independent
tight constraints).

Our improvement over the algorithm of \cite{kaplan2024differentially}
is two-fold. First, we use Algorithm \ref{alg:private-perceptron}
as the solver in each iteration, which improves the number of dropped
constraints from $\tilde{O}(\frac{d^{5}}{\epsilon}\log^{1.5}\frac{1}{\rho_{0}})$
to $O\left(\frac{d^{2}}{\epsilon}\sqrt{\log\frac{1}{\rho_{0}}}\log^{2}\frac{d}{\beta\delta}\right)$.
Second, we avoid the fast decrease of the margin during the course
of the algorithm, which saves another factor $d$.

\subsection{Synthetic systems of linear equations}

During its course, the algorithm needs to identify the set of tight
constraints and privatize them for the subsequent use. To this end,
we will use the algorithm from \cite{kaplan2024differentially} for
privately generating (or \textit{sanitizing}) synthetic systems of
linear equations, stated in the following result.
\begin{thm}[Theorem C \cite{kaplan2024differentially}]
\label{thm:sanitize}There exists an $(\epsilon,\delta)$-DP algorithm
such that for any system of linear equations ${\cal I}$, with probability
at least $1-\delta$, the algorithm outputs a system of linear equations
${\cal O}$ which satisfies

1. Any solution to ${\cal I}$ is a solution to ${\cal O}$, and

2. Each solution to ${\cal O}$ satisfies all but $O\left(\frac{d^{2}}{\epsilon}\log\frac{d}{\delta}\right)$
equations in ${\cal I}.$
\end{thm}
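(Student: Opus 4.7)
The plan is to construct $\mathcal{O}$ iteratively over at most $d+1$ rounds, appending one linear equation per round that is (up to scalar multiple) realized by many input equations. Represent each equation $a_i^\top x = b_i$ as a unit vector $v_i = (a_i, -b_i)/\|(a_i,-b_i)\|$ in $\R^{d+1}$, so that the solution set of the system equals $\{x : (x,1) \perp \operatorname{span}\{v_i\}\}$. Maintain a running subspace $W \subseteq \R^{d+1}$, initialized to $\{0\}$; at each round, project the $v_i$'s onto $W^\perp$, bucket them by normalized projected direction on a sufficiently fine $\eta$-net of the unit sphere, and invoke a private stable-histogram mechanism with per-round budget $(\epsilon',\delta') = (\Theta(\epsilon/d),\, \Theta(\delta/d))$ to find a direction $u \in W^\perp$ whose bucket has count above a threshold $T = \Theta(\tfrac{1}{\epsilon'}\log\tfrac{d}{\delta'}) = \Theta(\tfrac{d}{\epsilon}\log\tfrac{d}{\delta})$. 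If such a $u$ is found, append the corresponding equation (after un-lifting) to $\mathcal{O}$ and set $W \leftarrow W + \operatorname{span}(u)$; otherwise terminate.

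Correctness of property 1 is essentially immediate: each appended direction is (essentially) one realized by many inputs, hence lies in the linear span of the $v_i$'s and is therefore implied by every solution to $\mathcal{I}$. For property 2, any solution $x$ to $\mathcal{O}$ satisfies every input equation whose $v_i$ lies in $\operatorname{span}(\mathcal{O}) = W$ at termination; the only candidates for violation are inputs whose $v_i$ has a nontrivial component in $W^\perp$. These are precisely the equations the stable histogram missed in some round, and since each round contributes at most $O(T)$ such misses after plugging in the per-round budget, the total number of violated equations is bounded by $(d+1)\cdot T = O\!\left(\tfrac{d^2}{\epsilon}\log\tfrac{d}{\delta}\right)$. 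Privacy follows from basic composition of the at most $d+1$ stable-histogram invocations, each at $(\epsilon/d, \delta/d)$.

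The main obstacle is tightening the per-round utility step: a standard stable histogram only certifies that each individually missed bucket has count below the threshold, not that the \emph{total} mass of all missed buckets in a given round is $O(T)$. Closing this gap requires exploiting the structure of the residual system so that at each round either a significant direction is captured (reducing the effective dimension of the uncaptured subspace by one) or a separate noisy total-count test certifies that fewer than $T$ residual points remain uncovered, at which point the algorithm can safely terminate. The $\eta$-net discretization must simultaneously be fine enough that genuinely parallel constraints land in the same bucket and coarse enough that the histogram's effective domain contributes only a $\log(d/\delta)$ factor to the threshold. Composing the paired ``find-popular-direction'' and ``test-residual-mass'' steps across the $d+1$ rounds then yields the claimed $(\epsilon,\delta)$-DP guarantee with the stated $O(d^2/\epsilon \cdot \log(d/\delta))$ discard bound.
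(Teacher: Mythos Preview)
This theorem is not proved in the paper under review; it is quoted as Theorem~C of \cite{kaplan2024differentially} and invoked as a black box. The only information the paper adds is the remark that the cited algorithm ``first privately selects a set of tight constraints in the original LP, then outputs a canonical basis for the subspace defined by these tight constraints.'' So there is no in-paper proof to compare your proposal against.

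On its own merits, your sketch has a real gap, and it is precisely the one you name but do not actually close. You assert the dichotomy that at each round ``either a significant direction is captured \ldots\ or a separate noisy total-count test certifies that fewer than $T$ residual points remain.'' That dichotomy can fail. Take $n$ input equations in $\R^{d}$ defining $n$ pairwise distinct hyperplanes. After lifting, the $v_{i}$ are $n$ distinct unit vectors; in the first round $W=\{0\}$, every histogram bucket has count one so no direction is popular, yet the residual count is $n$ and can be made arbitrarily large. Both of your tests fail and the procedure stalls, even though $\operatorname{span}\{v_{i}\}$ has dimension at most $d{+}1$ and the right output is simply a basis for it. An $\eta$-net does not rescue this: if you merge $\eta$-close but distinct hyperplanes and output the bucket representative, you break property~1, since a solution to $\mathcal{I}$ satisfies each exact input equation but generally not a perturbed one; and if the input hyperplanes are well separated the net changes nothing.

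From the paper's one-line description, the Kaplan et al.\ mechanism sidesteps this by privately selecting an \emph{actual input equation} whose lifted vector lies outside the current span $W$, rather than looking for a direction realized by many exact copies. Whenever enough inputs lie outside $W$, any one of them is a valid choice and can be selected with a small privacy cost; when few remain, the algorithm terminates. The termination criterion is the count of equations outside $W$, not the popularity of any single direction, so the distinct-hyperplanes instance above is handled in at most $d{+}1$ rounds with nothing violated. Your overall architecture (grow a subspace round by round, bound the leftovers at termination) has the right shape; the broken piece is the per-round selection primitive.
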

Note that for the analysis we also use the fact that this algorithm
first privately selects a set of tight constraints in the original
LP, then outputs a canonical basis for the subspace defined by these
tight constraints.

\subsection{Algorithm}

The algorithm is as follows. The algorithm runs at most $d$ iterations.
In each iteration, the algorithm solves the LP $Ax\le b+\eta\one$
with a set $E$ of sanitized equality constraints that were identified
in previous iterations. First, the algorithm selects $k=\left|E\right|$
independent columns in $E$ to eliminate $k$ variables then solves
the LP without equality constraints using the Algorithm \ref{alg:private-perceptron}
(after homogenizing the LP via the reduction from Section \ref{sec:Preliminary}).
The algorithm terminates if it has found $d$ tight constraints or
if the number of constraints violated by the solution is small. Otherwise,
the algorithm sanitizes the set of constraints violated by the solution
and adds them to $E$ then recurses. Our algorithm is described in
Algorithm \ref{alg:general-case}. 

\begin{algorithm}
\caption{Private solver for $Ax\le b$}

\label{alg:general-case}

\begin{algorithmic}[1]

\STATE Input $A,b$ whose entries are integers bounded by $U$ in
absolute value, $\epsilon,\delta,\beta$

\STATE Let $\eta=\frac{1}{2(d+1)((d+1)U)^{d+1}}$, $\rho=\eta^{3}$

\STATE Let $E=\emptyset$ be the set of tight (equality) constraints

\STATE for $t=1,\dots,d$:

\STATE $\qquad$Use $k=\left|E\right|$ independent columns in $E$
to eliminate $k$ variables of $Ax\le b+\eta\one$ and use $\PP$
to solve with margin parameter $\rho$, obtain solution $x^{(t)}$\label{drop-perceptron}

\STATE $\qquad$Let $J_{1}$ be the subset of constraints $j$ such
that $\left\langle a_{j},x^{(t)}\right\rangle \le b_{j}$ and $J_{2}$
be the subset of constraints $j$ such that $b_{j}<\left\langle a_{j},x^{(t)}\right\rangle \le b_{j}+\eta$

\STATE $\qquad$if $\left|J_{2}\right|+\Lap\left(\frac{1}{\epsilon}\right)\le\frac{d^{2}}{\epsilon}+\log\frac{1}{\delta}$
return $x^{(t)}$\label{return-1}

\STATE $\qquad$else:

\STATE $\qquad\qquad$Let $Cx=g$ be the system obtained by sanitizing
$A_{J_{2}}x=b_{J_{2}}$ and let $s$ be its dimension

\STATE $\qquad\qquad$if $s=d$ or $\left|J_{1}\right|+\Lap\left(\frac{1}{\epsilon}\right)<\frac{d^{2}}{\epsilon}\log^{2}\frac{d}{\beta\delta}\sqrt{\log\frac{1}{\rho}}+\log\frac{1}{\delta}$,
return a solution of $Cx=g$\label{return-2}

\STATE $\qquad\qquad$else:

\STATE $\qquad\qquad\qquad$$E\gets E\cup\{Cx=g\}$; $A\gets A_{J_{1}}$;
$b\gets b_{J_{1}}$

\STATE return $\bot$

\end{algorithmic}
\end{algorithm}

\subsection{Analysis}

For the privacy guarantee, we have the following lemma.
\begin{prop}
\label{lem:general-privacy}The algorithm is $\left(\epsilon',\delta'\right)$-DP
for $\epsilon'=3d\epsilon^{2}+\sqrt{3d\epsilon^{2}\log\frac{1}{\delta}};$
$\delta'=(3d+1)\delta$.
\end{prop}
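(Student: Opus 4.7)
The plan is to view Algorithm \ref{alg:general-case} as the adaptive composition of at most $3d$ basic differentially private releases on the input LP $(A,b)$, and then invoke the advanced composition theorem with auxiliary parameter $\delta_{0}=\delta$. The proof template mirrors that of Proposition \ref{prop:main-privacy-lemma}: inside each outer iteration I isolate a constant number of DP primitives, each of which is $(\epsilon,\delta)$-DP with respect to the input, and then account for the $d$ iterations via composition.

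First I would verify that each iteration touches the data through three $(\epsilon,\delta)$-DP slots. The call to $\PP$ at line \ref{drop-perceptron} is $(\epsilon,\delta)$-DP by Theorem \ref{thm:perceptron-guarantee}. The sanitize routine that produces $Cx=g$ from $A_{J_{2}}x=b_{J_{2}}$ is $(\epsilon,\delta)$-DP by Theorem \ref{thm:sanitize}. Finally, the two Laplace comparisons on lines \ref{return-1} and \ref{return-2} are each $(\epsilon,0)$-DP: since neighboring LPs differ by a single constraint, both $|J_{1}|$ and $|J_{2}|$ have sensitivity $1$, so $\Lap(1/\epsilon)$ noise is sufficient. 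Grouped together, the two Laplace checks occupy at most one $(\epsilon,\delta)$-DP slot per iteration when folded into the composition budget.

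A subtle point requires care. The later iterations depend on choices that are themselves functions of prior DP outputs: the matrix passed forward is $A_{J_{1}}$, the elimination step uses $k=|E|$ columns drawn from the sanitized basis, and the quantities $J_{1}, J_{2}$ are defined with respect to $x^{(t)}$, which is produced by $\PP$ on the private data. To handle this I would unroll the algorithm step by step and invoke the \emph{adaptive} advanced composition theorem. What is needed is that, conditioned on every previously released output, each fresh mechanism still enjoys its single-shot DP guarantee; this holds because once the DP history is fixed, the mapping from $(A,b)$ to the next exposed quantity (say $|J_{2}|$) still has the claimed sensitivity with respect to the original neighboring relation, regardless of the conditioning.

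Once the above accounting is in place, applying the advanced composition theorem to $3d$ mechanisms, each $(\epsilon,\delta)$-DP, with $\delta_{0}=\delta$, gives
\[
\epsilon' \;\le\; 3d\epsilon^{2} + \epsilon\sqrt{3d\log(1/\delta)} \;=\; 3d\epsilon^{2} + \sqrt{3d\epsilon^{2}\log(1/\delta)},
\]
and $\delta' = 3d\cdot\delta + \delta = (3d+1)\delta$, matching the statement. The main obstacle I anticipate is not the composition itself but the sensitivity bookkeeping in the presence of variable elimination and constraint pruning between iterations; however, since neighboring LPs differ by exactly one constraint, and since the pruning $A\gets A_{J_{1}}$ and elimination by $E$ are deterministic post-processings of prior DP releases, the sensitivity of every freshly exposed quantity remains bounded by its stated value, so the adaptive composition argument goes through cleanly.
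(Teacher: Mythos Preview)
Your proposal is correct and takes essentially the same approach as the paper's one-line proof: advanced composition over $d$ iterations, each consisting of three $(\epsilon,\delta)$-DP mechanisms. The only spot to tighten is your assertion that the two Laplace comparisons together occupy a single $(\epsilon,\delta)$ slot; this holds because $J_{1}$ and $J_{2}$ are disjoint, so $(|J_{1}|,|J_{2}|)$ has joint $\ell_{1}$-sensitivity $1$ and the pair is genuinely a single $(\epsilon,0)$-DP release rather than two separate ones composing to $(2\epsilon,0)$.
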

\begin{proof}
This comes directly from advanced composition over $d$ iterations,
each of which uses three $(\epsilon,\delta)$ private mechanisms. 
\end{proof}

Next, for the utility guarantee, we first show that during the iterations,
the entries of the LP (after variable elimination) does not blow up
by a factor more than $U^{d}$ and thereby, we can lower bound the
margin of the relaxed LP.

\begin{lem}
\label{lem:margin-bound}For each step $t$ of Algorithm \ref{alg:general-case},
consider the LP $Ax\le b$, $x\ge0$ with the entries of $A,b$ bounded
by $U$ in the absolute value and let $E$ be the set of $k$ equality
constraints $Cx=g$ where $C$ has rank $k$. We can use $k$ independent
columns in $C$ and Gaussian elimination to eliminate $k$ variables
to obtain an LP $\tilde{A}x\le\tilde{b}$, $x\ge0$ with integer entries
and $d-k$ variables such that the entries of $\tilde{A},\tilde{b}$
are bounded by $k^{2}(k-1)!U^{k+1}$. Furthermore, if the LP $\{Ax\le b;Cx=g,x\ge0\}$
is feasible then the LP resulting from the same elimination from $\{Ax\le b+\eta\one;Cx=g,x\ge0\}$
has margin parameter $\rho_{t}\ge\frac{\eta}{((d+1)U)^{2(d+1)}}$.
\end{lem}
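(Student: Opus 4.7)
I would prove the two parts of the lemma separately, bounding the post-elimination coefficients first and then using them to control the margin.

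\textit{Entry bound.} Reorder variables so that the $k$ chosen linearly independent columns of $C$ come first, writing $C = [C_1 \mid C_2]$ with $C_1$ invertible, and partition accordingly $A = [A_1 \mid A_2]$ and $x = (x_B, x_N)$. Solving the equality system gives $x_B = C_1^{-1}(g - C_2 x_N)$. Substitute into $Ax \le b + \eta \one$ and clear denominators by multiplying through by $\Delta := |\det C_1|$, using the adjugate identity $\Delta \cdot C_1^{-1} = \mathrm{sgn}(\det C_1) \cdot \mathrm{adj}(C_1)$; this produces an integer system $\tilde A x_N \le \tilde b + \Delta \eta \one$. The entry bound follows from the Leibniz bound $\Delta \le k! \, U^k$ together with the fact that each adjugate entry is a signed $(k-1) \times (k-1)$ minor of $C_1$ and is therefore bounded by $(k-1)! \, U^{k-1}$. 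A routine sum then gives each entry of $\tilde A$ or $\tilde b$ bounded by $k! \, U^{k+1} + k^2 (k-1)! \, U^{k+1} = O(k^2 (k-1)! \, U^{k+1})$, matching the stated bound.

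\textit{Margin bound.} Let $M := ((d+1)U)^{d+1}$ and $\tilde U := k^2 (k-1)! \, U^{k+1}$, the entry bound from above. By feasibility of $\{Ax \le b, \, Cx = g, \, x \ge 0\}$, pick a basic feasible solution $x^*$; Cramer's rule applied to the active constraints yields $\|x^*\|_\infty \le d! \, U^d \le M$. I then construct a margin-certifying point in the homogenized reduced LP (in $(x_N, x_0)$-space of dimension $d - k + 1$) as $y = (x_N^* + \beta \one, \, 1)$ with $\beta$ of order $\Delta \eta / (d \tilde U)$. The shift guarantees slack $\ge \beta$ on each constraint $x_{N, j} \ge 0$; meanwhile, since $Ax^* \le b$, each relaxed $\tilde A$-row has slack at least $\Delta \eta$ at $x_N^*$, and the shift costs at most $\beta \cdot d \cdot \tilde U$ per row, so the $\tilde A$-slack stays at least $\Delta \eta / 2$. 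Each row of the homogenized constraint matrix has $\ell_2$-norm $O(\sqrt d \, \tilde U)$ and $\|y\|_2 = O(\sqrt d \, M)$, so every normalized inner product $\langle \overline{a'_i}, y/\|y\|\rangle$ is at least $\Omega(\eta / (\mathrm{poly}(d) \cdot M^2))$; absorbing the polynomial-in-$d$ factors into the base of the exponent yields the claimed $\rho_t \ge \eta / ((d+1)U)^{2(d+1)}$.

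\textit{Main obstacle.} The principal subtlety is that the $\eta$ relaxation only slackens the $A$-constraints, not $x \ge 0$. At a vertex $x^*$ some coordinates can be zero, which would directly give zero margin. The shift $x_N^* + \beta \one$ handles the $x_N \ge 0$ constraints, but the derived $x_B \ge 0$ inequalities on $x_N$ (produced by elimination) require the splitting $(B, N)$ to be chosen so that $x_B^*$ also has enough slack to tolerate the perturbation of $x_N$. Carrying this out — essentially, picking the $k$ independent columns of $C$ so that elimination does not align with the zero pattern of $x^*$ — and then verifying that these induced inequalities still admit the point $y$ with margin $\Omega(\eta / M^2)$ is where the argument needs the most care. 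This is precisely the payoff of always restarting from the initial LP, as emphasized in Section \ref{sec:Zero-Margin}: because $\tilde U$ and $\Delta$ are each bounded by $U^{O(k)}$ rather than compounded across iterations, the margin loss is only a factor $M^2$ rather than $M^{\Theta(d)}$, which is what saves the extra factor $d$ over \cite{kaplan2024differentially}.
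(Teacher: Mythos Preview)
Your entry-bound argument has a genuine gap: you apply the Leibniz bound $|\det C_1|\le k!\,U^k$ and the adjugate bound $(k-1)!\,U^{k-1}$ directly to $C$, but the lemma only assumes the entries of $A,b$ are bounded by $U$, not those of $C,g$. In the algorithm, $Cx=g$ is the output of the sanitizer (Theorem~\ref{thm:sanitize}), which returns a canonical basis for a subspace and carries no a~priori entry bound in terms of $U$. The paper closes this gap with an observation you are missing: the sanitized system spans the same affine subspace as some set $A'x=b'$ of \emph{original} tight constraints (entries bounded by $U$), i.e.\ $[C\mid g]=M[A'\mid b']$ for some invertible $M$. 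Hence $C_K^{-1}C=(MA'_K)^{-1}MA'=A'^{-1}_K A'$ and $C_K^{-1}g=A'^{-1}_K b'$, so eliminating with $C$ produces \emph{exactly the same} reduced system as eliminating with $A'$. Once this substitution is made, every matrix appearing in the elimination has entries bounded by $U$, and your adjugate computation goes through verbatim with $A'_K$ in place of $C_1$.

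For the margin, the paper is more direct than your perturbation: it takes a vertex solution $x^*$ of $\{Ax\le b,\,Cx=g,\,x\ge0\}$ (so $\|x^*\|_\infty\le((d+1)U)^{d+1}$ by Cramer's rule in the original integer data) and bounds the margin of the homogenized reduced LP at the single point $(x_N^*\mid 1)$ by the ratio $\eta/\bigl(\|(x_N^*\mid 1)\|\cdot\max_i\|[-\tilde A_i\mid\tilde b_i]\|\bigr)$. Your shift $x_N^*+\beta\one$ is a reasonable way to force positive slack in the nonnegativity rows (which the paper's displayed bound does not explicitly address), but your ``main obstacle'' is misdiagnosed: the constraints $x_B\ge0$ do \emph{not} appear in the reduced LP $\{\tilde A x_N\le\tilde b,\,x_N\ge0\}$ at all --- they are simply dropped after elimination --- so there is no need to align the choice of columns $K$ with the zero pattern of $x^*$.
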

\begin{proof}
See Appendix.
\end{proof}
Equipped with Lemma \ref{lem:margin-bound}, we can show the bound
for the number of dropped constraints.
\begin{prop}
\label{lem:general-utility}With probability at least $1-d(\beta+\delta)$,
Algorithm \ref{alg:general-case} outputs a solution that satisfies
all but $O(\frac{d^{3.5}}{\epsilon}\log^{2}\frac{d}{\beta\delta}\sqrt{\log(dU)})$
constraints.
\end{prop}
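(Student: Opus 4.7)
The plan is to decompose the total number of dropped constraints into three contributions—perceptron violations, sanitization losses, and the Laplace-thresholded termination cutoffs—and take a union bound over the at most $d$ outer iterations.

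First, I would combine Lemma \ref{lem:margin-bound} with Theorem \ref{thm:perceptron-guarantee}. With $\eta = 1/(2(d+1)((d+1)U)^{d+1})$, the lemma gives a margin $\rho_t \ge \eta / ((d+1)U)^{2(d+1)}$ at iteration $t$, and one checks directly that the choice $\rho = \eta^3$ satisfies $\rho \le \rho_t$ so that $\PP$ is invoked with a legitimate margin parameter. Consequently $\log(1/\rho) = O(d \log(dU))$, and Theorem \ref{thm:perceptron-guarantee} yields, with probability at least $1-\beta$, a solution $x^{(t)}$ to the relaxed LP $Ax \le b + \eta\one$ which violates at most
\[
V_t \;=\; O\!\left(\frac{d^{2}}{\epsilon} \sqrt{d \log(dU)} \log^{2}\frac{d}{\beta\delta}\right) \;=\; O\!\left(\frac{d^{2.5}}{\epsilon} \sqrt{\log(dU)} \log^{2}\frac{d}{\beta\delta}\right)
\]
constraints of that iteration's active system.

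Second, I would track the constraints across iterations. Standard Laplace tail bounds show that, with probability at least $1-\delta$, both threshold checks in Lines \ref{return-1} and \ref{return-2} are accurate up to an $O(\log(1/\delta)/\epsilon)$ additive error, while Theorem \ref{thm:sanitize} contributes an additional $O(d^{2}/\epsilon \log(d/\delta))$ loss per sanitization call. At a non-terminating iteration the algorithm permanently discards the $V_t$ perceptron-violated constraints and the sanitization loss, keeping $J_1$ as the new active set and replacing $J_2$ by the sanitized equalities appended to $E$; since the rank of $E$ strictly increases, the loop terminates in at most $d$ iterations (forced at worst by the $s=d$ branch). At the terminating iteration one pays either $|J_2|$ (Case 1, bounded by the Line \ref{return-1} threshold) or $|J_1|$ (Case 2, bounded by the Line \ref{return-2} threshold), plus a final sanitization loss. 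Summing $V_t + O(d^{2}/\epsilon \log(d/\delta))$ over at most $d$ iterations yields a total of $O(d^{3.5}/\epsilon \sqrt{\log(dU)} \log^{2}(d/\beta\delta))$ dropped constraints, and a union bound over the $d$ iterations gives overall failure probability at most $d(\beta+\delta)$.

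The main obstacle is Case 2 with $s=d$: the returned point is the unique solution of $Cx = g$, and one must argue that the previously retained $J_1$-constraints from earlier iterations remain satisfied by this new point. The resolution relies on the remark following Theorem \ref{thm:sanitize} that the sanitizer first privately identifies a genuine set of tight constraints of the original LP before emitting a canonical basis, so the unique solution of $Cx = g$ coincides with the equality-subspace projection of any feasible point of the original LP; the $J_1$-constraints, being originally strict and preserved through the variable-elimination reductions of Lemma \ref{lem:margin-bound}, therefore continue to hold. A secondary subtlety is verifying that the Laplace thresholds neither fire prematurely nor too late; the generous slack built into both thresholds makes both directions routine.
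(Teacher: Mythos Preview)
Your overall decomposition---perceptron violations, sanitization losses, Laplace-thresholded cutoffs, summed over at most $d$ outer iterations with a union bound---is exactly the paper's structure, and your arithmetic with $\log(1/\rho)=O(d\log(dU))$ is correct. The gap is that you never invoke Lemma~\ref{lem:tight-constraint} (Lemma~36 of \cite{kaplan2024differentially}), and without it several of your steps are unsupported. When you apply Lemma~\ref{lem:margin-bound} you silently assume its feasibility hypothesis, namely that $\{Ax\le b;\,Cx=g;\,x\ge0\}$ has a solution; this is not automatic, since $Cx=g$ arose from sanitizing $A_{J_2}x=b_{J_2}$ and nothing a priori guarantees these equalities are consistent with the retained inequalities. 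The paper's argument is that $x^{(t)}$ witnesses feasibility of $A_{J_1}x\le b_{J_1}$, $A_{J_2}x=b_{J_2}+\eta_2$, $A'x=b'$, $x\ge0$ (where $A'x=b'$ encodes $E$ and $0<\eta_2\le\eta\one$), and then Lemma~\ref{lem:tight-constraint} transfers this to feasibility with $\eta_2=0$---i.e., the $J_2$ rows are genuinely tight in the original integer LP. Your bare assertion that ``the rank of $E$ strictly increases'' likewise needs this: only after you know the $J_2$ rows are tight, and observe that $x^{(t)}$ satisfies $E$ exactly but $J_2$ with strictly positive slack, do you get linear independence of the new equalities from $E$.

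Your treatment of the $s=d$ case rests on a misreading of the remark after Theorem~\ref{thm:sanitize}. That remark only says the sanitizer outputs a canonical basis for the span of a privately selected subset of its \emph{input} equations; it does not independently certify that those input rows are tight in the original LP---that fact is supplied externally by Lemma~\ref{lem:tight-constraint}. The sentence ``the $J_1$-constraints, being originally strict and preserved through the variable-elimination reductions of Lemma~\ref{lem:margin-bound}, therefore continue to hold'' is not a valid argument: nothing about variable elimination forces a solution of $Cx=g$ to respect the $J_1$ inequalities. The correct resolution again goes through Lemma~\ref{lem:tight-constraint}: once the equality-augmented LP is known to be feasible, the unique solution of the rank-$d$ system \emph{is} that feasible point, and so automatically satisfies the $J_1$ inequalities.
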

\begin{proof}
See Appendix.
\end{proof}

\subsection{Proof of Theorem \ref{thm:general-guarantee}}
\begin{proof}
The proof of Theorem \ref{thm:general-guarantee} follows from Propositions
\ref{lem:general-privacy} and \ref{lem:general-utility} where we
select $\delta\gets\frac{\delta}{d}$, $\epsilon\gets\frac{1}{\sqrt{d}\sqrt{\log\frac{1}{\delta}}}$
and $\beta\gets\frac{\beta}{d}$.
\end{proof}
\bibliographystyle{alpha}
\bibliography{ref}

\newcommand{\etalchar}[1]{$^{#1}$}
\begin{thebibliography}{KMM{\etalchar{+}}24}

\bibitem[BMNS19]{DBLP:conf/colt/BeimelMNS19}
Amos Beimel, Shay Moran, Kobbi Nissim, and Uri Stemmer.
\newblock Private center points and learning of halfspaces.
\newblock In {\em {COLT}}, volume~99 of {\em Proceedings of Machine Learning
  Research}, pages 269--282. {PMLR}, 2019.

\bibitem[BMZ22]{DBLP:conf/nips/Ben-EliezerMZ22}
Omri Ben{-}Eliezer, Dan Mikulincer, and Ilias Zadik.
\newblock Archimedes meets privacy: On privately estimating quantiles in high
  dimensions under minimal assumptions.
\newblock In {\em NeurIPS}, 2022.

\bibitem[BNSV15]{DBLP:conf/focs/BunNSV15}
Mark Bun, Kobbi Nissim, Uri Stemmer, and Salil~P. Vadhan.
\newblock Differentially private release and learning of threshold functions.
\newblock In {\em {FOCS}}, pages 634--649. {IEEE} Computer Society, 2015.

\bibitem[DKM{\etalchar{+}}06]{dwork2006our}
Cynthia Dwork, Krishnaram Kenthapadi, Frank McSherry, Ilya Mironov, and Moni
  Naor.
\newblock Our data, ourselves: Privacy via distributed noise generation.
\newblock In {\em {EUROCRYPT}}, volume 4004 of {\em Lecture Notes in Computer
  Science}, pages 486--503. Springer, 2006.

\bibitem[DMNS16]{dwork2006calibrating}
Cynthia Dwork, Frank McSherry, Kobbi Nissim, and Adam~D. Smith.
\newblock Calibrating noise to sensitivity in private data analysis.
\newblock volume~7, pages 17--51, 2016.

\bibitem[DV08]{dunagan2004polynomial}
John Dunagan and Santosh~S. Vempala.
\newblock A simple polynomial-time rescaling algorithm for solving linear
  programs.
\newblock volume 114, pages 101--114, 2008.

\bibitem[GS21]{DBLP:conf/citc/GaoS21}
Yue Gao and Or~Sheffet.
\newblock Differentially private approximations of a convex hull in low
  dimensions.
\newblock In {\em {ITC}}, volume 199 of {\em LIPIcs}, pages 18:1--18:16.
  Schloss Dagstuhl - Leibniz-Zentrum f{\"{u}}r Informatik, 2021.

\bibitem[HR17]{hoberg2017improved}
Rebecca Hoberg and Thomas Rothvoss.
\newblock An improved deterministic rescaling for linear programming
  algorithms.
\newblock In {\em {IPCO}}, volume 10328 of {\em Lecture Notes in Computer
  Science}, pages 267--278. Springer, 2017.

\bibitem[HRRU14]{hsu2014privately}
Justin Hsu, Aaron Roth, Tim Roughgarden, and Jonathan~R. Ullman.
\newblock Privately solving linear programs.
\newblock In {\em {ICALP} {(1)}}, volume 8572 of {\em Lecture Notes in Computer
  Science}, pages 612--624. Springer, 2014.

\bibitem[KMM{\etalchar{+}}24]{kaplan2024differentially}
Haim Kaplan, Yishay Mansour, Shay Moran, Uri Stemmer, and Nitzan Tur.
\newblock On differentially private linear algebra.
\newblock {\em CoRR}, abs/2411.03087, 2024.

\bibitem[KMST20]{kaplan2020private}
Haim Kaplan, Yishay Mansour, Uri Stemmer, and Eliad Tsfadia.
\newblock Private learning of halfspaces: Simplifying the construction and
  reducing the sample complexity.
\newblock 2020.

\bibitem[MSVV21]{munoz2021private}
Andr{\'{e}}s {Mu{\~{n}}oz Medina}, Umar Syed, Sergei Vassilvitskii, and Ellen
  Vitercik.
\newblock Private optimization without constraint violations.
\newblock In {\em {AISTATS}}, volume 130 of {\em Proceedings of Machine
  Learning Research}, pages 2557--2565. {PMLR}, 2021.

\bibitem[NSV16]{nissim2016locating}
Kobbi Nissim, Uri Stemmer, and Salil~P. Vadhan.
\newblock Locating a small cluster privately.
\newblock In {\em {PODS}}, pages 413--427. {ACM}, 2016.

\bibitem[PS16]{soheili2013deterministic}
Javier Pe{\~{n}}a and Negar Soheili.
\newblock A deterministic rescaled perceptron algorithm.
\newblock {\em Math. Program.}, 155(1-2):497--510, 2016.

\end{thebibliography}

\appendix

\section{Missing Proofs from Section \ref{sec:Private-Perceptron-Algorithm}}

\subsection{From Section \ref{subsec:Privacy-analysis}}

As a reminder, we let $(A,b)$ and $(A',b')$ be the neighboring LPs.
 The corresponding computed terms in Algorithm \ref{alg:private-perceptron-epoch}
for $V'$ are denoted with the extra prime symbol (for example, $S^{(t)}$
and $S^{(t)}{}'$).

\begin{proof}[Proof of Claim \ref{claim:perceptron-sensitivity}]
Fix two neighboring LPs $V$ and $V'$. Suppose $V'$ has one extra
constraint. $S^{(t)}{}'$ can have at most one more constraint so
it is immediate that $m^{(t)}$ has $\ell_{2}$ sensitivity $1$.
We consider the case $S^{(t)}{}'$ has one extra constraint $\overline{a}$,
\begin{align*}
u^{(t)}-u^{(t)}{}' & =\left(\frac{1}{m^{(t)}}\sum_{i\in S^{(t)}}\overline{a}_{i}\right)-\left(\frac{\overline{a}}{m^{(t)}+1}+\frac{1}{m^{(t)}+1}\sum_{i\in S^{(t)}}\overline{a}_{i}\right)=\frac{\sum_{i\in S^{(t)}}\overline{a}_{i}}{m^{(t)}(m^{(t)}+1)}-\frac{\overline{a}}{m^{(t)}+1}.
\end{align*}
Then by triangle inequality
\begin{align*}
\left\Vert u^{(t)}-u^{(t)}{}'\right\Vert  & \le\frac{\sum_{i\in S^{(t)}}\left\Vert \overline{a}_{i}\right\Vert }{m^{(t)}(m^{(t)}+1)}+\frac{\left\Vert \overline{a}\right\Vert }{m^{(t)}+1}=\frac{2}{m^{(t)}+1}\le\frac{2}{m^{(t)}}.
\end{align*}
When $V'$ has one less constraint, we proceed similarly.
\begin{align*}
u^{(t)}-u^{(t)}{}' & =\left(\frac{\overline{a}}{m^{(t)}}+\frac{1}{m^{(t)}}\sum_{i\in S^{(t)}{}'}\overline{a}_{i}\right)-\left(\frac{1}{m^{(t)}-1}\sum_{i\in S}\overline{a}_{i}\right)=\frac{\sum_{i\in S^{(t)}{}'}\overline{a}_{i}}{m^{(t)}(m^{(t)}-1)}+\frac{\overline{a}}{m^{(t)}}.
\end{align*}
Then by triangle inequality
\begin{align*}
\left\Vert u^{(t)}-u^{(t)}{}'\right\Vert  & \le\frac{\sum_{i\in S^{(t)}{}'}\left\Vert \overline{a}_{i}\right\Vert }{m^{(t)}(m^{(t)}-1)}+\frac{\left\Vert \overline{a}\right\Vert }{m^{(t)}}=\frac{2}{m^{(t)}}.
\end{align*}
\end{proof}

\begin{proof}[Claim \ref{claim:rescaling-sensitivity}]
For simplicity, we drop the superscript $s$ for iteration $s$.
Consider the case where $V'$ and $P'$ have one extra constraint
$a_{k}$. For all $t\in[T]$, for $i\neq k$ we have
\begin{align*}
0\le\lambda_{i}^{(t)}-\lambda_{i}'{}^{(t)} & \le\frac{1}{m^{(t)}}-\frac{1}{m^{(t)}+1}=\frac{1}{m^{(t)}\left(m^{(t)}+1\right)}\le\frac{\lambda_{i}'{}^{(t)}}{m^{(t)}}\le\frac{\lambda_{i}^{(t)}}{m}.
\end{align*}
Hence $\left|\overline{\lambda}_{i}-\overline{\lambda}_{i}'\right|\le\frac{\overline{\lambda}_{i}}{m}$.
Besides, we also have $\overline{\lambda}'_{k}\le\frac{1}{m}$. For
coordinate $j$,
\begin{align*}
c_{j}-c'_{j} & =\sum_{i\in P}\overline{\lambda}_{i}\overline{a}_{i,j}-\sum_{i\in P}\overline{\lambda}'_{i}\overline{a}_{i,j}-\overline{\lambda}'_{k}\overline{a}_{k,j}=\sum_{i\in P}\left(\overline{\lambda}_{i}-\overline{\lambda}'_{i}\right)\overline{a}_{i,j}-\overline{\lambda}'_{k}\overline{a}_{k,j}.
\end{align*}
Then, to compute the $\ell_{2}$ sensitivity of $c$, we take $\sum_{j}\left(c_{j}-c'_{j}\right)^{2}$.
\begin{align*}
\sum_{j}\left(c_{j}-c'_{j}\right)^{2} & =\sum_{j}\left(\sum_{i\in P}\underbrace{\left(\overline{\lambda}_{i}-\overline{\lambda}'_{i}\right)}_{\ge0}\overline{a}_{i,j}-\overline{\lambda}'_{k}\overline{a}_{k,j}\right)^{2}\\
 & \le\sum_{j}\left(\sum_{i\in P}\left(\overline{\lambda}_{i}-\overline{\lambda}'_{i}\right)\left|\overline{a}_{i,j}\right|+\overline{\lambda}'_{k}\left|\overline{a}_{k,j}\right|\right)^{2}\\
 & \le\sum_{j}\left(\frac{1}{m}\sum_{i\in P}\overline{\lambda}_{i}\left|\overline{a}_{i,j}\right|+\frac{1}{m}\left|\overline{a}_{k,j}\right|\right)^{2}\\
 & \le\frac{2}{m^{2}}\sum_{j}\underbrace{\left(\sum_{i=1}^{n}\overline{\lambda}_{i}\left|\overline{a}_{i,j}\right|\right)^{2}}_{\text{Jensen inequality: }\sum_{i}\overline{\lambda}_{i}=1}+\frac{2}{m^{2}}\underbrace{\sum_{j}\left|\overline{a}_{k,j}\right|^{2}}_{=1}\\
 & \le\frac{2}{m^{2}}\sum_{j}\sum_{i}\overline{\lambda}_{i}\left|\overline{a}_{i,j}\right|^{2}+\frac{2}{m^{2}}=\frac{2}{m^{2}}\sum_{i}\overline{\lambda}_{i}\underbrace{\sum_{j}\left|\overline{a}_{i,j}\right|^{2}}_{=1}+\frac{2}{m^{2}}=\frac{4}{m^{2}}.
\end{align*}
Similarly, consider the case where $V$ and $P$ have one extra constraint
$a_{k}$ compared with the neighbor $V'$. For all $t\in[T_{1}]$,
for $i\neq k$ we have
\begin{align*}
0\le\lambda_{i}'{}^{(t)}-\lambda_{i}^{(t)} & \le\frac{1}{m^{(t)}-1}-\frac{1}{m^{(t)}}=\frac{1}{m^{(t)}\left(m^{(t)}-1\right)}\le\frac{\lambda_{i}'{}^{(t)}}{m^{(t)}}.
\end{align*}
Hence $\left|\overline{\lambda}_{i}-\overline{\lambda}_{i}'\right|\le\frac{\overline{\lambda}'_{i}}{m}$.
Besides, we also have $\overline{\lambda}_{k}\le\frac{1}{m}$. For
coordinate $j$ 
\begin{align*}
c_{j}-c'_{j} & =\sum_{i\in P'}\overline{\lambda}_{i}\overline{a}_{i,j}-\sum_{i\in P'}\overline{\lambda}'_{i}\overline{a}_{i,j}+\overline{\lambda}_{k}\overline{a}_{k,j}=\sum_{i\in P'}\left(\overline{\lambda}_{i}-\overline{\lambda}'_{i}\right)\overline{a}_{i,j}+\overline{\lambda}_{k}\overline{a}_{k,j}.
\end{align*}
Then
\begin{align*}
\sum_{j}\left(c_{j}-c'_{j}\right)^{2} & =\sum_{j}\left(\sum_{i\in P'}\underbrace{\left(\overline{\lambda}'_{i}-\overline{\lambda}{}_{i}\right)}_{\ge0}\overline{a}_{i,j}+\overline{\lambda}_{k}\overline{a}_{k,j}\right)^{2}\\
 & \le\sum_{j}\left(\sum_{i\in P'}\left(\overline{\lambda}'_{i}-\overline{\lambda}{}_{i}\right)\left|\overline{a}_{i,j}\right|+\overline{\lambda}_{k}\left|\overline{a}_{k,j}\right|\right)^{2}\\
 & \le\sum_{j}\left(\frac{1}{m}\sum_{i\in P'}\overline{\lambda}'_{i}\left|\overline{a}_{i,j}\right|+\frac{1}{m}\left|\overline{a}_{k,j}\right|\right)^{2}\\
 & \le\frac{2}{m^{2}}\sum_{j}\underbrace{\left(\sum_{i=1}^{n}\overline{\lambda}'_{i}\left|\overline{a}_{i,j}\right|\right)^{2}}_{\text{Jensen inequality: }\sum_{i}\overline{\lambda}'_{i}=1}+\frac{2}{m^{2}}\underbrace{\sum_{j}\left|\overline{a}_{k,j}\right|^{2}}_{=1}\\
 & \le\frac{2}{m^{2}}\sum_{j}\sum_{i}\overline{\lambda}'_{i}\left|\overline{a}_{i,j}\right|^{2}+\frac{2}{m^{2}}=\frac{2}{m^{2}}\sum_{i}\overline{\lambda}'_{i}\underbrace{\sum_{j}\left|\overline{a}_{i,j}\right|^{2}}_{=1}+\frac{2}{m^{2}}=\frac{4}{m^{2}}.
\end{align*}
\end{proof}

\section{Missing Proofs from Section \ref{sec:Zero-Margin}}

\begin{proof}[Proof of Lemma \ref{lem:margin-bound}]
Recall that the equality constraint $Cx=g$ is obtained from sanitizing
a set of equality constraints with the entries bounded by $U$ in
the absolute value. In particular, there is a set of tight constraints
$A'x=b'$ in the original LP with entries bounded by $U$ in the absolute
value that spans the same subspace as $Cx=g$---that is, there is
an invertible matrix $M\in\R^{k\times k}$ such that $[C\mid g]=M[A'\mid b']$.

Let $C_{K}$ be $k$ independent columns of $C$, where we let $K$
be the set of indices of the columns we selected; the set $K$ corresponds
to the set of variables that we will eliminate.  Eliminating the
variables in $K$ in $Ax\le b$ using $Cx=g$ means switching to the
new linear constraints $\left(A-A_{K}C_{K}^{-1}C\right)x\le b-A_{K}C_{K}^{-1}g$.
Eliminating the same variables using $A'x=b'$ would get the result
$\left(A-A_{K}A'{}_{K}^{-1}A'\right)x\le b-A_{K}A'{}_{K}^{-1}b'$.
Notice that $C_{K}^{-1}C=\left(MA'_{K}\right)^{-1}MA'=A'{}_{K}^{-1}A'$
and $C_{K}^{-1}g=\left(MA'_{K}\right)^{-1}g=A'{}_{K}^{-1}b'$ so the
two results are identical. 

Therefore, we can think of using $A'x=b'$ for variable elimination
instead of using $Cx=g$. Note that the entries of $A,b,A',b'$ are
bounded by $U$ and the entries of $\left(A'_{K}\right)^{-1}$ are
fractions with denominator $\det\left(A'_{K}\right)$, so to maintain
the integrality during the elimination, we can multiply each row of
$A$ with $\det(A'_{K})$. Therefore, the resulting LP has entries
bounded by $k^{2}(k-1)!U^{k+1}$.

We now show the second claim. First, let us show that the LP $\{Ax\le b;Cx=g,x\ge0\}=\{Ax\le b;A'x=b',x\ge0\}$
has a solution $x$ such that $x_{i}\le((d+1)U)^{d+1}$ for all $i$.
Let $x$ be any vertex solution. By Cramer's rule, $x_{i}$ is the
ratio of two determinants with integer entries $\le U$. We can bound
the numerator of this ratio by $((d+1)U)^{d+1}$ and lower bound the
denominator by $1$.

Next, consider the LP with added slack $\left\{ Ax\leq b+\eta\one,A'x=b',x\ge0\right\} $.
After performing the variable elimination as described above, we obtain
an LP $\left\{ \tilde{A}x\leq\tilde{b}+\eta\one,x\ge0\right\} $.
Finally, we bound the margin of the vertex solution $x$ (restricted
to the variables that were not eliminated) for the homogenized version
of the latter LP . Since $\tilde{A},\tilde{b}$ have entries bounded
by $k^{2}(k-1)!U^{k+1}$ and $x$ has entries bounded by $((d+1)U)^{d+1}$,
the margin of the homogenized LP is at least 
\begin{align*}
\frac{\eta}{\left\Vert (x\mid1)\right\Vert \cdot\max_{i}\left\Vert [-\tilde{A}_{i}\mid\tilde{b}_{i}]\right\Vert }\ge\frac{\eta}{(d+1)\left(k^{2}(k-1)!U^{k+1}\right)\cdot((d+1)U)^{d+1}} & \ge\frac{\eta}{((d+1)U)^{2(d+1)}}.
\end{align*}
\end{proof}

To show Lemma \ref{lem:general-utility}, we restate the following
lemma from \cite{kaplan2024differentially} shows that tightness in
the relaxed system implies tightness in the original system.
\begin{lem}[Lemma 36 \cite{kaplan2024differentially}]
\label{lem:tight-constraint}For matrices $A_{1},A_{2}$ with dimension
$m_{1}\times d$, $m_{2}\times d$ and $b_{1},b_{2}$ being vectors
of length $m_{1}$, $m_{2}$. The entries are integers with upper
bound $U$ in the absolute value. For $\eta_{2}\ge0$ being a vector
of length $m_{2}$ such that the entries of $\eta_{2}$ are bounded
by $\frac{1}{2(d+1)((d+1)U)^{d+1}}$. Then if the system 
\begin{align*}
A_{1}x & \le b_{1}\\
A_{2}x & =b_{2}+\eta_{2}\\
x & \ge0
\end{align*}
is feasible then the system 
\begin{align*}
A_{1}x & \le b_{1}\\
A_{2}x & =b_{2}\\
x & \ge0
\end{align*}
is feasible.
\end{lem}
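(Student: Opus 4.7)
The plan is to prove the contrapositive: assume the unperturbed system $\{A_1 x \le b_1,\ A_2 x = b_2,\ x \ge 0\}$ is infeasible, and show that the perturbed system must also be infeasible whenever $\eta_2 \ge 0$ satisfies $\|\eta_2\|_\infty \le \frac{1}{2(d+1)((d+1)U)^{d+1}}$. Since $\eta_2 \geq 0$, feasibility of the perturbed system forces $b_2 \le A_2 x \le b_2 + \|\eta_2\|_\infty \one$ componentwise, which motivates considering the auxiliary LP
\begin{align*}
\min\ t \quad \text{s.t.} \quad A_1 x \le b_1,\ b_2 \le A_2 x \le b_2 + t\one,\ x \ge 0,
\end{align*}
with variables $(x, t) \in \R^{d+1}$. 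Its optimal value $t^\ast$ equals $0$ if and only if the unperturbed system is feasible (at $t=0$ we recover $A_2 x = b_2$), and feasibility of the perturbed system implies $t^\ast \le \|\eta_2\|_\infty$. So it suffices to bound $t^\ast$ from below when it is positive.

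The key step is to show that if $t^\ast > 0$ then $t^\ast \ge \frac{1}{((d+1)U)^{d+1}}$, which strictly exceeds $\frac{1}{2(d+1)((d+1)U)^{d+1}}$ and produces the required contradiction. If the LP is infeasible, $t^\ast = +\infty$ and the claim is trivial; otherwise its optimum is attained at a vertex of the feasible polyhedron in $\R^{d+1}$. At such a vertex, $d+1$ linearly independent constraints are tight, so $(x^\ast, t^\ast)$ solves a square system $B \binom{x^\ast}{t^\ast} = c$, where $B \in \R^{(d+1)\times(d+1)}$ has integer entries bounded by $U$ (from rows of $A_1$, of $\pm A_2$, of the identity for $x_i \ge 0$, together with $0$ or $-1$ in the $t$-column) and $c$ has integer entries bounded by $U$ (from $b_1$, $\pm b_2$, and zeros); here we assume $U \ge 1$. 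Cramer's rule gives $t^\ast = \det(B')/\det(B)$, where $B'$ is obtained from $B$ by replacing the $t$-column by $c$. Hadamard's inequality yields $|\det B|, |\det B'| \le ((d+1)U)^{d+1}$, and both determinants are integers; since $t^\ast > 0$ we get $|\det B'| \ge 1$, whence $t^\ast \ge 1/((d+1)U)^{d+1}$ as claimed.

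The main obstacle is the precise correspondence between feasibility of the perturbed system and the value of the auxiliary LP, together with careful bookkeeping in the Cramer's rule bound (in particular, verifying that every row of $B$ has integer entries bounded by $U$ despite the mixed source of constraints). Once these are in place, the conclusion follows by directly comparing the lower bound on $t^\ast$ to the assumed upper bound on $\|\eta_2\|_\infty$; the factor $2(d+1)$ built into the threshold is exactly the slack needed to ensure $t^\ast$ strictly exceeds $\|\eta_2\|_\infty$, completing the contrapositive.
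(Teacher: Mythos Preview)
The paper does not prove this lemma; it is quoted verbatim from \cite{kaplan2024differentially} (as Lemma~36 there) and used as a black box. So there is no in-paper argument to compare against.

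That said, your argument is correct and is essentially the standard way such perturbation lemmas are established. A couple of minor remarks for completeness. First, to invoke a vertex optimum you implicitly need the feasible polyhedron of the auxiliary LP to be pointed and the LP to be bounded below; both hold because $x\ge 0$ rules out lines in the $x$-directions and, for $m_2\ge 1$, the pair of constraints $b_2\le A_2 x\le b_2+t\one$ forces $t\ge 0$, so the polyhedron contains no line and the objective is bounded below by $0$. Second, your Hadamard bound is slightly loose (rows of $B$ have Euclidean norm at most $\sqrt{d+1}\,U$, giving $|\det B|\le (d+1)^{(d+1)/2}U^{d+1}$), but this is of course dominated by $((d+1)U)^{d+1}$, and the extra factor $2(d+1)$ in the hypothesis on $\|\eta_2\|_\infty$ more than absorbs the slack. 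With these points noted, the contrapositive goes through exactly as you outline.
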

\begin{proof}[Proof of Lemma \ref{lem:general-utility}]
In each iteration, the algorithm solves the LP $\tilde{A}x\le\tilde{b}+\eta\one$
with a set of equality constraints $Cx=g$, where $\tilde{A}x\le\tilde{b}$
is a subset of the constraints of the original LP $Ax\le b$. Note
that, by construction, $Cx=g$ is equivalent to a set of tight constraints
$A'x=b'$ in the original LP. Solving this LP gives us a solution
that satisfies 
\begin{align*}
A_{1}x & \le b_{1}\\
A_{2}x & =b_{2}+\eta_{2}\\
A'x & =b'
\end{align*}
for a vector $\eta_{2}\le\eta\one$, where $A_{1}x\le b_{1}$ and
$A_{2}x\le b_{2}$ are constraints in the original LP. Since all
entries of the above LP are bounded by $U$ and $\eta=\frac{1}{2(d+1)((d+1)U)^{d+1}}$,
Lemma \ref{lem:tight-constraint} implies that the LP 
\begin{align*}
A_{1}x & \le b_{1}\\
A_{2}x & \le b_{2}\\
A'x & =b'\qquad\text{equivalently, }Cx=g
\end{align*}
is feasible. The $\PP$ algorithm succeeds with probability $1-\beta$
and the sanitization step succeeds with probability $1-\delta$. Therefore,
after each iteration, with probability at least $1-\left(\beta+\delta\right)$
the set of tight (equality) constraints is increased by at least $1$.
Hence, with probability at least $1-d(\beta+\delta)$ the algorithm
must terminate after at most $d$ iterations.

By Lemma \ref{lem:margin-bound}, in the homogenized LP after variable
elimination, the margin is at least $\frac{\eta}{((d+1)U)^{2(d+1)}}\ge\eta^{3}=\rho$.
Therefore, with probability $\ge1-\beta$, the number of dropped constraints
when using $\PP$ in Line \ref{drop-perceptron} is at most 
\begin{align*}
O\left(\frac{d^{2}}{\epsilon}\log^{2}\frac{d}{\beta\delta}\sqrt{\log\frac{1}{\rho}}\right) & =O\left(\frac{d^{2.5}}{\epsilon}\log^{2}\frac{d}{\beta\delta}\sqrt{\log(dU)}\right).
\end{align*}
If in iteration $t$, the algorithm returns at Line \ref{return-1},
the number of dropped constraints in $J_{2}$ is at most $\frac{d^{2}}{\epsilon}$
with probability at least
\begin{align*}
1-\Pr\left[\Lap\left(\frac{1}{\epsilon}\right)<-\log\frac{1}{\delta}\right] & \ge1-\delta.
\end{align*}
If the algorithm returns at Line \ref{return-2}, again, the number
of dropped constraints is at most $O\left(\frac{d^{2}}{\epsilon}\log^{2}\frac{d}{\beta\delta}\sqrt{\log\frac{1}{\rho}}\right)$
with probability $\ge1-\delta$. By Theorem \ref{thm:sanitize}, the
number of dropped constraints by sanitization in each iteration is
at most $O\left(\frac{d^{2}}{\epsilon}\log\frac{d}{\delta}\right)$
with probability $\ge1-\delta$. 

Combining these, over all iterations, with probability at least $1-d(\beta+\delta)$,
the number of dropped constraints is at most $O\left(\frac{d^{3.5}}{\epsilon}\log^{2}\frac{d}{\beta\delta}\sqrt{\log(dU)}\right).$
\end{proof}

\end{document}